\title{An Algebraic Characterisation of Concurrent Composition}
\author{Samson Abramsky}
\begin{document}

\ifpdf
\DeclareGraphicsExtensions{.pdf, .jpg, .tif}
\else
\DeclareGraphicsExtensions{.eps, .jpg}
\fi

\newtheorem{theorem}{Theorem}[section]
\newtheorem{lemma}[theorem]{Lemma}
\newtheorem{definition}[theorem]{Definition}
\newtheorem{corollary}[theorem]{Corollary}
\newtheorem{remark}[theorem]{Remark}
\newtheorem{fact}[theorem]{Fact}

\newcommand{\Vops}{\textsf{Vops}}
\newcommand{\VFS}{\textsf{VFS}}
\newcommand{\PCL}{\ensuremath{\mathsf{PCL}}}
\newcommand{\CPCL}{\ensuremath{\mathbf{PCL}(\Sigma)}}
\newcommand{\PCVL}{\ensuremath{\mathsf{PCVL}}}
\newcommand{\CPCVL}{\ensuremath{\mathbf{PCVL}(\Sigma)}}
\newcommand{\LL}{\textsf{L}}
\newcommand{\And}{\ensuremath{\; \wedge \;}}
\newcommand{\Imp}{\; \Rightarrow \;}
\newcommand{\Def}{\mathsf{def}\;}
\newcommand{\IFF}{\;\; \Longleftrightarrow \;\;}
\newcommand{\Ac}{\mathsf{Ac}}
\newcommand{\dom}{\mathsf{dom} \;}
\newcommand{\PAlg}{\mathsf{PAlg}}
\newcommand{\CPAlg}{\mathbf{PAlg}}
\newcommand{\CSDPA}{\mathbf{SD{-}PAlg}}
\newcommand{\SDPA}{\mathsf{SD{-}PAlg}}
\newcommand{\UU}{\mathbf{U}}
\newcommand{\BB}{\mathbf{F}}
\newcommand{\GG}{\mathbf{G}}
\newcommand{\Bi}{\mathbf{F}'}
\newcommand{\empset}{\varnothing}
\newcommand{\pfn}{\rightharpoonup}
\newcommand{\id}{\mathsf{id}}
\newcommand{\Id}{\mathsf{Id}}
\newcommand{\proj}{\pi}
\newarrow{dash}....>
\newcommand{\myvec}{\underline}

\def\uL{\myvec{L}}
\def\us{\myvec{s}}
\def\ut{\myvec{t}}
\def\ua{\myvec{a}}
\def\ub{\myvec{b}}
\def\aL{\alpha{L}}

\maketitle

\section*{Prelude 2009}
I was invited by Olivier Danvy and Carolyn Talcott to contribute to this special issue of \textsl{Higher Order and Symbolic Computation} in honour of Peter Landin. They were keen to include some technical contributions, and I recalled my very first piece of theoretical work, which had only appeared as a 1981 Queen Mary College technical report, and which seemed as if it might be appropriate for several reasons:
\begin{itemize}
\item Firstly, it was written when I was a colleague of Peter's at Queen Mary.

\item Moreover, it was strongly influenced by his work, in particular by his paper `A Program-Machine Symmetric Automata Theory' \cite{L70}. This is essentially the last technical paper he published. It is very little known, and yet it seems to me to contain some striking and prescient ideas.
\end{itemize}
For these reasons, rather than any great merits of my fledgling effort, I tentatively suggested to Olivier and Carolyn that this old paper might make a suitable submission to the Special Issue. They were positive about the idea --- which left me with the task of retrieving this old document. Somewhat to my surprise, I found a copy of the report among my old papers --- probably the last surviving one. The present paper is that old report, essentially unchanged apart from a couple of comments, with this prelude added to provide some context.

What was the `Program Machine Symmetric Automata Theory' about?
The title itself is characteristically striking, as is the image with which the paper opens, of a ball sliding around a plane. Both the ball and the plane have graphs with coloured edges inscribed on them, and the motion is constrained to follow paths along both graphs, and also so that vertices and edge-colours are synchronized. One can think of the ball as a program, and the plane as the machine on which it runs. However, note that the situation is really quite symmetric; each constrains the other.

The paper goes on to generalize this situation to polygraphs with multi-edges, and to formalize it in the language of universal algebra. Both the ball and the plane --- the program and the machine --- are represented as algebras, and their interaction is captured as the reachable subalgebra generated in the product of the two algebras.

Let us recall that this paper appeared in 1969! While Petri had already published an early form of his Net theory, process calculi as we now know them did not appear until the late 1970's. When they did appear, as Robin Milner's CCS and Tony Hoare's CSP, it became possible to see that Landin's insight, and his formal construction, related directly to the issue of modelling the \emph{interaction of concurrent processes}. In fact, it can be seen as an algebraic characterization of process composition in a form which allows `true concurrency' to be captured. This is the main point made in my 1981 report, which follows. Thus Landin's insights were indeed well ahead of their time.

His paper is still well worth reading for its style, and the glimpses it affords of the cultivated imagination and deep insights of its author. There are also arresting asides, such as the footnote:
\begin{quotation}
For some years I have aspired to `language-free programming'. \ldots
\end{quotation}

I hope that some readers will be led back to this paper of Landin's. It deserves its place alongside the classics such as `A correspondence between Algol-60 and Church's $\lambda$-notation' and `The next 700 programming languages'.

\paragraph{Acknowledgements}
My thanks to Olivier Danvy and Carolyn Talcott for their support and encouragement.
Special thanks to  Tim McCarthy for his invaluable help in converting a scanned copy of an ancient technical report into Latex. 

\section{Introduction}


Much recent work in theoretical computer science has been directed to developing the semantics of concurrent systems and computations. We are concerned with two such efforts in particular: Shields' vector firing sequence semantics for path programs arising out of the work of Lauer, Shields and others at Newcastle on the COSY formalism \cite{SL79}; and Hoare's ``simple model for CSP'' \cite{H80}. THere is a close connection between the two approaches; indeed Hoare attributes his parallel composition operator to Campbell and Lauer (cf.~Acknowledgements in \cite{H80}), and it is equivalent to the earlier Firing Sequence semantics for COSY. Both approaches give a semantics for concurrent systems in terms of their overt behaviour, and express behaviour extensionally, in terms of sequences of event occurrences, the sequencing corresponding to order in time. The ``events'' in Hoare's model correspond to synchronised communication by processes, those in COSY to performances of operations; and the two approaches, one describing systems in terms of the agents from which they are synthesised, the other in terms of the constraints imposed on them, seem in some sense duals of each other. However, in this paper we consider the semantic constructions directly, and ignore the formalisms they may be used to interpret. The constructions are language-theoretic --- the ``sequences of event occurrences'' are represented simply by strings over an alphabet of event types --- and do not correspond to traditional language-theoretic constructions; in two extreme cases of its application, the Campbell-Lauer-Hoare combinator reduces to intersection, and the shuffle product, respectively, but in general it is a rather subtle combination of the two; while the vector firing sequence construction generates ``languages'' over a non-free monoid of \emph{vectors} of strings, with elementwise concatenation. Of the two constructions, vector firing sequences are undoubtedly a neater and more compact representation; they are a quotient of the other construction (in a sense to be made precise) and in effect provide a nice canonical representation of the quotient. However, they suffer the defect of not being homogeneous in their application, leading to bothersome questions of the dimensions of the vectors involved, etc.; and so do not lend themselves to a smooth algebraic treatment of the synthesis of complex systems out of simpler components.

Our treatment is aimed at providing an algebraic framework in which the advantages of compact representation, and homogeneity of composition, can be combined. Furthermore, we hope to increase the understanding of new constructions by characterising them in terms of more familiar ones. Finally, our overall aim is to apply algebraic methods of specification and verification to the synthesis of concurrent systems; the algebraic treatment of concurrent composition plays an important part in this, since it leads to the possibility of some rather general program transformation strategies, enabling us to bridge the gap between purely declarative axiomatic descriptions of systems behavior, and ``implementation-biased'' descriptions in terms of concurrent processes, without leaving the algebraic framework. We shall have a little more to say on this in the conclusions.


The particular format of our (universal-) algebraic treatment derives from the basic construction of Landin's ``Program-Machine Symmetric Automata Theory'' \cite{L70}, namely algebraic closure in a direct product of partial algebras. We believe that the ideas in the ``Program-Machine Symmetric Automata Theory'' have a potential which has not yet been exploited. Perhaps the present paper will provide some evidence to support this view.

The scheme of the rest of this paper is as follows: section 2 establishes some notation and definitions; section 3 contains the characterisation results; section 4 extends the algebraic constructions in a categorical framework; while the final section summarises what has been done, and suggests some possible extensions and applications.

\section{Preliminaries}

Throughout what follows, $\Sigma$ will be a fixed (not necessarily finite) alphabet. As usual, $\Sigma^*$ denotes the set of all finite strings over $\Sigma$, including the empty string, $\epsilon$. $\Sigma^+ = \Sigma^* - \{\epsilon\}$. We use $s,t,u, \ldots$ to range over $\Sigma^*$ and $\sigma, \sigma_1, \ldots$ to range over $\Sigma$. Thus $st$ means the concatenation of strings $s$ and $t$, while $s\sigma$ means the string $s$ concatenated with (the unit string of) the symbol $\sigma$. $\LL(\Sigma)$, the languages over $\Sigma$, is the set of all subsets of $\Sigma^*$.

\begin{definition} We say that $s$ is a \emph{prefix} of $t$ if, for some $u$, $t = su$. A language $L \in \LL(\Sigma)$ is \emph{prefix-closed} if, for every $s \in L$, and $t$ which is a prefix of $s$, $t \in L$. We are interested in prefix-closed languages as representations of the \emph{behaviours} of systems. The symbols of $\Sigma$ are conceived as names of the various types of \emph{event}, \emph{synchronisation} or \emph{operation} that the system can engage in or perform. A \emph{string} or \emph{trace} or \emph{firing sequence} over $\Sigma$ is a record of occurrences of such events, ordered in time. The (potential) behavior of a system is then characterised by the set of possible traces for the system. Clearly, any prefix of a possible trace is also possible.
\end{definition}

\begin{definition} $\PCL(\Sigma)$, the \emph{prefix closed languages over $\Sigma$}, is the set of all the pairs of $(L, \aL)$ such that

\begin{enumerate}[(i)]
    \item $\aL \subseteq \Sigma$  
    \item $L \subseteq \aL^*$, and $L \neq \empset$
    \item $L$ is prefix closed.
\end{enumerate}
\end{definition}


The component $\aL$ is the (sub-) \emph{alphabet} of the language. $\Sigma$ is the ``global'' alphabet, of all events possible throughout the system. A subsystem may only see or participate in a subset of these events; $\aL$ is meant to distinguish explicitly between those events for which it may exercise some constraint, and those of which it simply does not speak at all.

We shall often employ a mild abuse of notation, speaking of $L \in \PCL(\Sigma)$, and then using $\aL$ when required.

\begin{definition} The \emph{projection} of $\sigma$ onto $\aL \subseteq \Sigma$ is given by
\[ \sigma/\aL = \left\{ \begin{array}{lr}
\sigma & \mbox{if $\sigma \in \aL$} \\
\epsilon & \mbox{otherwise}
\end{array}
\right.
\]


\end{definition}

\noindent Projection extends immediately to strings: $(st)/\aL = (s/\aL)(t/\aL)$.

\begin{fact} $(s/\alpha{L_1})/\alpha{L_2} = (s/\alpha{L_2})/\alpha{L_1} = s/(\alpha{L_1}\cap\alpha{L_2})$
\end{fact}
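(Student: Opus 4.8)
The plan is to prove the identity by induction on the length (equivalently, the structure) of the string $s$, reducing everything to the one-symbol case, where the definition of $\sigma/\alpha{L}$ can simply be unfolded. Before starting the induction I would record the small auxiliary fact that $\epsilon/\alpha{L} = \epsilon$ for every $\alpha{L} \subseteq \Sigma$ (it is used implicitly but not stated in the text): putting $s = t = \epsilon$ in $(st)/\alpha{L} = (s/\alpha{L})(t/\alpha{L})$ shows that $\epsilon/\alpha{L}$ is idempotent for concatenation in $\Sigma^*$, and $\epsilon$ is the only such string. This settles the base case $s = \epsilon$, since all three expressions then evaluate to $\epsilon$.

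For the inductive step write $s = t\sigma$, so $s/\alpha{L} = (t/\alpha{L})(\sigma/\alpha{L})$ by multiplicativity of projection, and split according to whether $\sigma$ lies in $\alpha{L_1} \cap \alpha{L_2}$ or not. If $\sigma \in \alpha{L_1} \cap \alpha{L_2}$, then in each of the three outer projections the trailing $\sigma$ survives untouched, so each expression equals (its value on $t$)$\cdot\sigma$, and the claim follows by applying the induction hypothesis to $t$. If instead $\sigma$ misses $\alpha{L_1}$ or misses $\alpha{L_2}$, then it is deleted by whichever of the two successive projections it meets first in each composite, and likewise deleted by the single projection onto $\alpha{L_1}\cap\alpha{L_2}$; so each expression equals its value on $t$, and again one appeals to the induction hypothesis. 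The commutativity $(s/\alpha{L_1})/\alpha{L_2} = (s/\alpha{L_2})/\alpha{L_1}$ comes out for free, since the common value was shown to depend on $\alpha{L_1}, \alpha{L_2}$ only through $\alpha{L_1}\cap\alpha{L_2} = \alpha{L_2}\cap\alpha{L_1}$.

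A cleaner packaging of the same argument, which I would mention in passing, is this: each of the three maps $s \mapsto (s/\alpha{L_1})/\alpha{L_2}$, $s \mapsto (s/\alpha{L_2})/\alpha{L_1}$, $s \mapsto s/(\alpha{L_1}\cap\alpha{L_2})$ is a monoid homomorphism $\Sigma^* \to \Sigma^*$ — projection preserves concatenation by definition and the empty string by the remark above, and composites of homomorphisms are homomorphisms — and since $\Sigma^*$ is the free monoid on $\Sigma$, it suffices to check that the three maps agree on each generator $\sigma \in \Sigma$, which is exactly the one-line membership case analysis. I do not expect a genuine obstacle here; the only points deserving a moment's attention are supplying the unstated lemma $\epsilon/\alpha{L} = \epsilon$ and keeping the membership cases organised, and the rest is pure bookkeeping.
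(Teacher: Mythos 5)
Your proof is correct. The paper states this as an unproved Fact, treating it as immediate from the definition of projection, so there is no proof to compare against; your induction on the length of $s$ (equivalently, the observation that all three maps are monoid homomorphisms on the free monoid $\Sigma^*$ agreeing on the generators $\sigma \in \Sigma$) is the standard verification, and your remark that $\epsilon/\alpha L = \epsilon$ needs to be supplied is well taken.
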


\noindent We can now define the Campbell-Lauer-Hoare parallel composition operator.

\begin{definition} The operator $\cdot || \cdot : \PCL(\Sigma)^2 \to \PCL(\Sigma)$ is defined by
\[ \begin{array}{lcl}
    \alpha(L_1 || L_2) & = &  \alpha{L_1} \cup \alpha{L_2} \\
    L_1 || L_2 & = &  \{s \in \alpha(L_1 || L_2)* : s/\alpha{L_1} \in L_1 \And s/\alpha{L_2} \in L_2\}.
\end{array}
\]
\end{definition}
    
\noindent This operator is associative and commutative, and we write expressions such as $L_1 || \cdots || L_n$ to indicate n-ary parallel composition.

\begin{definition} For $\uL = (L_1, \ldots ,L_n)$, where $(L_1,\ldots ,L_n) \in \PCL(\Sigma)^n$, $\Vops(L)$, the set of \emph{vector operations} over $\uL$ is:
\[ \Vops(\uL) = \{\myvec{\sigma} : \sigma \in \bigcup_{1 \le i \le n}\alpha{L_i}\} \]
where $\myvec{\sigma} = (\sigma/\alpha{L_1}, \ldots ,\sigma/\alpha{L_n})$.
\end{definition}

\noindent Concatenation is extended elementwise to vectors of strings, e.g.
\[ (s_1, \ldots ,s_n)(t_1, \ldots ,t_n) = (s_1t_1, \ldots ,s_nt_n). \]
This concatenation is clearly associative, and has $\myvec{\epsilon} = (\epsilon, \ldots ,\epsilon)$ as the identity.


Now $\Vops(\uL)^*$ is the monoid generated from $\Vops(\uL)$, i.e. all products of the form
$\myvec{\sigma}_1 \cdots \myvec{\sigma}_n$ for $n \ge 0$,
where $\myvec{\sigma}_1 \in \Vops(\uL), i = 1, \ldots , n$,
and the zero product is $\myvec{\epsilon}$.

This monoid is not free; there are commutation relations $\myvec{\sigma}_1
\myvec{\sigma}_2 = \myvec{\sigma}_2 \myvec{\sigma}_1$, $\sigma_1 \ne 
\sigma_2$, where $\sigma_1$ and $\sigma_2$ are each non-$\epsilon$ only at components 
where the other is $\epsilon$. These commutation relations correspond directly to 
potential concurrency between $\sigma_1$ and $\sigma_2$.

The prefix relation can also be extended elementwise to vectors of strings, i.e. $(s_1, \ldots ,s_n)$ is a prefix of $(t_1, \ldots ,t_n)$ iff $s_i$ is a prefix of $t_i$, $i = 1, \ldots ,n$. WIth this extension, it is clear that the prefix of a vector is a vector of prefixes of the components, and conversely.

\paragraph{Notation} We shall use underlined names to indicate elements of $\Vops(\uL)^*$, e.g. $\us$, $\ut$; and for indexing these vectors of strings we shall write e.g. $[\us]_i$,
meaning the $i$'th component of  $\us$ (which is a string).

\begin{definition}
We write $\PCVL(\Sigma)$ for the set of \emph{prefix-closed vector languages  $(L, (\alpha{L_1}, \ldots ,\alpha{L_n}))$ over $\Sigma$}, where:
\begin{enumerate}
\item $\alpha L_i \subseteq \Sigma$, $i = 1, \ldots , n$.
\item $L \subseteq \Vops(\alpha{L_1}, \ldots ,\alpha{L_n})^*$, $L \neq \empset$.
\item $L$ is prefix-closed.
\end{enumerate}
\end{definition}

\begin{definition}
The operator
\[ \VFS : \PCL(\Sigma)^+ \to \PCVL(\Sigma) \]
mapping a vector of prefix-closed languages to a prefix-closed vector language, the set of \emph{vector firing sequences}, is defined by:
\[ \VFS(L_1, \ldots ,L_n) = (L, (\alpha{L_1}, \ldots ,\alpha{L_n})) \]
where 
\[ L = \Vops(\alpha{L_1}, \ldots , \alpha{L_n})^* \cap (L_1 \times \cdots \times L_n). \]
For every $L \in \PCVL(\Sigma)$, $L = \VFS(L_1, \ldots ,L_n)$, where
$L_i = \{[\us]_i : \us \in L\}$,  $i = 1, \ldots ,n$.
So $\PCL(\Sigma)$ ``spans'' $\PCVL(\Sigma)$ via the $\VFS$ operation. Conversely, $\PCL(\Sigma)$ can be identified with the one-dimensional languages in $\PCVL(\Sigma)$; however, we prefer to keep them notationally distinct.
\end{definition}

\noindent  We now turn our attention to algebra. We will be interested in \emph{partial algebras} \cite{G79}.

\begin{definition}
The \emph{strong equality} relation $\equiv$ compares definedness on both sides as well as values if defined, i.e. $e1 \equiv e2$ for expressions $e1$ and $e2$ iff either $e1, e2$ are both undefined, or they are both defined and $e1 = e2$.
\end{definition}

A \emph{ranked alphabet} is a pairwise disjoint family of sets $\{\Sigma_n\}_{n \ge 0}$ of \emph{operator symbols,} where $\sigma \in \Sigma_n$ has \emph{arity} $n$; $\Sigma_0$ are the nullary operation symbols, or constants.

We can construe our fixed, unranked alphabet $\Sigma$ as a ranked alphabet as follows:
\[ \Sigma_0 = \{\epsilon\}, \qquad
\Sigma_1 = \Sigma, \qquad
\Sigma_n = \empset, \quad n > 1.
\]

\noindent In what follows, the only ranked alphabet we will use is $\Sigma$, but definitions and results valid for ranked alphabets in general will be stated and proved in full generality; we will sometimes use $\Sigma$ for a ``typical ranked alphabet'' and sometimes for our specific, fixed $\Sigma$; the context will always make clear what is intended.


\begin{definition}
$\mathbf{W_\Sigma}$, the \emph{word algebra} for $\Sigma$, is the smallest set of strings over $\Sigma \cup \{\mathbf{(,)}\}$ such that

\begin{enumerate}[(i)]
    \item $\Sigma_0 \subseteq W_\Sigma$
    \item $t_1, \ldots ,t_n \in W_\Sigma$ and $\sigma \in \Sigma_n \; \Rightarrow\;  \mathbf{(}t_1, \ldots ,t_n\mathbf{)}\sigma \in W_\Sigma$.
\end{enumerate}
\end{definition}

\noindent Note that we write operator application on the right; also in the future the parentheses will be ordinary face.

\begin{remark}
There is a natural identification between $\Sigma^*$ and $W_\Sigma$, given by the bijection
\[ \sigma_1, \ldots ,\sigma_n \leftrightarrow ( \cdots (\epsilon)\sigma_1 \cdots)\sigma_n . \]
\end{remark} 

\begin{definition}
    $\PAlg(\Sigma)$ is the class of all \emph{partial algebras} over the ranked alphabet $\Sigma$, i.e. all $A$ such that

\begin{enumerate}[(i)]
    \item $|A|$, the \emph{carrier} of $A$, is a non-empty set
    \item for each $n \ge 0$, $\sigma \in \Sigma_n, \sigma_A$ is an \emph{n-ary partial operation} on $|A|$, i.e. a partial function $\sigma_A : |A|^n \pfn |A|$.
\end{enumerate}

\noindent For $t \in W_\Sigma$, $A \in \PAlg(\Sigma)$, we define $t^A$, the \emph{evaluation} or \emph{run} of $t$ on $A$, or the \emph{denotation} of $t$ in $A$, as follows:

\begin{enumerate}[(i)]
    \item $t = \sigma \in \Sigma_0 \Imp t^A \equiv \sigma_A$
    \item $t = (t_1 , \ldots , t_n)\sigma \Imp t^A \equiv (t_1^A , \ldots , t_n^A)\sigma_A$.
\end{enumerate}

\noindent Note that strong equality is used. This leads to the \emph{definedness predicate,} $\Def t^A$, for $t \in W_\Sigma$, $A \in \PAlg(\Sigma)$, defined by:

\begin{enumerate}[(i)]
    \item $t = \sigma \in \Sigma_0 \Imp \Def t^A = (\sigma_A \ne \empset)$
    \item $t = (t_1 , \ldots , t_n)\sigma \;\Imp\; [\Def t^A \iff \Def t_1^A \And \cdots \And \Def t_n^A \And (t_1^A, \ldots ,t_n^A) \in \dom \sigma_A]$.
\end{enumerate}
\end{definition}

\begin{definition}
A \emph{homomorphism} of partial algebras $\phi: A \to B$, where  $A,B \in \PAlg(\Sigma)$, is a function
$\phi : |A| \to |B|$ such that
\[ \phi((a_1, \ldots ,a_n)\sigma_A) = (\phi(a_1), \ldots ,\phi(a_n))\sigma_B \]
whenever both sides are defined; moreover, definedness on the left hand side implies definedness on the right.
\end{definition}

A \emph{strong homomorphism} is a homomorphism for which definedness on the right hand side of the above equation implies definedness on the left; i.e.~for which the above equation holds with respect to strong equality.
    
An \emph{isomorphism} is a bijective strong homomorphism. If an isomorphism from $A$ to $B$ exists, we say they are \emph{isomorphic} and write $A \cong B$.
    
\begin{definition}
${\times} : \PAlg(\Sigma)^2 \to \PAlg(\Sigma),$ the \emph{direct product of partial algebras}, is defined by:
    
\begin{enumerate}[(i)]
    \item $|A \times B| = |A| \times |B|$\\
    \item $((a_1,b_1), \ldots ,(a_n,b_n))\sigma_{A \times B} \equiv ((a_1, \ldots ,a_n)\sigma_A,(b_1, \ldots ,b_n)\sigma_B)$.
\end{enumerate}

\end{definition}
Note that definedness in the product requires definedness in \emph{both} the factors. Up to isomorphism, the direct product is associative and commutative, and we write expressions like $A_1 \times \cdots \times A_n$ to indicate generalised, $n$-ary direct products.

\begin{definition}
    $A \in \PAlg(\Sigma)$ is \emph{finitely generated} iff every $a \in |A| = t^A$ for some $t \in W_\Sigma$.
\end{definition}
\paragraph{Remark, added 2009}
This is not the standard notion of finite generation, which would be parameterised by a finite subset of $|A|$. Here we are following Landin \cite{L70}.

\begin{definition}
    $B$ is a \emph{subalgebra} of $A$, for $A,B \in \PAlg(\Sigma)$, iff:
    
    \begin{enumerate}[(i)]
        \item $|B| \subseteq |A|$
        \item $(b_1, \ldots ,b_n)\sigma_B \equiv (b_1, \ldots ,b_n)\sigma_A$ for all $b_1, \ldots ,b_n \in |B|$.
    \end{enumerate}
\end{definition}

\begin{definition}
    for $A \in \PAlg(\Sigma)$, $\Ac(A)$, the \emph{algebraic closure} (on the empty set of generators) in $A$, is given by: $\Ac(A) = \bigcup_{n \ge 0} \Ac_n(A)$, where 
\[ \begin{array}{lcl}
\Ac_0(A) & = & \{\sigma_A : \sigma \in \Sigma_0\} \\
\Ac_{n+1}(A) & = &\Ac_n(A) \; \cup \\
& & \{(a_1, \ldots ,a_m)\sigma_A : (a_1, \ldots ,a_m) \in \Ac_n(A)^m \cap \dom \sigma_A, \;\; \sigma \in \Sigma_m\} .
\end{array}
\]
\end{definition}   


\noindent Note that $\Ac(A)$ can be regarded as the least fixed point of an operator on $2^{|A|}$ which is directed-continuous (because the operations are finitary). Then the $\Ac_n(A)$ correspond to the terms of the usual ``ascending Kleene sequence''. The ``least solution'' in this case is the minimal subalgebra of $A$.

\begin{fact} $A \in \PAlg(\Sigma)$ is finitely generated iff $A = \Ac(A)$.
\end{fact}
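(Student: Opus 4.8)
The plan is to reduce the \textbf{Fact} to the single identity
\[ \Ac(A) = \{\, t^A : t \in W_\Sigma,\ \Def t^A \,\}, \]
which exhibits the algebraic closure as the set of denotations of defined words. Granting this, the equivalence is immediate: if $a = t^A$ for some $t \in W_\Sigma$ then $t^A$ is an element of $|A|$, so necessarily $\Def t^A$; hence $A$ is finitely generated exactly when $|A| \subseteq \{\, t^A : t \in W_\Sigma,\ \Def t^A \,\}$, and since the reverse containment holds trivially this says precisely $|A| = \Ac(A)$. (Here one reads ``$A = \Ac(A)$'' as equality of carriers; as noted after the definition, $\Ac(A)$ is the carrier of the minimal subalgebra of $A$, so this is the only content of the statement.)

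For the inclusion $\Ac(A) \subseteq \{\, t^A : \Def t^A \,\}$ I would induct on $n$, showing $\Ac_n(A) \subseteq \{\, t^A : \Def t^A \,\}$. The base case concerns $\Ac_0(A)$, which is handled by the words $t = \sigma \in \Sigma_0$: for these $t^A \equiv \sigma_A$ and $\Def t^A$ amounts to $\sigma_A$ being defined. For the step, an element of $\Ac_{n+1}(A)$ that is not already in $\Ac_n(A)$ has the form $(a_1,\dots,a_m)\sigma_A$ with $(a_1,\dots,a_m) \in \Ac_n(A)^m \cap \dom \sigma_A$ and $\sigma \in \Sigma_m$; choosing, by the induction hypothesis, words $t_i$ with $t_i^A = a_i$ and $\Def t_i^A$, the word $t = (t_1,\dots,t_m)\sigma$ satisfies $t^A \equiv (t_1^A,\dots,t_m^A)\sigma_A = (a_1,\dots,a_m)\sigma_A$, and $\Def t^A$ by the definedness clause for composite terms (each $\Def t_i^A$, and the argument tuple lies in $\dom \sigma_A$).

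For the reverse inclusion I would induct on the structure of $t \in W_\Sigma$, proving $\Def t^A \Rightarrow t^A \in \Ac(A)$. A constant $t = \sigma \in \Sigma_0$ with $\Def t^A$ gives $t^A = \sigma_A \in \Ac_0(A)$. For $t = (t_1,\dots,t_m)\sigma$ with $\sigma \in \Sigma_m$ and $\Def t^A$, the definedness clause yields $\Def t_i^A$ for each $i$ and $(t_1^A,\dots,t_m^A) \in \dom \sigma_A$; by the induction hypothesis each $t_i^A$ lies in some $\Ac_{n_i}(A)$, so — using monotonicity of the chain $\big(\Ac_n(A)\big)_{n \ge 0}$ — all of them lie in $\Ac_n(A)$ for $n = \max_i n_i$ (take $n = 0$ if $m = 0$). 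Then $(t_1^A,\dots,t_m^A) \in \Ac_n(A)^m \cap \dom \sigma_A$, so $t^A = (t_1^A,\dots,t_m^A)\sigma_A \in \Ac_{n+1}(A) \subseteq \Ac(A)$.

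The only mild subtleties — and the closest thing to an obstacle — are bookkeeping: keeping strong equality straight so that ``defined'' propagates correctly in both directions of the composite-term clause, and the appeal to monotonicity of the Kleene chain to amalgamate the finitely many stage-indices $n_i$ supplied by the induction hypotheses into a single $n$. This last point is exactly where finitariness of the operations is used, as flagged in the remark preceding the \textbf{Fact}. Nothing here is deep: the proof is essentially the two inductive definitions — that of $t^A$ and that of $\Ac_n(A)$ — unwound against each other.
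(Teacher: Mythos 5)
The paper states this as a \textbf{Fact} with no proof supplied, and your argument is exactly the routine unwinding that is evidently intended: the key identity $\Ac(A) = \{\, t^A : t \in W_\Sigma,\ \Def t^A \,\}$, proved by the two dovetailed inductions, immediately yields the equivalence with the paper's (nonstandard, generator-free) notion of finite generation. Your proof is correct, including the points of care you flag (reading $\Ac_0(A)$ as collecting only the \emph{defined} constants, and using monotonicity of the chain $\Ac_n(A)$ to amalgamate stage indices, which is where finitariness enters).
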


\begin{fact} If $A$ is finitely generated, then for any $B$ there is at most one homomorphism $\phi : A \to B$, given by $\phi(t^A) = t^B$.
\end{fact}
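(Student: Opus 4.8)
The plan is to prove both halves of Fact: that finite generation of $A$ yields uniqueness of homomorphisms out of $A$, and that when a homomorphism exists it must be $\phi(t^A) = t^B$.

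First I would establish the following auxiliary claim by induction on the structure of $t \in W_\Sigma$: for any homomorphism $\phi : A \to B$, if $\Def t^A$ then $\Def t^B$ and $\phi(t^A) = t^B$. For the base case, $t = \sigma \in \Sigma_0$: if $\Def t^A$ then $\sigma_A$ is defined (a $0$-ary operation yielding an element), and the homomorphism condition applied with $n = 0$ gives $\phi(\sigma_A) = \sigma_B$ together with definedness on the right, so $\Def \sigma^B$ and $\phi(t^A) = t^B$. For the inductive step, $t = (t_1, \ldots, t_n)\sigma$ with $\sigma \in \Sigma_n$: assuming $\Def t^A$, the definedness predicate gives $\Def t_i^A$ for each $i$ and $(t_1^A, \ldots, t_n^A) \in \dom \sigma_A$, so $t^A = (t_1^A, \ldots, t_n^A)\sigma_A$ is defined. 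By the induction hypothesis, $\Def t_i^B$ and $\phi(t_i^A) = t_i^B$ for each $i$. Now apply the homomorphism condition at $(t_1^A, \ldots, t_n^A)$: since the left side $(t_1^A, \ldots, t_n^A)\sigma_A$ is defined, the condition forces the right side $(\phi(t_1^A), \ldots, \phi(t_n^A))\sigma_B = (t_1^B, \ldots, t_n^B)\sigma_B$ to be defined and equal to $\phi(t^A)$. Hence $\Def t^B$ and $\phi(t^A) = t^B$, completing the induction.

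Next I would use finite generation: by definition every $a \in |A|$ equals $t^A$ for some $t \in W_\Sigma$, and such $t^A$ is necessarily defined. So the auxiliary claim shows that any homomorphism $\phi$ satisfies $\phi(a) = \phi(t^A) = t^B$. This simultaneously shows that $\phi$ is completely determined on all of $|A|$ (any two homomorphisms agree everywhere, giving at most one), and that its value on $a$ is computed as $t^B$ for any $t$ representing $a$.

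I expect the only subtle point — worth a remark rather than an obstacle — to be well-definedness of the formula $\phi(t^A) = t^B$: a priori an element $a$ may be represented by several different terms $t, t'$ with $t^A = t'^A = a$. But this is not something we need to check separately here, since the statement only asserts that \emph{if} a homomorphism exists then it is given by this rule; the existence of $\phi$ itself guarantees that $t^A = t'^A$ implies $t^B = \phi(t^A) = \phi(t'^A) = t'^B$ via the auxiliary claim, so the representation-independence comes for free. Thus the argument is a straightforward structural induction followed by an appeal to Fact on finite generation, and I would present it in essentially that order.
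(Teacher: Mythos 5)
Your proof is correct: the paper states this as a Fact without supplying a proof, and your structural induction on $t$ (using that definedness of the left-hand side of the homomorphism equation forces definedness of the right, so $\Def t^A$ implies $\Def t^B$ and $\phi(t^A)=t^B$) followed by the appeal to finite generation is exactly the standard argument the paper is implicitly relying on. Your closing remark that well-definedness of $t \mapsto t^B$ on equivalence classes of terms is not an obligation here --- since the Fact asserts only uniqueness, not existence --- is also the right way to read the statement.
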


\section{Characterization Results}

In order to obtain algebraic characterisations of our operations on languages, we need to have a systematic representation of languages as algebras. This is provided by:

\begin{definition}
$F : \PCL(\Sigma) + \PCVL(\Sigma) \to \PAlg(\Sigma)$

\begin{enumerate}[(i)] 
    \item For $L \in \PCL(\Sigma), F(L) = A$, where
\[ \begin{array}{lcl}
|A| & = &  L\\
\epsilon_A & = & \epsilon\\
\sigma_A  & = & \left\{ \begin{array}{lr}
\{ (s, s\sigma) : s\sigma \in L \}, & \sigma \in \alpha L\\
\id_L & \mbox{otherwise}
\end{array} \right.
\end{array}
\]    
    \item For $L = \VFS(L_1, \ldots ,L_n), F(L) = A$, where
\[ \begin{array}{lcl}
|A| & = &  L\\
\epsilon_A & = & \myvec{\epsilon}\\
\sigma_A  & = & \left\{ \begin{array}{lr}
\{ (\us, \us\myvec{\sigma}) :   \us\myvec{\sigma} \in L \}, & \sigma \in \bigcup_i \alpha L_i \\
\id_L & \mbox{otherwise}.
\end{array} \right.
\end{array}
\]    
\end{enumerate}
\end{definition}

\noindent We can now state our characterisation result:

\begin{theorem} 
\label{charth}
For any $L_1, \ldots ,L_n \in \PCL(\Sigma)$:
\[ F(\VFS(L_1, \ldots ,L_n)) = \Ac(F(L_1) \times \cdots \times F(L_n)) . \]
\end{theorem}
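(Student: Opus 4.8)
\section*{Proof proposal}

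The claimed statement is a literal equality of partial algebras, so the plan is to show that the two sides have the same carrier and the same family of (partial) operations, rather than merely that they are isomorphic. Write $P = F(L_1) \times \cdots \times F(L_n)$ and $A = F(\VFS(L_1, \ldots, L_n))$. First I would unwind the direct-product definition to describe $P$ explicitly. Since $\Sigma_0 = \{\epsilon\}$ and $\Sigma_1 = \Sigma$, the only data of $P$ are the constant $\epsilon_P = (\epsilon_{F(L_1)}, \ldots, \epsilon_{F(L_n)}) = \underline{\epsilon}$ and, for each $\sigma \in \Sigma$, a unary partial operation $\sigma_P$. In component $i$, $\sigma$ acts by $s_i \mapsto s_i\sigma$ (defined when $s_i\sigma \in L_i$) if $\sigma \in \alpha L_i$, and by the identity otherwise; so the key observation is that, for a tuple $\underline{s}$, $\sigma_P$ is the map $\underline{s} \mapsto \underline{s}\,\underline{\sigma}$ when $\sigma \in \bigcup_i \alpha L_i$ (with definedness meaning that every component of $\underline{s}\,\underline{\sigma}$ lies in the corresponding $L_i$), and is the identity when $\sigma \notin \bigcup_i \alpha L_i$. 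In particular $\sigma_P$ never takes a tuple out of $\Vops(\alpha L_1, \ldots, \alpha L_n)^*$.

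Next I would identify the carrier of the subalgebra $\Ac(P)$ with $\VFS(L_1, \ldots, L_n) = \Vops(\alpha L_1, \ldots, \alpha L_n)^* \cap (L_1 \times \cdots \times L_n)$, by two inclusions. For ``$\subseteq$'', induct along the chain $\Ac_0(P) \subseteq \Ac_1(P) \subseteq \cdots$: $\Ac_0(P) = \{\underline{\epsilon}\}$, which is in $\VFS$ since $\epsilon \in L_i$ for each $i$; and in the inductive step, applying an operation $\sigma_P$ either leaves the tuple fixed or appends some $\underline{\sigma}$, so the result is still in $\Vops^*$, while definedness of $\sigma_P$ forces every component to remain in its $L_i$. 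For ``$\supseteq$'', take $\underline{s} \in \VFS(L_1, \ldots, L_n)$ and write $\underline{s} = \underline{\sigma_1} \cdots \underline{\sigma_k}$ with each $\sigma_j \in \bigcup_i \alpha L_i$; I would then show by induction on $j$ that the partial product $\underline{s}^{(j)} = \underline{\sigma_1} \cdots \underline{\sigma_j}$ is reachable in $P$. The point is that $[\underline{s}^{(j)}]_i = (\sigma_1 \cdots \sigma_j)/\alpha L_i$ is a prefix of $[\underline{s}]_i \in L_i$, so by prefix-closure of $L_i$ it lies in $L_i$; hence the operation $\sigma_{j+1}$ is applicable to $\underline{s}^{(j)}$ in $P$ and produces $\underline{s}^{(j+1)}$. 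This step --- verifying that a single linear schedule of operations witnesses $\underline{s}$, even though the various components advance at different rates --- is where prefix-closure of the $L_i$ is used essentially, and is the point I expect to require the most care; the rest is bookkeeping.

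Finally, since $\Ac(P)$ is by construction a subalgebra of $P$, its operations are exactly those of $P$ restricted to the carrier $\VFS(L_1, \ldots, L_n)$, and it remains to match these against the operations in the definition of $F$ on a prefix-closed vector language. For $\sigma \in \bigcup_i \alpha L_i$, the restricted operation is $\{(\underline{s}, \underline{s}\,\underline{\sigma}) : \underline{s} \in \VFS(L_1, \ldots, L_n),\ \underline{s}\,\underline{\sigma} \in L_1 \times \cdots \times L_n\}$; since $\underline{s} \in \Vops^*$ implies $\underline{s}\,\underline{\sigma} \in \Vops^*$, the second condition says exactly $\underline{s}\,\underline{\sigma} \in \VFS(L_1, \ldots, L_n)$, and prefix-closure of $\VFS(L_1, \ldots, L_n)$ makes the first condition redundant, so this is $\{(\underline{s}, \underline{s}\,\underline{\sigma}) : \underline{s}\,\underline{\sigma} \in \VFS(L_1, \ldots, L_n)\} = \sigma_A$. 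For $\sigma \notin \bigcup_i \alpha L_i$ the restricted operation is $\id$ on $\VFS(L_1, \ldots, L_n)$, again equal to $\sigma_A$, and the constant is $\underline{\epsilon} = \epsilon_A$. Hence $\Ac(P)$ and $A$ have the same carrier and the same operations, i.e.\ $\Ac(F(L_1) \times \cdots \times F(L_n)) = F(\VFS(L_1, \ldots, L_n))$.
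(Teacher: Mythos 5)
Your proposal is correct and follows essentially the same route as the paper's own proof: both establish equality of the carriers by a double inclusion (induction on the stages of $\Ac$ in one direction, induction on the length of a factorisation in $\Vops(\alpha L_1,\ldots,\alpha L_n)^*$ in the other, using prefix-closure of the $L_i$ to keep the intermediate partial products inside $L_1\times\cdots\times L_n$), and then check that the constant and the unary operations, together with their domains of definition, coincide. The only difference is organisational --- the paper isolates the computation of $\sigma$ on tuples of the product as a separate step and reuses it, whereas you fold it into an explicit description of the product algebra at the outset.
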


\paragraph{Remark added 2009} Thus we can represent the concurrent composition operation, in its ``true concurrency semantics'', by a purely algebraic construction --- the key one from \cite{L70}.
 
\begin{proof}
Let $\uL = (\alpha{L_1}, \ldots ,\alpha{L_n})$, let $A$ be the algebra denoted by the left hand side of the above equation, and $B$ the algebra denoted by the right hand side. The proof proceeds in a number of steps.


\begin{enumerate}[(i)]
    \item $|A| \subseteq L_1 \times \cdots \times L_n \supseteq |B|$. Immediate.
    \item $\epsilon_A = \myvec{\epsilon} = (\epsilon, \ldots ,\epsilon) = (\epsilon_{F(L_1)}, \ldots ,\epsilon_{F(L_n)}) = \epsilon_B$.
    \item For $\us \in \dom \sigma_B$,\\
    $(\us)\sigma_B = \us \myvec{\sigma}$ where $\myvec{\sigma} \in \Vops(\alpha\uL)$.\\
    In fact, $(\us)\sigma_B = ((s_1)\sigma_{F(L_1)}, \ldots ,(s_n)\sigma_{F(L_n)})$,\\
    where 
\[ (s_i)\sigma_{F(L_i)} = \left\{ \begin{array}{lr} 
s_i \sigma & \mbox{if $\sigma \in \alpha L_i$} \\
s_i = s_i \epsilon & \mbox{otherwise}
\end{array} \right.
\]
    So $(\us)\sigma_B = \us([\myvec{\sigma}]_1, \ldots ,[\myvec{\sigma}]_n) = \us \myvec{\sigma}$.
    \item $|A| = |B|$. By (i), we need only show that for $\us \in L_1 \times \cdots \times L_n$, 
    \[ \us \in \Vops(\uL)^* \IFF \us \in |B| . \]

For $\Rightarrow$, we proceed by induction on the length of products in $\Vops(\uL)^*$, which we write $|s|$.

\noindent \textbf{Basis} $|\us| = 0 \Imp\us = \myvec{\epsilon}$. $\us \in |B|$ by (ii).

\noindent \textbf{Induction Step} 
\[ \begin{array}{lll}
& \us = \ut \myvec{\sigma} & \\
\Rightarrow & \ut \in |A| & \mbox{prefix closure} \\
\Rightarrow & \ut \in |B| & \mbox{inductive hypothesis.} 
\end{array}
\]
    Now 
    \[  \begin{array}{lcll}
    \ut \myvec{\sigma} \in |A| & \Rightarrow & [\ut]_i[\myvec{\sigma}]_i \in L_i = |F(L_i)|, & i=1, \ldots ,n \\
    & \Rightarrow & [\ut]_i \in \dom \sigma_{F(L_i)}, &  i = 1, \ldots , n\\
    & \Rightarrow & (\ut)\sigma_B = \ut \myvec{\sigma} \;\; \mbox{(by (iii))} \; \in |B| . &
    \end{array}
    \]
    
For $\Leftarrow$, we have
\[ \begin{array}{lcll}
a \in |B| & \Rightarrow & a \in \Ac_m(F(L_1) \times \cdots \times F(L_n)) & \mbox{for some $m$} \\
    & \Rightarrow & a = (\cdots (\epsilon_B)\sigma_{1B} \cdots )\sigma_{mB}, & \sigma_i \in \Sigma,  \;\; i = 1, \ldots , m\\
    & \Rightarrow & a = \myvec{\epsilon} \myvec{\sigma}_1 \cdots \myvec{\sigma}_m & \mbox{by (3)} \\
    & \Rightarrow & a \in \Vops(\uL)^* . &
    \end{array}
    \]
    
    \item For $\us \in \dom \sigma_A \cap \dom \sigma_B$, $(\us)\sigma_A = \us \myvec{\sigma} = (\us)\sigma_B$ by (iii).
    
    \item For $s \in |A| = |B|$:
    \[ \begin{array}{lcll}
    \us \in \dom \sigma_A
            & \IFF & \us \myvec{\sigma} \in |A| & \\
            & \IFF & [\us]_i [\myvec{\sigma}]_i \in L_i &  i=1, \ldots ,n \\
            & \IFF & [\us]_i \in \dom \sigma_{F(L_i)} &  i=1, \ldots ,n\\
           &  \IFF & \us \in \dom \sigma_B . &
           \end{array}
           \]
\end{enumerate}
\end{proof}

\noindent We now provide a representation of algebras as languages.

\begin{definition} $G : \PAlg(\Sigma) \to \PCL(\Sigma)$.

$G(A) = L$, where
 \[ \begin{array}{lcl}
 \aL & = & \{\sigma \in \Sigma : \sigma_A \ne \id_{|\Ac(A)|}\}\\
 L & = & \{t/\aL : \Def t^A\} .
 \end{array}
 \]
\end{definition}

\begin{remark}
In this definition of G, we blur the distinction between $\Sigma^*$ and $W_\Sigma$, appealing to the identification described in section 2.
\end{remark}

\begin{remark}
Before proceeding, we note that it is consistent with our earlier definitions that for some $A \in \PAlg(\Sigma), \epsilon_A = \empset$, i.e. the constant is undefined. This would mean $G(A) = \empset \notin \PCL(\Sigma)$, i.e. $G$ would be partial. We defer discussion of this and some similar problems to the final section, merely noting their existence as they arise, and revising our definitions accordingly. In this case, we simply restrict $\PAlg(\Sigma)$ to those algebras in which $\epsilon$ is defined. This is what we will mean by $\PAlg(\Sigma)$ for the remainder of this paper.
\end{remark}

We now wish to see how well $F$ and $G$ correspond, by composing them.

\begin{theorem}
\label{corrth}
\begin{enumerate}[(i)]
    \item $G(F(L)) = L$ for all $L \in \PCL(\Sigma)$.
    \item There is a unique strong homomorphism $g_A : F(G(A)) \to A$, for all $A \in \PAlg(\Sigma)$.
    \item $G(\Ac(A)) = G(A)$.
    \item $G(A_1 \times \cdots \times A_n) = G(A_1) \, || \, \cdots \, || \, G(A_n)$.
\end{enumerate}
\end{theorem}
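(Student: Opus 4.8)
The plan is to prove each of the four parts more or less independently, though part (iii) and part (iv) will lean on the machinery developed for part (i) and on Theorem~\ref{charth}. I would begin with part (i), $G(F(L)) = L$, which is the most computational but also the most instructive. Fix $L \in \PCL(\Sigma)$ and let $A = F(L)$. First I would compute $\alpha(G(A))$: by the definition of $F$, $\sigma_A \neq \id_L$ precisely when $\sigma \in \alpha L$ \emph{and} there is some $s$ with $s\sigma \in L$; since $L$ is prefix-closed and nonempty, $\epsilon \in L$, so if $\sigma \in \alpha L$ then whether $\sigma_A = \id_L$ depends on whether $\sigma$ occurs at all --- but one must be slightly careful here, and I expect the cleanest route is to observe $\alpha(G(A)) \subseteq \alpha L$ always, and that the extra symbols in $\alpha L$ that never occur are harmless because they contribute nothing to $\{t/\alpha(G(A)) : \Def t^A\}$ either. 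Then I would show $\{t^A : \Def t^A\} = L$ by induction on $t$ (equivalently on string length under the identification $\Sigma^* \leftrightarrow W_\Sigma$): the run of $\sigma_1 \cdots \sigma_k$ on $A$ is defined iff each prefix lies in $L$, which by prefix-closure is iff $\sigma_1\cdots\sigma_k \in L$, and in that case $t^A = \sigma_1 \cdots \sigma_k$ itself (the $\id_L$ clauses for $\sigma \notin \alpha L$ simply do not arise once we restrict to $t$ built from $\alpha L$, and projecting by $\alpha(G(A))$ then recovers exactly $L$).

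For part (ii) I would invoke Fact the one asserting that a finitely generated algebra has at most one homomorphism into any target, given by $\phi(t^A) = t^B$. The algebra $F(G(A))$ is finitely generated by construction (its carrier is $G(A) \subseteq \alpha(G(A))^*$, and every string there is a run of the corresponding word), so uniqueness is automatic; the content is \emph{existence} of a \emph{strong} homomorphism $g_A$. I would define $g_A(s) = s^A$ for $s \in G(A) = |F(G(A))|$ --- this is well-defined because $s \in G(A)$ means $s = t/\alpha(G(A))$ for some $t$ with $\Def t^A$, and one checks $s^A \equiv t^A$ (the omitted symbols are exactly those acting as identities on $\Ac(A)$, so they do not change the value), hence $\Def s^A$. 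Then I would verify the homomorphism and strong-homomorphism conditions by the usual case split on whether $\sigma \in \alpha(G(A))$: if $\sigma \in \alpha(G(A))$ then $(s)\sigma_{F(G(A))} = s\sigma$ when defined, and $g_A(s\sigma) = (s\sigma)^A = (s^A)\sigma_A = (g_A(s))\sigma_A$, with definedness matching on both sides by the definition of $G$; if $\sigma \notin \alpha(G(A))$ then $\sigma_{F(G(A))} = \id$ and $\sigma_A = \id_{|\Ac(A)|}$, and since $g_A$ lands in $\Ac(A)$ both sides agree.

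Part (iii), $G(\Ac(A)) = G(A)$, I expect to be short: $\Ac(A)$ has the same operations as $A$ restricted to the subalgebra $\Ac(A)$, and $\Def t^A$ already forces $t^A \in \Ac(A)$ for every $t$, so $\{t : \Def t^A\} = \{t : \Def t^{\Ac(A)}\}$ and the two alphabets coincide as well (an operation is $\id$ on $|\Ac(A)|$ in $A$ iff it is $\id$ on the carrier of $\Ac(A)$ in $\Ac(A)$, since $\Ac(\Ac(A)) = \Ac(A)$). Part (iv), $G(A_1 \times \cdots \times A_n) = G(A_1) \,||\, \cdots \,||\, G(A_n)$, is the real payoff and I would derive it from Theorem~\ref{charth} together with part (i). The idea: $G$ of a product only sees $\Ac$ of that product by part (iii), and $\Ac(A_1 \times \cdots \times A_n)$ --- at least in the case each $A_i = F(L_i)$ --- is $F(\VFS(L_1,\dots,L_n))$ by Theorem~\ref{charth}; applying $G$ and comparing projections of vector firing sequences componentwise against the defining condition of $||$ (namely $s/\alpha L_i \in L_i$ for all $i$) should give the result directly, using $\mathsf{Fact}$ about commuting projections. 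The obstacle --- and the step I'd allocate the most care to --- is that part (iv) is stated for \emph{arbitrary} $A_i \in \PAlg(\Sigma)$, not just those in the image of $F$; so I would first reduce to that case by noting $G(A_i)$ depends only on $\Ac(A_i)$ (part (iii)) and that $G(A_i) = G(F(G(A_i)))$ by part (i), then argue that $G$ applied to $A_1 \times \cdots \times A_n$ agrees with $G$ applied to $F(G(A_1)) \times \cdots \times F(G(A_n))$ --- which requires knowing that $G$ of a product is insensitive to replacing each factor by $F(G(-))$, the natural tool for which is the strong homomorphism $g_{A_i}$ from part (ii) and the fact that strong homomorphisms, hence their product, preserve and reflect definedness of runs. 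Pinning down that last compatibility precisely is where the genuine work lies.
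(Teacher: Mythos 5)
Your treatment of parts (i)--(iii) matches the paper's in substance: the paper also proves (i) via the key equivalence $t/\alpha L \in L \IFF [\Def t^{F(L)} \And t^{F(L)} = t/\alpha L]$ (Lemma~\ref{lemm1}, by induction on $|t|$), defines $g_A(t)=t^A$ for (ii) and obtains uniqueness from finite generation, and dismisses (iii) as immediate. Where you genuinely diverge is (iv). The paper proves (iv) \emph{directly}, without Theorem~\ref{charth}: by Lemma~\ref{lemm2}, $\Def s^{A_1\times\cdots\times A_n} \IFF \Def s^{A_i}$ for all $i$, and $\alpha G(A_1\times\cdots\times A_n)=\bigcup_i\alpha G(A_i)$ by Lemma~\ref{lemm3}(iii); this is exactly the defining condition of $||$, and the proof is four lines. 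The paper then \emph{derives} Corollary~\ref{corrcorr}, $G(F(\VFS(L_1,\ldots,L_n))) = L_1\,||\cdots||\,L_n$, from Theorem~\ref{charth} together with (i), (iii), (iv). You invert this order: you verify the corollary-like statement directly when each factor is $F(L_i)$ using Theorem~\ref{charth}, then bootstrap to arbitrary $A_i$ by replacing each factor with $F(G(A_i))$ and using the strong homomorphisms $g_{A_i}$ to transfer definedness of runs across the product. This is not circular (you rely only on (i)--(iii) and Theorem~\ref{charth}), and it does work, but it is a detour: the transfer step you correctly flag as the locus of real work already needs Lemma~\ref{lemm2}, and once that lemma is in hand the direct proof of (iv) is immediate, making the passage through $\VFS$ superfluous. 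What your route buys is a clear view of (iv) and Theorem~\ref{charth} as two faces of one computation; what it costs is length and an artificial dependency of (iv) on (ii).

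One correction to part (i). You claim $\sigma_{F(L)} \neq \id_L$ precisely when $\sigma\in\alpha L$ \emph{and} some $s\sigma\in L$, and then propose to treat the symbols of $\alpha L$ that never occur in $L$ as harmless extras outside $\alpha G(F(L))$. But equality in $\PCL(\Sigma)$ is equality of \emph{pairs}, so the alphabet must be recovered exactly --- and it is: for such a $\sigma$ one has $\sigma_{F(L)}=\empset$, which is not $\id_L$ because $L\neq\empset$, so $\sigma$ still lies in $\alpha G(F(L))$. Hence $\alpha G(F(L)) = \alpha L$ on the nose; this is Lemma~\ref{lemm3}(i), whose proof the paper explicitly flags as needing $L\neq\empset$. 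Your proposed workaround would leave the alphabet component of the equality unestablished; the fix is simply to compute $\sigma_{F(L)}$ correctly in the never-occurring case.
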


\begin{corollary}
\label{corrcorr}
$G(F(\VFS(L_1, \ldots ,L_n))) = L_1 \, || \, \cdots \, || \,  L_n$.
\end{corollary}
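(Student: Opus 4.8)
The corollary follows by chaining the two main theorems, so the plan is almost entirely bookkeeping once Theorems~\ref{charth} and~\ref{corrth} are in hand.

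\paragraph{Plan.}
First I would apply Theorem~\ref{charth} to rewrite the left-hand side: $F(\VFS(L_1, \ldots ,L_n)) = \Ac(F(L_1) \times \cdots \times F(L_n))$, so the quantity we want is $G(\Ac(F(L_1) \times \cdots \times F(L_n)))$. Next I would use Theorem~\ref{corrth}(iii) to strip the algebraic-closure operator: $G(\Ac(A)) = G(A)$ with $A = F(L_1) \times \cdots \times F(L_n)$, giving $G(F(L_1) \times \cdots \times F(L_n))$. Then Theorem~\ref{corrth}(iv) distributes $G$ over the direct product, turning this into $G(F(L_1)) \, || \, \cdots \, || \, G(F(L_n))$. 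Finally Theorem~\ref{corrth}(i) collapses each $G(F(L_i))$ to $L_i$, yielding $L_1 \, || \, \cdots \, || \, L_n$, as required.

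\paragraph{Subtleties to check.}
There are two things I would want to verify carefully before declaring the chain valid. The first is that Theorem~\ref{corrth}(iv), as stated for binary products, genuinely extends to the $n$-ary case in the way the notation $A_1 \times \cdots \times A_n$ suggests; since both $\times$ (on partial algebras) and $\cdot || \cdot$ (on prefix-closed languages) are associative up to isomorphism, a routine induction on $n$ handles this, but one should note that $G$ is invariant under the isomorphisms in question (it is, since $G$ only depends on the algebra up to isomorphism — the carrier is replaced by denotations of words and the alphabet by the non-identity operators). The second is that the composite makes sense as a map into $\PCL(\Sigma)$: by the Remark after the definition of $G$, $G$ is only total on algebras in which $\epsilon$ is defined, so I would note that each $F(L_i)$ has $\epsilon_{F(L_i)} = \epsilon$ defined (since $L_i \neq \empset$ and prefix-closed, so $\epsilon \in L_i$), hence so does the product, and hence so does its algebraic closure — so every application of $G$ in the chain is legitimate.

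\paragraph{Main obstacle.}
Honestly, there is no deep obstacle here: the corollary is a one-line consequence once Theorems~\ref{charth} and~\ref{corrth} are available, and the real mathematical content lives in those. If I had to single out the step most likely to hide a gap, it is the implicit passage from the binary statement of Theorem~\ref{corrth}(iv) to the $n$-ary form, precisely because it silently relies on $G$ respecting the coherence isomorphisms for iterated products; I would make that dependence explicit rather than leave it to the reader.
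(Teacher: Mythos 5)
Your proposal is correct and follows exactly the same four-step chain as the paper's own proof (Theorem~\ref{charth}, then Theorem~\ref{corrth}(iii), (iv), and (i) in turn); your only added worry, the binary-to-$n$-ary passage, is moot since Theorem~\ref{corrth}(iv) is already stated for $n$-ary products.
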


\noindent We shall firstly prove a few lemmas before establishing Theorem~\ref{corrth}.

\begin{fact}$\Def t^{F(L)} \; \Rightarrow \; t^{F(L)} = t/\aL$.
\end{fact}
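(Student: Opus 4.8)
The plan is to prove this by induction on the structure of $t \in W_\Sigma$, which — under the identification of $W_\Sigma$ with $\Sigma^*$ recorded in the Remark of Section~2 — is just induction on the length of the string $t$. Write $A = F(L)$ throughout, and recall that $\aL$ is the subalphabet of $L$, on whose complement (within $\Sigma$) the operations of $A$ act as $\id_L$.

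For the base case $t = \epsilon$, the definition of $F$ gives $t^A \equiv \epsilon_A = \epsilon$, and $\Def t^A$ holds since we have restricted attention to algebras in which $\epsilon$ is defined; on the other side $\epsilon/\aL = \epsilon$, so $t^A = t/\aL$. For the induction step, take $t = (t')\sigma$ (the string $t'\sigma$) and assume $\Def t^A$. By the definedness predicate this forces $\Def (t')^A$ and $(t')^A \in \dom \sigma_A$, so the inductive hypothesis applies and gives $(t')^A = t'/\aL$; call this string $s$. Now I would split on whether $\sigma \in \aL$. If $\sigma \in \aL$, then $\sigma_A = \{(s', s'\sigma) : s'\sigma \in L\}$, so $s \in \dom \sigma_A$ yields $(s)\sigma_A = s\sigma$, whence $t^A = s\sigma = (t'/\aL)\sigma$, while $t/\aL = (t'/\aL)(\sigma/\aL) = (t'/\aL)\sigma$ since $\sigma \in \aL$; the two agree. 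If $\sigma \notin \aL$, then $\sigma_A = \id_L$, so $t^A = (s)\sigma_A = s = t'/\aL$, while $t/\aL = (t'/\aL)(\sigma/\aL) = (t'/\aL)\epsilon = t'/\aL$; again the two agree. This closes the induction.

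The argument is entirely routine, so I do not anticipate any real obstacle; the only place needing a modicum of care is keeping the $W_\Sigma$–$\Sigma^*$ identification straight, and in particular noting that making $\sigma_A = \id_L$ for $\sigma \notin \aL$ is exactly what makes projection onto $\aL$ the correct normal form for the evaluation map.
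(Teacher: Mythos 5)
Your proof is correct and is exactly the argument the paper intends: the Fact is stated without proof as immediate, and your induction on $|t|$ with the case split on $\sigma \in \aL$ versus $\sigma \notin \aL$ mirrors precisely the structure of the paper's own proof of Lemma~\ref{lemm1}, which subsumes this Fact. No gaps.
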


\begin{fact} $\Def t^{F(L)} \IFF \Def (t/\aL)^{F(L)}$.
\end{fact}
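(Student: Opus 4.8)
The plan is to prove both directions at once by induction on the length of the string underlying $t$, using the preceding Fact to pin down the intermediate values. Write $A = F(L)$, and recall that for $\sigma \notin \aL$ the operation $\sigma_A = \id_L$ is total, whereas for $\sigma \in \aL$ it is the partial map $s \mapsto s\sigma$ with domain $\{s \in L : s\sigma \in L\}$. For the base case $t = \epsilon$ we have $t/\aL = \epsilon$, and both $\Def \epsilon^A$ and $\Def (\epsilon/\aL)^A$ hold since $\epsilon_A = \epsilon$ is defined; so the equivalence is immediate.

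For the inductive step I would write $t = t'\sigma$ and split on whether $\sigma$ lies in $\aL$. If $\sigma \notin \aL$, then $\sigma_A$ is total, so $\Def t^A \IFF \Def (t')^A$; moreover $t/\aL = (t'/\aL)(\sigma/\aL) = t'/\aL$, whence $\Def (t/\aL)^A \IFF \Def (t'/\aL)^A$, and the inductive hypothesis closes the case. This is the easy half: the symbols outside $\aL$ act as identities, affecting neither definedness nor value, which is precisely why projecting them away is harmless.

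The case $\sigma \in \aL$ is the crux. Here $t/\aL = (t'/\aL)\sigma$, so by the definedness predicate for unary operations, $\Def t^A \IFF \Def (t')^A \And (t')^A \in \dom \sigma_A$ and, symmetrically, $\Def (t/\aL)^A \IFF \Def (t'/\aL)^A \And (t'/\aL)^A \in \dom \sigma_A$. The inductive hypothesis already equates the two definedness premises, so the only real obstacle is to show that the two membership tests coincide, i.e.\ that the arguments fed to $\sigma_A$ agree. This is exactly where the preceding Fact does the work: whenever $(t')^A$ is defined it equals $t'/\aL$, and whenever $(t'/\aL)^A$ is defined it equals $(t'/\aL)/\aL = t'/\aL$ by idempotence of projection (the earlier Fact with $\aL \cap \aL = \aL$). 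Hence both arguments are the single string $t'/\aL$, the two domain conditions are literally the same, and the equivalence follows.

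I expect no genuine difficulty beyond this value-alignment. There is a slightly slicker route that bypasses the explicit induction via prefix-closure: if $\Def t^A$ then $t/\aL \in L$ by the preceding Fact, and every prefix of $t/\aL$ lies in $L$ by prefix-closure, so each (necessarily non-identity) step in evaluating $t/\aL$ is defined; conversely, the non-$\aL$ symbols of $t$ are identities, so $t$ inherits definedness from $t/\aL$. The inductive argument is the more robust template, however, and is the one I would write up in full.
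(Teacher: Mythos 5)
Your proof is correct. The paper states this as a Fact without proof, but your induction on $|t|$ with a case split on $\sigma \in \aL$, using the preceding Fact ($\Def t^{F(L)} \Rightarrow t^{F(L)} = t/\aL$) together with idempotence of projection to align the arguments of $\sigma_{F(L)}$, is exactly the template the paper itself uses for the neighbouring Lemma~\ref{lemm1}, so it matches the intended justification.
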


\begin{lemma}
\label{lemm1}
\[ t/\aL \in L \IFF [\Def t^{F(L)} \And t^{F(L)} = t/\aL] . \]
\end{lemma}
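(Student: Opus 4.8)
The plan is to prove the biconditional by establishing the two directions separately, using the two preceding Facts as the workhorses. The right-to-left direction is the easy one: if $\Def t^{F(L)}$ holds and $t^{F(L)} = t/\aL$, then since $|F(L)| = L$ by construction of $F$, the element $t^{F(L)}$ lies in $L$, and substituting the stated value gives $t/\aL \in L$ directly. No induction is needed here.

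For the left-to-right direction, assume $t/\aL \in L$. First I would reduce to the case where $t$ itself is a string over $\aL$: by the Fact $\Def t^{F(L)} \iff \Def (t/\aL)^{F(L)}$, it suffices to show $\Def (t/\aL)^{F(L)}$, and by the preceding Fact $\Def s^{F(L)} \Rightarrow s^{F(L)} = s/\aL$, once definedness is established the value claim $t^{F(L)} = t/\aL$ follows (noting $(t/\aL)/\aL = t/\aL$, so $s^{F(L)} = s/\aL = t/\aL$ for $s = t/\aL$). So the real content is: for $s \in \aL^* $ with $s \in L$, we have $\Def s^{F(L)}$. This I would prove by induction on the length of $s$. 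The basis $s = \epsilon$ holds since $\epsilon_{F(L)} = \epsilon$ is defined (we are working with the revised $\PAlg(\Sigma)$ in which $\epsilon$ is always defined, and $\epsilon \in L$ by prefix-closure and $L \neq \varnothing$). For the step, write $s = s'\sigma$ with $\sigma \in \aL$; since $L$ is prefix-closed, $s' \in L$, so by the inductive hypothesis $\Def (s')^{F(L)}$ and $(s')^{F(L)} = s'$. Now $\sigma_{F(L)} = \{(u, u\sigma) : u\sigma \in L\}$ for $\sigma \in \aL$, and $s'\sigma = s \in L$, so $s' \in \dom \sigma_{F(L)}$; hence $\Def s^{F(L)}$ and $s^{F(L)} = s'\sigma = s$, as required.

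I expect the main obstacle to be essentially bookkeeping rather than mathematical depth: one must handle carefully the two species of symbols — those in $\aL$, for which $\sigma_{F(L)}$ genuinely appends, and those outside $\aL$, for which $\sigma_{F(L)} = \id_L$ acts trivially and contributes $\epsilon$ to the projection — and make sure the induction is run on the projected string $t/\aL$ rather than on $t$ directly, since $t$ may contain arbitrarily many symbols outside $\aL$ that do not shrink under the recursion on $W_\Sigma$. Packaging this via the two Facts (which already isolate exactly this point) keeps the argument clean; the only subtlety is confirming that the value equation $t^{F(L)} = t/\aL$ drops out of the definedness argument for free, which it does because the evaluation of any symbol not in $\aL$ leaves the current value unchanged.
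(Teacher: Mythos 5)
Your proof is correct, and it reaches the same essential content as the paper's --- an induction on string length driven by prefix-closure and the definition of $\sigma_{F(L)}$ --- but it is organised differently. The paper inducts on $|t|$ directly, splitting the induction step $t = s\sigma$ into the two cases $\sigma \in \aL$ and $\sigma \notin \aL$ and handling each inline; it does not actually invoke the two preceding Facts in the proof. You instead use those Facts up front to discharge all symbols outside $\aL$ at once, reducing to the single-case induction on the already-projected string $t/\aL$. This is a clean factoring and arguably better exploits the Facts the paper has just stated. One small correction to your commentary, though: your stated reason for \emph{needing} to induct on $t/\aL$ rather than $t$ --- that symbols outside $\aL$ ``do not shrink under the recursion on $W_\Sigma$'' --- is not right. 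The recursion on $W_\Sigma$ strips exactly one symbol per step whether or not that symbol lies in $\aL$, so $|t|$ decreases strictly and the paper's direct induction on $|t|$ goes through without difficulty; your reorganisation is a convenience, not a necessity. The right-to-left direction in your proposal matches the paper's verbatim.
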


\begin{proof}
By induction on $|t|$.
    
\noindent \textbf{Basis} $t = \epsilon = \epsilon/\aL \in L$ iff $\Def \epsilon^{F(L)}$ and $\epsilon^{F(L)} = \epsilon$.

\noindent \textbf{Induction Step} $t = s\sigma$.
Firstly, $\Rightarrow$. Assume $t/\aL \in L$.
\begin{description}
\item[Case (i)] $\sigma \in \aL$:
\[ \begin{array}{llll}
        & & t/\aL = (s/\aL)\sigma \in L & \\
        & \Rightarrow & (s/\aL) \in L &           \mbox{prefix closure} \\
    (*) & \Rightarrow & \Def s^{F(L)} \And s^{F(L)} = s/\aL &  \mbox{inductive hypothesis} 
    \end{array}
     \]
 \[ \begin{array}{lcll}       
 (s/\aL)\sigma \in L & \Rightarrow & (s/\aL, (s/\aL)\sigma) \in \sigma_{F(L)} &   \mbox{by definition of $F$}\\
                & \Rightarrow & \Def t^{F(L)} \And t^{F(L)} = (s/\aL\sigma) = t/\aL,  & \mbox{using (*)}.
 \end{array}
 \]
                
\item[Case (ii)] $\sigma \notin \aL$.
\[ t/\aL = s/\aL \in L \Imp \Def s^{F(L)} \And s^{F(L)} = s/\aL .\]
\[ \begin{array}{l}
        \sigma_{F(L)} = id_L \quad \mbox{by definition of $F$} \\
        \Rightarrow (s/\aL, s/\aL) \in \sigma_{F(L)}  \\
        \Rightarrow \Def t^{F(L)} \And t^{F(L)} = s/\aL = t/\aL. 
        \end{array}
        \]
\end{description}

\noindent Secondly, $\Leftarrow$. Assume $\Def t^{F(L)} \And t^{F(L)} = t/\aL$. Then 
\[ t^{F(L)} = t/\aL \in |F(L)| = L . \]
\end{proof}

\begin{lemma}
\label{lemm2}
$t^{(A_1 \times \cdots \times A_n)} \equiv (t^{A_1}, \ldots ,t^{A_n})$.
\end{lemma}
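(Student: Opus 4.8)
The plan is to prove $t^{(A_1 \times \cdots \times A_n)} \equiv (t^{A_1}, \ldots ,t^{A_n})$ by structural induction on the word $t \in W_\Sigma$, using strong equality throughout so that definedness is tracked automatically alongside values. The statement really packages two facts: that $\Def t^{(A_1 \times \cdots \times A_n)}$ holds exactly when $\Def t^{A_i}$ holds for every $i$, and that in that case the value is the tuple of the component values. Since $\equiv$ is designed precisely to assert both of these at once, a single induction suffices. I would first note that it is enough to treat the binary case $A \times B$, because the general $n$-ary product is, up to isomorphism, an iterated binary product (as remarked after the definition of $\times$), and isomorphisms are strong homomorphisms that are in particular injective on defined values; alternatively one just runs the induction directly with $n$-tuples, which is no harder.

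The induction has two cases matching the definition of $W_\Sigma$ and of evaluation $t^A$. \textbf{Base case:} $t = \sigma \in \Sigma_0$ (for our fixed $\Sigma$ this is just $t = \epsilon$). Then $t^{A_1 \times \cdots \times A_n} \equiv \sigma_{A_1 \times \cdots \times A_n}$, which by clause (ii) of the definition of the direct product (the nullary instance) is $\equiv (\sigma_{A_1}, \ldots, \sigma_{A_n}) = (t^{A_1}, \ldots, t^{A_n})$. \textbf{Induction step:} $t = (t_1, \ldots ,t_m)\sigma$ with $\sigma \in \Sigma_m$ and each $t_j \in W_\Sigma$. By the definition of evaluation, $t^{A_1 \times \cdots \times A_n} \equiv (t_1^{A_1 \times \cdots \times A_n}, \ldots, t_m^{A_1 \times \cdots \times A_n})\sigma_{A_1 \times \cdots \times A_n}$. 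Apply the inductive hypothesis to each $t_j$ to rewrite $t_j^{A_1 \times \cdots \times A_n} \equiv (t_j^{A_1}, \ldots, t_j^{A_n})$, then apply the direct-product clause (ii) to push $\sigma$ through the tuple componentwise, obtaining $\equiv ((t_1^{A_1}, \ldots, t_m^{A_1})\sigma_{A_1}, \ldots, (t_1^{A_n}, \ldots, t_m^{A_n})\sigma_{A_n}) = (t^{A_1}, \ldots, t^{A_n})$ by the evaluation clause applied in each factor. Each rewriting step is an instance of strong equality, so definedness propagates correctly and the chain of $\equiv$'s composes.

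The only point that needs a little care — and it is the closest thing to an obstacle — is making sure the strong-equality bookkeeping in the induction step is actually valid: when I substitute $t_j^{A_1 \times \cdots \times A_n} \equiv (t_j^{A_1}, \ldots, t_j^{A_n})$ inside the outer operator application, I am using that strong equality is a congruence for the (partial) operations, i.e. that if the arguments are strongly equal then so are the results of applying $\sigma_{A_1 \times \cdots \times A_n}$. This is immediate from the definition of $\equiv$ together with the definedness predicate for $t = (t_1,\ldots,t_m)\sigma$, which says $\Def t^A$ iff all $\Def t_j^A$ and the tuple lies in $\dom \sigma_A$; but it is worth stating explicitly, since the whole lemma hinges on it. Once that is in hand, the proof is a routine two-case induction of the same flavour as Lemma~\ref{lemm1}.
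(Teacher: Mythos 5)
Your proof is correct. The paper states Lemma~\ref{lemm2} without any proof at all, treating it as immediate from the definitions; your structural induction on $t$ — base case from the nullary instance of the product clause, inductive step interleaving the evaluation clause with the componentwise product clause, with the strong-equality/definedness bookkeeping made explicit — is exactly the routine argument the paper leaves implicit, so there is nothing to compare against and nothing missing.
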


\begin{lemma}
\label{lemm3}
\begin{enumerate}[(i)]
    \item $\sigma_{F(L)} = id_{F(L)} \IFF \sigma \notin \aL$
    \item $\alpha{G}(\Ac(A)) = \alpha{G(A)}$
    \item $\alpha{G}(A_1 \times \cdots \times A_n) = \bigcup_i\alpha{G(A_i)}$.
\end{enumerate}
\end{lemma}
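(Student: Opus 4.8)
The three parts are best handled through one common reformulation, which I would record first. For $A \in \PAlg(\Sigma)$ and $\sigma \in \Sigma$, the condition $\sigma \notin \alpha G(A)$ — that is, $\sigma_A = \id_{|\Ac(A)|}$ in the definition of $G$ — says precisely that $\sigma_A$ is defined on all of $|\Ac(A)|$ and acts there as the identity; equivalently, recalling that $|\Ac(A)| = \{t^A : t \in W_\Sigma,\ \Def t^A\}$ (the carrier of the minimal subalgebra), it says $(t\sigma)^A$ is defined and equals $t^A$ for every $t$ with $\Def t^A$. Part (i) is then just a matter of unfolding the definition of $F$: if $\sigma \notin \aL$ then $\sigma_{F(L)} = \id_L$, the identity $\id_{F(L)}$ on the carrier $L$; and if $\sigma \in \aL$ then $\sigma_{F(L)} = \{(s, s\sigma) : s\sigma \in L\}$ consists entirely of off-diagonal pairs (as $s\sigma \ne s$), so it can equal $\id_{F(L)} = \id_L$ only if both sets are empty — but $\id_L = \empset$ forces $L = \empset$, contrary to $L \ne \empset$. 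The one point needing care is this last appeal to $L \ne \empset$, which is exactly what rules out a symbol declared in $\aL$ that never actually occurs in $L$.

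For part (ii) I would combine two observations. First, $\Ac(A)$ is a subalgebra of $A$, so for the unary $\sigma$ the operation $\sigma_{\Ac(A)}$ is just $\sigma_A$ restricted to $|\Ac(A)|$ (its values landing automatically in $|\Ac(A)|$, a subalgebra being closed under the operations). Second, $\Ac(A)$ is finitely generated — each of its elements is a value $t^{\Ac(A)}$ by construction — so by the Fact that a finitely generated partial algebra equals its own algebraic closure, $\Ac(\Ac(A)) = \Ac(A)$, hence $|\Ac(\Ac(A))| = |\Ac(A)|$. Feeding both into the reformulation, ``$\sigma \notin \alpha G(\Ac(A))$'' unwinds to ``$\sigma_A$ is defined on all of $|\Ac(A)|$ and is the identity there'', which is exactly ``$\sigma \notin \alpha G(A)$''; so $\alpha G(\Ac(A)) = \alpha G(A)$.

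For part (iii) the workhorse is Lemma~\ref{lemm2}. Since $t^{A_1 \times \cdots \times A_n} \equiv (t^{A_1}, \ldots, t^{A_n})$ and definedness in a product demands definedness in every factor, the carrier $|\Ac(A_1 \times \cdots \times A_n)|$ is exactly the set of tuples $(t^{A_1}, \ldots, t^{A_n})$ with $\Def t^{A_i}$ for all $i$, and $\sigma_{A_1 \times \cdots \times A_n}$ acts componentwise. The inclusion $\alpha G(A_1 \times \cdots \times A_n) \subseteq \bigcup_i \alpha G(A_i)$ is then immediate: if $\sigma_{A_1 \times \cdots \times A_n}$ is undefined at, or moves, some reachable tuple, then some $\sigma_{A_i}$ does the same at the corresponding coordinate, and that coordinate — being the $i$-th projection of a reachable tuple — lies in $|\Ac(A_i)|$, so $\sigma \in \alpha G(A_i)$.

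The reverse inclusion $\bigcup_i \alpha G(A_i) \subseteq \alpha G(A_1 \times \cdots \times A_n)$ is the step I expect to be the main obstacle. Here I am handed some $i$ and a reachable $b = t^{A_i} \in |\Ac(A_i)|$ with $(b)\sigma_{A_i}$ undefined or $\ne b$, and I must produce a \emph{reachable tuple of the product} on which $\sigma$ is likewise undefined or non-identity. The obvious candidate $(t^{A_1}, \ldots, t^{A_n})$ lies in $|\Ac(A_1 \times \cdots \times A_n)|$ only if the same term $t$ is defined in every factor, which need not hold — a run reaching the offending state in $A_i$ may get stuck in some other $A_j$. So the real content is a lifting argument: show the witness can be chosen so that its defining term is simultaneously defined in all factors, or, in contrapositive form, that if no term defined across all the $A_j$ displays $\sigma$ as non-trivial then $\sigma$ is already the identity on every $|\Ac(A_i)|$. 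I would expect to need some structural input about the algebras in play here; for the algebras $F(L)$ that the characterisation results ultimately concern, each $\sigma_{F(L)}$ is either a genuine ``successor'' move (always off-diagonal) or the total identity, and it is that dichotomy one should exploit to push the lifting through.
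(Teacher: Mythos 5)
Your treatments of (i) and (ii) are correct and supply exactly the details the paper elides: for (i), the off-diagonality of $\{(s,s\sigma) : s\sigma \in L\}$ together with $L \ne \empset$ (the paper itself flags that $L \ne \empset$ is what is needed), and for (ii), that $\Ac(A)$ is a subalgebra of $A$ agreeing with it on $|\Ac(A)|$ under strong equality and satisfying $\Ac(\Ac(A)) = \Ac(A)$ (the paper flags that $\alpha{G(A)}$ must be defined via $\Ac(A)$ rather than $A$). Up to there you are on the paper's own, unstated, track.

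The gap is where you say it is: in (iii) you establish $\alpha{G}(A_1 \times \cdots \times A_n) \subseteq \bigcup_i \alpha{G(A_i)}$ but leave the reverse inclusion as an acknowledged obstacle, so the proposal is incomplete. Your instinct that some structural input is required is not mere caution --- the reverse inclusion is \emph{false} for arbitrary partial algebras. Take $\sigma, \tau \in \Sigma$; let $A_1$ have carrier $\{p,q\}$ with $\epsilon_{A_1}=p$, $\tau_{A_1}=\{(p,q)\}$, $\sigma_{A_1}=\{(p,p),(q,p)\}$, and let $A_2$ have carrier $\{r\}$ with $\epsilon_{A_2}=r$, $\tau_{A_2}=\empset$, $\sigma_{A_2}=\{(r,r)\}$. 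Then $q \in |\Ac(A_1)|$ and $\sigma_{A_1}$ moves it, so $\sigma \in \alpha{G(A_1)}$; but in $A_1 \times A_2$ the operation $\tau$ is nowhere defined, the only reachable element is $(p,r)$, and $\sigma$ fixes it, so $\sigma \notin \alpha{G}(A_1 \times A_2)$. The paper's proof of this part (``Immediate from Lemma~\ref{lemm2}'') does not engage with the difficulty at all; Lemma~\ref{lemm2} only delivers the inclusion you already have. The lemma is, however, only ever applied with $A_i = F(L_i)$, and there the dichotomy you identify does close the argument in two lines: if $\sigma \in \alpha{L_i} = \alpha{G(F(L_i))}$, then at the reachable tuple $\myvec{\epsilon}$ the product operation $\sigma$ is either undefined (because $\sigma \notin L_j$ for some $j$ with $\sigma \in \alpha{L_j}$) or sends $\myvec{\epsilon}$ to a tuple whose $i$-th coordinate is $\sigma \ne \epsilon$; either way $\sigma \in \alpha{G}(F(L_1) \times \cdots \times F(L_n))$. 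To repair the proof you should either add that argument and restate (iii) for algebras of the form $F(L_i)$ (which suffices for Corollary~\ref{corrcorr}), or isolate a hypothesis on the $A_i$ --- e.g.\ that each $\sigma_{A_i}$ restricted to $|\Ac(A_i)|$ is either the identity or fixes no point where it is defined --- under which your lifting goes through in general.
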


\begin{proof}
\begin{enumerate}[(i)]
    \item Immediate, but needs that $L \ne \empset$.
    \item Immediate, but needs that $\alpha{G(A)}$ is defined via $\Ac(A)$ rather than $A$.
    \item Immediate from Lemma~\ref{lemm2}.
\end{enumerate}
\end{proof}

We now prove Theorem~\ref{corrth}.
\begin{proof} 
    \begin{enumerate}[(i)]
    \item 
    \[ \begin{array}{cll}
    & s \in L & \\
    \Rightarrow & \Def s^{F(L)} & \mbox{Lemma~\ref{lemm2}}    \\
    \Rightarrow & s/\alpha{G(F(L))} = s/\aL & \mbox{Lemma~\ref{lemm3}(i)}  \\
                    = & s \in G(F(L)) &           \\
     \Rightarrow & \exists t. \, s = t/\aL \And \Def t^{F(L)} & \\
      \Rightarrow & t/\aL = s \in L. &
      \end{array}
      \]
                        
    \item Define $g_A : F(G(A)) \to A$ by: $g_A(t) = t^A$. Then:
    \[ \begin{array}{lll}
        & \Def t^{F(G(A))}\\
        \IFF & t/\alpha{G(A)} \in G(A) & \mbox{by Lemma~\ref{lemm1}} \\
         \IFF & \Def t^A & \mbox{by definition of G.} 
         \end{array}
         \]
         Since $F(G(A))$ is finitely generated, and does not identify terms,
          this proves $g_A$ is the unique strong homomorphism.
          
    \item $G(\Ac(A)) = G(A)$. Immediate.
    
    \item $\alpha{G(A_1 \times \cdots \times A_n)} =
        \bigcup_\alpha{G(A_i)} =
        \alpha(G(A_1) \, || \cdots || \, G(A_n))$.\\
        Call this alphabet $\aL$, then for $s \in \alpha{L^*}$:
        \[ \begin{array}{lll}
        & s \in G(A_1 \times \cdots \times A_n) & \\
         \IFF & \Def s^{A_1 \times \cdots \times A_n} & \\
          \IFF & \Def s^{A_i}$,   $i = 1, \ldots ,n & \mbox{by Lemma~\ref{lemm2}} \\
           \IFF & s/\alpha{G(A_i)} \in G(A_i), \quad i=1, \ldots ,n & \\
            \IFF & s \in G(A_1) \, || \cdots || \, G(A_n).
            \end{array}
            \]
\end{enumerate}
\end{proof}

We now prove Corollary~\ref{corrcorr}.

\begin{proof} 
\[ \begin{array}{lcll}
& & G(F(\VFS(L_1, \ldots ,L_n))) & \\
       & = & G(\Ac(F(L_1) \times \cdots \times F(L_n)))    &  \mbox{by Theorem~\ref{charth}}    \\
        & = & G(F(L_1) \times \cdots \times F(L_n))        & \mbox{by Theorem~\ref{corrth}(iii)} \\
        & = & G(F(L_1)) \, || \cdots || \, G(F(L_n))          &     \mbox{by Theorem~\ref{corrth}(iv)} \\
        & = & L_1 \, || \cdots || \, L_n                              & \mbox{by Theorem~\ref{corrth}(i)}.  
 \end{array}
 \]
\end{proof}

\noindent $G$ gives a representation of algebras in $\PCL(\Sigma)$; we now consider how to find a representation in $\PCVL(\Sigma)$. To do this, we need some additional structure on algebras, to indicate how they are composed from sequential components. This is provided by the standard algebraic idea of \emph{sub-direct product decomposition}.

\begin{definition}
A \emph{congruence} $\Theta$ on $A \in \PAlg(\Sigma)$ is an equivalence relation on $|A|$ satisfying the \emph{substitution property}:
\[ a_1 \Theta b_1, \ldots , a_n \Theta b_n \Imp (a_1, \ldots ,a_n)\sigma_A \Theta (b_1, \ldots ,b_n)\sigma_A \]
provided both sides are defined. A \emph{strong congruence} is a congruence with the property that each side in the above expression is defined iff the other is.
\end{definition}

\begin{definition}
The \emph{quotient algebra} of $A \in \PAlg(\Sigma)$ by a congruence $\Theta$, written $A/\Theta$, is defined as follows:
\[ |A/\Theta| = \{[a]_\Theta : a \in |A|\}, \quad   \mbox{the equivalence classes by $\Theta$.} \]
 \[ ([a_1], \ldots ,[a_n])\sigma_{A/\Theta} = [(b_1, \ldots ,b_n)\sigma_A] \]
if there exist some $b_1, \ldots ,b_n$ with $b_i \Theta a_i$, $i=1, \ldots ,n$, such that $b_1, \ldots ,b_n \in \dom \sigma_A$; and otherwise undefined.
\end{definition}

\begin{definition}
A \emph{sub-direct decomposition} of $A \in \PAlg(\Sigma)$ is a list $(\Theta_1, \ldots ,\Theta_n)$ of congruences on $A$ such that $A$ is isomorphic to a subalgebra of $A/\Theta_1 \times \cdots \times A/\Theta_n$.
\end{definition}

\begin{definition}
$\mathbf{\SDPA(\Sigma)}$ is the class of all $(A,(\Theta_1, \ldots ,\Theta_n))$ such that $A \in \PAlg(\Sigma)$ and $(\Theta_1, \ldots ,\Theta_n)$ is a sub-direct decomposition of $A$.
\end{definition}

\begin{remark}
An algebra can have many sub-direct decompositions. $\PAlg(\Sigma)$ is embedded in $\SDPA(\Sigma)$ by
$A \mapsto (A, (id_A))$.
\end{remark}


\noindent We now sharpen our definition of $F$ to respect product structure.

\begin{definition}
We define $F' : \PCVL(\Sigma) \to \SDPA(\Sigma)$:\\
\[ F'(L) = (F(L), (\ker \proj_1, \ldots ,\ker \proj_n)) \]
where $\proj_i$ is the i'th projection function on $L$, and $\ker f$ is the binary relation $f^{-1}f$.
\end{definition}

\begin{lemma}
\label{lemm4}
    For $L = \VFS(L_1, \ldots ,L_n) \in \PCVL(\Sigma)$,\\
    \[ F'(L)/(\ker \proj_i) \equiv F(\proj_i(L),\alpha{L_i}), \quad i = 1, \ldots ,n. \]
\end{lemma}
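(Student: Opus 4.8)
The plan is to exhibit the canonical comparison map between the two algebras and check it is an isomorphism of partial algebras; the only real content lies in matching the \emph{domains} of the unary operations on the two sides, and for that a small ``simulation'' argument inside the vector language $L$ is needed. Write $M = F(L)$, $\Theta = \ker\proj_i$, and $N = F(\proj_i(L),\alpha{L_i})$. First I would check $N$ is well defined, i.e. $(\proj_i(L),\alpha{L_i})\in\PCL(\Sigma)$: it is nonempty ($\myvec{\epsilon}\in L$ gives $\epsilon\in\proj_i(L)$), contained in $\alpha{L_i}^*$ (each $[\us]_i$ is a product of letters $\sigma/\alpha{L_i}$), and prefix closed --- any prefix of $[\us]_i$ equals $[\ut]_i$ for the corresponding prefix $\ut$ of $\us$ in $\Vops(\uL)^*$, and $\ut\in L$ by prefix closure of $L$. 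Then define $h:|M/\Theta|\to|N|$ by $h([\us]_\Theta)=[\us]_i=\proj_i(\us)$. Since $\Theta=\ker\proj_i$, $h$ is well defined and injective; it is surjective because $\proj_i(L)$ is by definition the image of $\proj_i$; and it sends $\epsilon_{M/\Theta}=[\myvec{\epsilon}]_\Theta$ to $\epsilon=\epsilon_N$. As the only operations of our ranked alphabet are the constant $\epsilon$ and the unary symbols $\sigma\in\Sigma$, it remains to show for each $\sigma\in\Sigma$ and each $\us\in L$ that $[\us]_\Theta\in\dom\sigma_{M/\Theta}\IFF\proj_i(\us)\in\dom\sigma_N$, together with commutation of $h$ with $\sigma$ where defined; I would split on whether $\sigma\in\alpha{L_i}$.

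For $\sigma\in\alpha{L_i}$: here $[\myvec{\sigma}]_i=\sigma$, $\sigma_M$ is the successor map $\us\mapsto\us\myvec{\sigma}$ with domain $\{\us:\us\myvec{\sigma}\in L\}$, and $\sigma_N$ is $s\mapsto s\sigma$ with domain $\{s:s\sigma\in\proj_i(L)\}$. Unfolding the quotient, $[\us]_\Theta\in\dom\sigma_{M/\Theta}$ iff some $\ut$ with $[\ut]_i=[\us]_i$ has $\ut\myvec{\sigma}\in L$, and then $([\us]_\Theta)\sigma_{M/\Theta}=[\ut\myvec{\sigma}]_\Theta$, which $h$ sends to $[\ut]_i\sigma=[\us]_i\sigma$; so the forward direction and the value are immediate. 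For the converse, given $[\us]_i\sigma\in\proj_i(L)$ pick $\ut\in L$ with $[\ut]_i=[\us]_i\sigma$, write $\ut=\myvec{\sigma}_1\cdots\myvec{\sigma}_m$ in $\Vops(\uL)^*$, let $k$ be the largest index with $\sigma_k\in\alpha{L_i}$ --- so $\sigma_k$ is the last letter of the nonempty string $[\ut]_i$, hence $\sigma_k=\sigma$ --- and set $\ut'=\myvec{\sigma}_1\cdots\myvec{\sigma}_{k-1}$. Then $\ut'$ and $\ut'\myvec{\sigma}=\myvec{\sigma}_1\cdots\myvec{\sigma}_k$ are both prefixes of $\ut$, hence lie in $L$, and $[\ut']_i=[\us]_i$ since the dropped factors contribute only $\sigma,\epsilon,\dots,\epsilon$ to coordinate $i$; so $\ut'$ witnesses definedness and yields the matching value under $h$.

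For $\sigma\notin\alpha{L_i}$: now $[\myvec{\sigma}]_i=\epsilon$ and $\sigma_N=\id_{\proj_i(L)}$ is total, so I must show $\sigma_{M/\Theta}$ is the total identity on $|M/\Theta|$. If moreover $\sigma\notin\bigcup_j\alpha{L_j}$ this is trivial: $\sigma_M=\id_L$, so $\us$ itself witnesses definedness with value $[\us]_\Theta$. The remaining subcase --- $\sigma\notin\alpha{L_i}$ but $\sigma\in\alpha{L_j}$ for some $j\ne i$ --- is the main obstacle: $\sigma_M$ is then a successor map, and while one sees readily that $\sigma_{M/\Theta}$, wherever defined, agrees with the identity on classes (because $\ut\myvec{\sigma}\,\Theta\,\ut$ when $[\myvec{\sigma}]_i=\epsilon$), one must still prove it is defined everywhere --- that from any $\us\in L$ some $\ut$ with $[\ut]_i=[\us]_i$ admits $\ut\myvec{\sigma}\in L$. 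The plan is to extract this from the explicit description $L=\Vops(\uL)^*\cap(L_1\times\cdots\times L_n)$ of $L=\VFS(L_1,\dots,L_n)$ together with prefix closure, reducing ``$\ut\myvec{\sigma}\in L$'' to realising $\sigma$ simultaneously in exactly the coordinates $k$ with $\sigma\in\alpha{L_k}$ while fixing coordinate $i$; this is the delicate point where the genuine work --- and, if an additional non-degeneracy hypothesis on $L$ turns out to be needed, the place it would enter --- resides. Once both cases are settled, $h$ is a bijective strong homomorphism, hence an isomorphism, for each $i$; everything outside the highlighted subcase is routine bookkeeping.
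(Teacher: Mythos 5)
Your overall route is the same as the paper's: you take the canonical bijection $[\us]_{\ker \proj_i} \mapsto [\us]_i$ and verify that it is a bijective strong homomorphism, with all the content concentrated in matching the domains of the unary operations. Your treatment of the case $\sigma \in \alpha L_i$ is complete and correct, and in fact supplies the prefix-closure argument (truncate a witness $\ut = \myvec{\sigma}_1 \cdots \myvec{\sigma}_m$ just before the last factor $\myvec{\sigma}_k$ with $\sigma_k \in \alpha L_i$) that the paper compresses into a single unexplained ``$\IFF$''. But you have left a genuine hole, and you say so yourself: in the subcase $\sigma \in \alpha L_j \setminus \alpha L_i$ for some $j \neq i$ you must show that $\sigma$ acts as a \emph{total} identity on the quotient, i.e.\ that every $\us \in L$ admits some $\ut$ with $[\ut]_i = [\us]_i$ and $\ut\myvec{\sigma} \in L$, and you only announce a plan for this rather than an argument. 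As it stands, the proposal does not prove the lemma.

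Your hedge about a possible extra non-degeneracy hypothesis is, however, exactly right: the missing step is not merely unproved but false in general. Take $\Sigma \supseteq \{a,b\}$, $L_1 = \{\epsilon\}$ with $\alpha L_1 = \{a\}$, $L_2 = \{\epsilon, b\}$ with $\alpha L_2 = \{b\}$, and $i = 2$, $\sigma = a$. Then $L = \VFS(L_1,L_2) = \{(\epsilon,\epsilon), (\epsilon,b)\}$, and $a_{F(L)} = \empset$ (no $\us\myvec{a}$ lies in $L$, since $a \notin L_1$), so $a$ is nowhere defined on $F(L)/\ker\proj_2$; yet $a \notin \alpha L_2$ makes $a$ the \emph{total} identity on $F(\proj_2(L), \alpha L_2)$, so no strong homomorphism between the two sides exists at all. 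The paper's own proof buries the same step inside the equivalence ``$\ut\myvec{\sigma} \in L$ for some $\ut \,(\ker \proj_i)\, \us$ $\IFF$ $[\us\myvec{\sigma}]_i \in \proj_i(L)$'', whose right-to-left direction in this subcase is precisely your unhandled claim. So the gap you located is a gap in the lemma itself --- it needs a side condition (for instance, that every letter of each $\alpha L_j$ can actually be fired from every reachable state of $L_j$, or a weakening of $\equiv$ to a non-strong isomorphism) --- and is not a defect peculiar to your write-up; everything else in your argument is correct and somewhat more detailed than the paper's.
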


\begin{proof}
We take $\phi : [\us]_{\ker \proj_i} \to [\us]_i$. Now
\[ \us(\ker \proj_i)\ut \IFF [\us]_i = [\ut]_i \]
so $\phi$ is well defined and injective, and obviously surjective.
Let $A = F'(L)/(\ker \proj_i)$ and $B = F(\proj_i(L))$. Then:
\[ \begin{array}{lcl}
& & \Def [\us]\sigma_A \\
    & \IFF &\ut \myvec{\sigma} \in L \quad \mbox{for some $\ut (\ker \proj_i) \us$} \\
    & \IFF &[\ut \myvec{\sigma}]_i = [\us \myvec{\sigma}]_i \in \proj_i(L) \\
    & \IFF &[\us]_i \in \dom \sigma_B \\
    & \IFF &\Def (\phi([\us]))\sigma_B.
    \end{array}
    \]
    
\noindent Moreover, $\phi(([\us\sigma_A])) = [\us]_i [\myvec{\sigma}]_i = (\phi([\us]))\sigma_B$. Thus, $\phi$ is a strong homomorphism.
\end{proof}

\begin{corollary}
    $(\ker \proj_1, \ldots ,\ker \proj_n)$ is a subdirect decomposition of $F(L)$; i.e. $F'(L)$ is well defined.
\end{corollary}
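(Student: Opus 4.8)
The plan is to read the corollary off Lemma~\ref{lemm4} and Theorem~\ref{charth}; essentially all the work has already been done, and what remains is bookkeeping.

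First I would record that each $\ker\proj_i$ is genuinely a congruence on $F(L)$, so that the quotients $F'(L)/(\ker\proj_i)$ appearing in Lemma~\ref{lemm4} are legitimate. This is immediate: the projection $\proj_i$ is a homomorphism $F(L) \to F(\proj_i(L),\alpha L_i)$, since on an element $\us$ it carries $(\us)\sigma_{F(L)} = \us\myvec{\sigma}$ to $[\us]_i[\myvec{\sigma}]_i = ([\us]_i)\sigma_{F(\proj_i(L))}$ (using $[\myvec{\sigma}]_i = \sigma/\alpha L_i$), and it preserves definedness because $\us\myvec{\sigma}\in L$ forces $[\us\myvec{\sigma}]_i\in\proj_i(L)$; and the kernel of any partial-algebra homomorphism has the substitution property. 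Since the signature $\Sigma$ is unary apart from the constant, this check is in any case entirely routine to do by hand.

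Now put $L_i := \proj_i(L)$ (carrying alphabet $\alpha L_i$). By the remark accompanying the definition of $\VFS$, every $L\in\PCVL(\Sigma)$ satisfies $L = \VFS(L_1,\ldots,L_n)$ with $L_i\in\PCL(\Sigma)$, so Theorem~\ref{charth} applies and gives $F(L) = \Ac\big(F(L_1)\times\cdots\times F(L_n)\big)$. Since $\Ac(C)$ is the minimal subalgebra of $C$ for every $C\in\PAlg(\Sigma)$, this displays $F(L)$ as a subalgebra of $F(L_1)\times\cdots\times F(L_n)$. By Lemma~\ref{lemm4}, $F(L_i)\cong F'(L)/(\ker\proj_i)$ for each $i$, and, the direct product being functorial up to isomorphism, these componentwise isomorphisms yield $F(L_1)\times\cdots\times F(L_n)\cong F'(L)/(\ker\proj_1)\times\cdots\times F'(L)/(\ker\proj_n)$. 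Composing the subalgebra inclusion with this isomorphism, $F(L)$ is isomorphic to a subalgebra of $F'(L)/(\ker\proj_1)\times\cdots\times F'(L)/(\ker\proj_n)$, which is exactly the statement that $(\ker\proj_1,\ldots,\ker\proj_n)$ is a sub-direct decomposition of $F(L)$; hence $F'(L) = (F(L),(\ker\proj_1,\ldots,\ker\proj_n))\in\SDPA(\Sigma)$, i.e.\ $F'$ is well defined.

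If one prefers the witnessing isomorphism to be the canonical sub-direct embedding $a\mapsto([a]_{\ker\proj_1},\ldots,[a]_{\ker\proj_n})$, one simply tracks an element: under $F(L)\hookrightarrow F(L_1)\times\cdots\times F(L_n)$ the element $\us$ maps to $([\us]_1,\ldots,[\us]_n)$, while the component isomorphism $\phi_i$ of Lemma~\ref{lemm4} sends $[\us]_{\ker\proj_i}$ to $[\us]_i$, so the composite of the inclusion with $\prod_i\phi_i^{-1}$ is precisely the canonical map. I do not expect a genuine obstacle anywhere: the only points needing a word are the congruence property (routine, and forced by unarity of the signature) and the licence to invoke Theorem~\ref{charth} at $L_i=\proj_i(L)$, which is exactly the $\VFS$-reconstruction already recorded when $\VFS$ was defined.
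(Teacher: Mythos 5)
Your proposal is correct and takes essentially the same route as the paper, whose proof is exactly the chain $F(L) = F(\VFS(\proj_1(L),\ldots,\proj_n(L))) = \Ac(F(\proj_1(L))\times\cdots\times F(\proj_n(L))) = \Ac(F(L)/(\ker\proj_1)\times\cdots\times F(L)/(\ker\proj_n))$, invoking the $\VFS$-reconstruction, Theorem~\ref{charth}, and Lemma~\ref{lemm4} in the same order, with $\Ac$ supplying the (minimal) subalgebra. Your explicit verification that each $\ker\proj_i$ is a congruence, and your tracking of the canonical sub-direct embedding, are details the paper leaves implicit but add nothing that changes the argument.
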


\begin{proof}
    For $L \in \PCVL(\Sigma)$:
  \[ \begin{array}{lcl}
  F(L) & = & F(\VFS(\proj_1(L), \ldots ,\proj_n(L)))\\
        & = & \Ac(F(\proj_1(L)) \times \cdots \times F(\proj_n(L)))\\
        & = & \Ac(F(L)/(\ker \proj_1) \times \cdots \times F(L)/(\ker \proj_n)).
 \end{array}
 \]
\end{proof}

We now define our vector analogue of $G$.
\begin{definition}
    $H : \SDPA(\Sigma) \to \PCVL(\Sigma)$.
    
    $H(A, (\Theta_1, \ldots ,\Theta_n)) = L$, where
    \[ \begin{array}{lcl}
       \aL & = & (G(A/\Theta_1), \ldots ,G(A/\Theta_n))\\
       L & = & \{\ut \in \Vops(\aL)^* : \Def t^A\} .
       \end{array}
       \]
\end{definition}

\begin{theorem}
\label{veccorrth}
\begin{enumerate}[(i)]
    \item $H(F(\VFS(L_1, \ldots ,L_n))) = \VFS(L_1, \ldots ,L_n)$
    
    \item For $A \in \SDPA(\Sigma)$, there is a unique strong homomorphism
  \[ h_A : F'(H(A)) \to A \]
        
    \item $H(\Ac(A)) = H(A)$
    
    \item $H(A, (\Theta_1, \ldots ,\Theta_n)) = \VFS(G(A/\Theta_1), \ldots ,G(A/\Theta_n))$.
\end{enumerate}
\end{theorem}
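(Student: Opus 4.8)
The plan is to follow the template of Theorem~\ref{corrth}: part~(iv) is the engine, (iii) is then immediate, (ii) copies the argument for Theorem~\ref{corrth}(ii), and (i) drops out of (iv) together with Lemma~\ref{lemm4}. Write $\alpha_B := \alpha(G(B))$ for $B \in \PAlg(\Sigma)$, and $\alpha_i := \alpha_{A/\Theta_i}$ for a fixed $(A,(\Theta_1,\ldots,\Theta_n)) \in \SDPA(\Sigma)$. Two preliminary observations carry most of the weight. \emph{First}: since any defined evaluation $t^B$ lies in $|\Ac(B)|$ and every symbol outside $\alpha_B$ acts there as the identity, induction on $|t|$ gives $\Def t^B \IFF \Def (t/\alpha_B)^B$ with equal values when defined, and hence (as a set of strings) $G(B) = \{\, s \in \alpha_B^* : \Def s^B \,\}$. \emph{Second}: since $A$ is isomorphic to a subalgebra of $A/\Theta_1 \times \cdots \times A/\Theta_n$, and since isomorphisms and subalgebra inclusions both preserve and reflect strong equality of evaluations, Lemma~\ref{lemm2} yields, for every $t$,
\[ \Def t^A \IFF \Def t^{A/\Theta_1 \times \cdots \times A/\Theta_n} \IFF \bigl(\Def t^{A/\Theta_i} \mbox{ for all } i\bigr) , \]
with matching values throughout.

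Before proving (iv) one must check that $H$ is well defined, i.e.\ that $\Def t^A$ and $t^A$ depend only on the vector $\ut \in \Vops(\alpha_1,\ldots,\alpha_n)^*$ determined by the $\Sigma$-word $t$, not on $t$ itself. Two words with the same vector differ by repeated transpositions of adjacent symbols $\sigma_1\sigma_2$ whose vector operations commute; in each factor $A/\Theta_j$ at least one of $\sigma_1,\sigma_2$ lies outside $\alpha_{A/\Theta_j}$, hence acts as the identity on $|\Ac(A/\Theta_j)|$, so such a transposition changes nothing in $A/\Theta_j$, nothing in the product, and --- by the second observation --- nothing in $A$. Now for (iv): the alphabet components of $H(A,(\Theta_1,\ldots,\Theta_n))$ and of $\VFS(G(A/\Theta_1),\ldots,G(A/\Theta_n))$ coincide by definition, so only the string parts need comparing. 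Given $\ut \in \Vops(\alpha_1,\ldots,\alpha_n)^*$, pick a representative $t$; then $[\ut]_i = t/\alpha_i$, and chaining the two observations,
\[ \ut \in H(A,(\Theta_1,\ldots,\Theta_n)) \IFF \Def t^A \IFF \bigl(\Def (t/\alpha_i)^{A/\Theta_i}\ \forall i\bigr) \IFF \bigl([\ut]_i \in G(A/\Theta_i)\ \forall i\bigr) ; \]
the final condition is precisely $\ut \in \Vops(\alpha_1,\ldots,\alpha_n)^* \cap \bigl(G(A/\Theta_1) \times \cdots \times G(A/\Theta_n)\bigr)$, i.e.\ $\ut \in \VFS(G(A/\Theta_1),\ldots,G(A/\Theta_n))$. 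Part (iii) is then immediate: by the first observation each $G(A/\Theta_i)$ depends only on $\Ac(A/\Theta_i)$, and the predicate $\Def t^A$ only on $\Ac(A)$, so $H$ is unchanged on replacing $A$ by $\Ac(A)$.

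For (ii), set $h_A(\ut) := t^A$; this is well defined by the invariance above. It is a strong homomorphism: reading off $\sigma_{F'(H(A))}$ from the definition of $F$ gives $\Def t^{F'(H(A))} \IFF \ut \in H(A) \IFF \Def t^A$, and $h_A\bigl((\ut)\sigma_{F'(H(A))}\bigr) = (t\sigma)^A = (h_A(\ut))\sigma_A$ with matching definedness. As $F'(H(A))$ is finitely generated, any homomorphism into $A$ must send $\ut = t^{F'(H(A))}$ to $t^A$, so $h_A$ is the unique one. Finally, for (i), apply (iv) to $A = F(L)$ with $L := \VFS(L_1,\ldots,L_n)$ and decomposition $(\ker\proj_1,\ldots,\ker\proj_n)$: Lemma~\ref{lemm4} gives $F(L)/\ker\proj_i \equiv F(\proj_i(L),\alpha L_i)$, so $G(F(L)/\ker\proj_i) = (\proj_i(L),\alpha L_i)$ by Theorem~\ref{corrth}(i), and therefore
\[ H(F'(\VFS(L_1,\ldots,L_n))) = \VFS(\proj_1(L),\ldots,\proj_n(L)) = \VFS(L_1,\ldots,L_n) , \]
the last step being the canonical-representation property of $\VFS$ recorded in its definition.

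The main obstacle is the well-definedness of $H$ (and hence of $h_A$): this is the one spot where the sub-direct structure is essential, and it relies on the second observation together with the (standard, but worth stating) trace-monoid fact that the congruence on $\Sigma$-words generated by the commuting transpositions is exactly the kernel of $t \mapsto \ut$. With that, the first observation, Lemma~\ref{lemm2}, Lemma~\ref{lemm4} and Theorem~\ref{corrth} in hand, the four parts become routine verifications.
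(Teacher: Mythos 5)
Your proof is correct, and for the central part (iv) it is essentially the paper's argument: the same chain of equivalences $\ut \in H(A,\myvec{\Theta}) \IFF \Def t^A \IFF \Def t^{A/\Theta_i}$ for all $i$ $\IFF [\ut]_i \in G(A/\Theta_i)$, justified by the sub-direct embedding, Lemma~\ref{lemm2} and Lemma~\ref{lemm6}. The differences are in the periphery, and they are real. For the step $\Def t^{A/\Theta_i} \IFF [\ut]_i \in G(A/\Theta_i)$ the paper routes through $F(G(A/\Theta_i))$ via Theorem~\ref{corrth}(ii) and Lemma~\ref{lemm1}, whereas you prove directly that $G(B) = \{s \in \alpha G(B)^* : \Def s^B\}$; both work, yours being marginally more self-contained. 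For part (i) the paper gives a direct proof from Theorem~\ref{charth} together with Lemmas~\ref{lemm1}, \ref{lemm2}, \ref{lemm5} and \ref{lemm6}, while you derive it as an instance of (iv) applied to $(F(L),(\ker\proj_1,\ldots,\ker\proj_n))$ using Lemma~\ref{lemm4}, Theorem~\ref{corrth}(i) and the canonical-representation property of $\VFS$; this is a genuinely different and arguably tidier route, though you should cite the Corollary to Lemma~\ref{lemm4} to license $(\ker\proj_1,\ldots,\ker\proj_n)$ as a sub-direct decomposition before invoking (iv). Finally, your explicit check that $\Def t^A$ depends only on $\ut$ (via commuting transpositions acting as the identity in each factor) addresses a well-definedness issue in the definition of $H$ that the paper passes over in silence; the trace-monoid fact you need is exactly the commutation Lemma the paper proves later, in Section~4, so nothing circular is involved. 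Part (ii) you spell out where the paper says ``as for Theorem~\ref{corrth}''; your version matches what that reference intends.
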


\begin{lemma}
\label{lemm5}
    For $L = \VFS(L_1, \ldots ,L_n) \in \PCVL(\Sigma)$: 
\[ \alpha{G(F(L)/\ker \proj_i)} = \alpha{L_i}, \quad i=1, \ldots ,n . \]
\end{lemma}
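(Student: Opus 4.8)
The plan is to combine Lemma~\ref{lemm4} with the fact that $G$ inverts $F$ on $\PCL(\Sigma)$ (Theorem~\ref{corrth}(i)); the only glue needed is that the alphabet extracted by $G$ depends on a partial algebra only up to isomorphism.

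Fix $i$ and set $M_i := (\proj_i(L),\alpha{L_i})$. First I would check $M_i\in\PCL(\Sigma)$: $L\neq\empset$ gives $\proj_i(L)\neq\empset$; since $L\subseteq\Vops(\alpha{L_1},\ldots,\alpha{L_n})^*$ and each $\sigma/\alpha{L_i}$ is $\epsilon$ or a single symbol of $\alpha{L_i}$, every $i$-th component $[\us]_i$ of a vector $\us\in L$ lies in $\alpha{L_i}^*$, so $\proj_i(L)\subseteq\alpha{L_i}^*$; and a prefix of a vector of $L$ is again a vector of $L$ whose $i$-th component is a prefix of the original, so $\proj_i(L)$ inherits prefix-closure. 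Lemma~\ref{lemm4} then gives $F(L)/\ker\proj_i\cong F(M_i)$, the isomorphism being $\phi\colon[\us]_{\ker\proj_i}\mapsto[\us]_i$. Second I would record the routine fact that an isomorphism $\psi\colon A\to B$ in $\PAlg(\Sigma)$ forces $\alpha(G(A))=\alpha(G(B))$: $\psi$ maps $\Ac(A)$ onto $\Ac(B)$ (it commutes with the operations generating the closure, as does its inverse), and being strong it carries $\dom\sigma_A$ bijectively onto $\dom\sigma_B$ while intertwining $\sigma_A$ with $\sigma_B$, so $\sigma_A=\id_{|\Ac(A)|}$ iff $\sigma_B=\id_{|\Ac(B)|}$; hence the sets defining the two alphabets coincide. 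Applying this to $\phi$ yields $\alpha(G(F(L)/\ker\proj_i))=\alpha(G(F(M_i)))$; then Theorem~\ref{corrth}(i) rewrites $G(F(M_i))$ as $M_i$, whose alphabet is $\alpha{L_i}$ by construction, which is the claim.

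I do not expect a real obstacle once Lemma~\ref{lemm4} is in hand. The single load-bearing ingredient is the isomorphism-invariance of $\alpha\circ G$, and that is only bookkeeping about strong homomorphisms and the inductive definition of $\Ac$; the one point to stay alert to is that $\alpha(G(A))$ is read off from $\Ac(A)$, not from $A$ itself, so every comparison of a $\sigma$-operation with the identity must be made on the carrier of the algebraic closure. If one would rather not isolate iso-invariance as a lemma, exactly the same conclusion follows by transporting Lemma~\ref{lemm3}(i) --- $\sigma_{F(M)}=\id\iff\sigma\notin\alpha{M}$ --- along $\phi$, using that $F(M_i)$ is finitely generated so that $|\Ac(F(M_i))|=\proj_i(L)$; this is the identical computation in different packaging.
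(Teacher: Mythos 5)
Your proof is correct and takes essentially the same route as the paper, whose entire argument is the one-line chain $\alpha G(F(L)/\ker \proj_i) = \alpha G(F(\proj_i(L),\alpha L_i)) = \alpha L_i$ --- i.e., transport along the isomorphism of Lemma~\ref{lemm4} followed by $G(F(M))=M$ from Theorem~\ref{corrth}(i). You have simply made explicit the two facts the paper leaves implicit, namely that $(\proj_i(L),\alpha L_i)\in\PCL(\Sigma)$ and that $\alpha\circ G$ is invariant under strong isomorphism.
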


\begin{proof}
    By Lemma~\ref{lemm4}, $\alpha{G(F(L)/\ker \proj_i)} = \alpha{G(F(\proj_i(L)))} = \alpha{L_i}$.
\end{proof}

\begin{lemma}
\label{lemm6}
For any $\ut = \myvec{\sigma}_1 \cdots \myvec{\sigma}_m \in \Vops(L_1, \ldots ,L_n)^*$:
\[ [\ut]_i = (\sigma_1 \cdots \sigma_m)/\alpha{L_i}, \quad i=1, \ldots ,n . \]
\end{lemma}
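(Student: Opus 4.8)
The plan is to prove the identity by a straightforward induction on $m$, the length of the product $\ut = \myvec{\sigma}_1 \cdots \myvec{\sigma}_m$, exploiting nothing more than the componentwise definition of vector concatenation, the definition of $\myvec{\sigma}$, and the distributivity of projection over concatenation established in Section~2.

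For the basis $m = 0$ we have $\ut = \myvec{\epsilon} = (\epsilon, \ldots, \epsilon)$, so $[\ut]_i = \epsilon$; on the other side the empty product $\sigma_1 \cdots \sigma_m$ is the empty string $\epsilon$, and $\epsilon/\alpha{L_i} = \epsilon$ by the definition of projection. So the two sides agree. For the induction step, write $\ut = \ut' \myvec{\sigma}_m$ with $\ut' = \myvec{\sigma}_1 \cdots \myvec{\sigma}_{m-1}$. Since concatenation of vectors of strings is taken componentwise, $[\ut]_i = [\ut']_i\,[\myvec{\sigma}_m]_i$. By the inductive hypothesis $[\ut']_i = (\sigma_1 \cdots \sigma_{m-1})/\alpha{L_i}$, and by the definition of $\myvec{\sigma}_m$ we have $[\myvec{\sigma}_m]_i = \sigma_m/\alpha{L_i}$. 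Hence $[\ut]_i = \bigl((\sigma_1 \cdots \sigma_{m-1})/\alpha{L_i}\bigr)\bigl(\sigma_m/\alpha{L_i}\bigr)$, and using the fact (noted immediately after the definition of projection) that $(st)/\alpha{L_i} = (s/\alpha{L_i})(t/\alpha{L_i})$ this equals $(\sigma_1 \cdots \sigma_m)/\alpha{L_i}$, which is what we want.

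I do not expect any real obstacle here: the whole content of the lemma is the interaction between elementwise concatenation of vectors and the projection maps, and both the relevant definitions and the distributivity of projection have already been set up. The only point that requires a moment's attention is the base case, namely recognising that the empty product on the left-hand side corresponds to the empty string on the right-hand side; once that is in place, the inductive step is a one-line calculation.
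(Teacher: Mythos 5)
Your proof is correct: the induction on $m$, using elementwise concatenation of vectors, the definition of $\myvec{\sigma}$, and the fact $(st)/\alpha L = (s/\alpha L)(t/\alpha L)$, is exactly the routine verification the lemma calls for. The paper in fact states Lemma~\ref{lemm6} without any proof, treating it as immediate, so your argument simply spells out the intended one-line calculation.
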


\begin{remark}
    This lemma shows that $L_1 \, || \cdots || \, L_n$ and $\VFS(L_1, \ldots ,L_n)$ comprise \emph{exactly} the same set of ``formal products'' of symbols in $\Sigma$; the difference being that for $L_1\, || \cdots ||\, L_n$, the products are interpreted freely, in $\Sigma^*$, while for $\VFS(L_1, \ldots ,L_n)$ they are interpreted in the non-free monoid $\Vops(\alpha{L_1}, \ldots ,\alpha{L_n})^*$.
\end{remark}

We now prove Theorem~\ref{veccorrth}.
\begin{proof} 
\begin{enumerate}[(i)]
    \item Firstly, $\Vops(H(F'(L))) = \Vops(L)$, by Lemma~\ref{lemm5}.
        Now for $\ut \in \Vops(L)^*$:
         \[ \begin{array}{lcll}
    & & \ut \in H(F'(L)) \\
            & \IFF & \Def t^{F'(L)} \equiv t^{F(L)} & \\
            & \IFF &\Def t^{\Ac(F(L_1) \times \cdots \times F(L_n))} & \mbox{Theorem~\ref{charth}} \\
            & \IFF &\Def t^{F(L_1) \times \cdots \times F(L_n)} & \\
            & \IFF &\Def t^{F(L_i)}, \quad i=1, \ldots ,n & \mbox{Lemma~\ref{lemm2}}\\
            & \IFF &t/\alpha{L_i} \in L_i, \quad i=1, \ldots ,n & \mbox{Lemma~\ref{lemm1}} \\
            & \IFF &[\ut]_i \in L_i, \quad i=1, \ldots ,n &   \mbox{Lemma~\ref{lemm6}} \\
            & \IFF & \ut \in L. &
            \end{array}
            \]
            
    \item As for Theorem~\ref{corrth}.
    
    \item Immediate.
    
    \item Call the left hand side $L$, the right hand side $L'$. $\Vops(L) = \Vops(L')$,
    by definition. Now, for $t \in \Vops(L)^*$:
    \[ \begin{array}{lcll}
    & & t \in L & \\
       &  \IFF & \Def t^A & \\
        & \IFF & \Def t^{A/\Theta_1 \times \cdots \times A/\Theta_n} & \mbox{$\Ac$ minimal} \\
        & \IFF &  \Def t^{A/\Theta_i},        \quad i=1, \ldots ,n       &  \mbox{Lemma~\ref{lemm2}} \\
        & \IFF & \Def t^{F(G(A/\Theta_i))}, \quad i=1, \ldots ,n    &  \mbox{Theorem~\ref{corrth}(ii)}\\
        & \IFF & t/\alpha{G(A/\Theta_i)} \in G(A/\Theta_i), \quad i=1, \ldots ,n &  \mbox{Lemma~\ref{lemm1}} \\
        & \IFF & [\ut]_i \in G(A/\Theta_i), \quad i=1, \ldots ,n & \mbox{Lemma~\ref{lemm6}} \\
        & \IFF & \ut \in G(A/\Theta_1) \times \cdots \times G(A/\Theta_n) & \\
        & \IFF & \ut \in L'. & 
        \end{array}
        \]
\end{enumerate}
\end{proof}

\section{Categorical Structure}

In this section, we introduce some more structure. We define morphisms in $\PCL(\Sigma)$, $\PCVL(\Sigma)$, $\PAlg(\Sigma)$ and $\SDPA(\Sigma)$, which become categories; and we extend $F$, $F'$, and $G$ to act on morphisms; they become functors. We re-examine the compositions of $F$ and $G$, and find that the same properties persist that were expressed in Theorem 2. In addition, we find that $F$ is left adjoint to $G$. Much of the development is reminiscent of categorical automata theory (see e.g. \cite{Gog72,AM74})\footnote{These references added in 2009.}. Thus, $F$ takes languages to a form of ``free realisation''; $G$ takes algebras, thought of as acceptors for which every element is a final state, to their behaviours,
i.e.~to the prefix-closed languages which they accept. Our choice of morphisms may be less standard. For a definition of ``simulations'' between languages thought of as behaviours, it turns out that the corresponding algebraic notion is that of \emph{derivor}, and \emph{derived homomorphism}.

\begin{definition}A \emph{simulation} of $L$ by $L'$, for $L,L' \in \PCL(\Sigma)$, is a function
\[ f : \aL \to (\aL')^* \]
extended to $L^*$ by
\[ f(st) = f(s)f(t) \]
which has the (weak) \emph{simulation property}:
\[ s \in L \Imp f(s) \in L' . \]
A \emph{strong simulation} of $L$ by $L'$ is a simulation $f : L \to L'$ with the additional \emph{strong simulation property}:
\[ f(s) \in L' \Imp s \in L . \]
\end{definition}

\begin{remark}
Intuitively, $f$ represents each atomic behavior of $L$ by a (possibly compound) behavior of $L'$; the simulation property ensures that every possible behaviour of $L$ is represented by a possible behaviour of $L'$; the strong simulation property provides a converse.
\end{remark}

\begin{fact}
Simulations and strong simulations are closed under composition of set functions.
\end{fact}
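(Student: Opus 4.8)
The plan is to verify directly that the set-function composite of two simulations is again a simulation (of the appropriate languages), and that the same holds with ``strong'' added throughout; closure under composition is then immediate.

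First I would fix $L, L', L'' \in \PCL(\Sigma)$ together with a simulation $f$ of $L$ by $L'$ and a simulation $g$ of $L'$ by $L''$. By definition, once extended by the clause $f(st) = f(s)f(t)$ (with $f(\epsilon)=\epsilon$), $f$ is a homomorphism of the free monoids $\aL^* \to (\alpha L')^*$, and likewise $g$ is a homomorphism $(\alpha L')^* \to (\alpha L'')^*$. Hence the ordinary composite $g \circ f : \aL^* \to (\alpha L'')^*$ is again a monoid homomorphism; and since $\aL^*$ is free on $\aL$, this composite coincides with the homomorphic extension of its own restriction $(g\circ f)|_{\aL} : \aL \to (\alpha L'')^*$. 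Thus $g \circ f$ has exactly the shape demanded of a simulation of $L$ by $L''$. This is the one point that genuinely needs to be said, and it amounts to no more than the observation that the operation ``extend a map on generators to a homomorphism of free monoids'' is compatible with composition; I would therefore dispatch it first.

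It then remains to check the simulation property, which is a two-step implication: for $s \in L$, the simulation property of $f$ gives $f(s) \in L'$, whence the simulation property of $g$ gives $(g \circ f)(s) = g(f(s)) \in L''$. For strong simulations one runs the same chain in reverse: assuming $(g\circ f)(s) = g(f(s)) \in L''$, the strong simulation property of $g$ (applied to the word $f(s) \in (\alpha L')^*$) yields $f(s) \in L'$, and then the strong simulation property of $f$ yields $s \in L$. Hence both the simulation property and its strong counterpart are inherited by $g \circ f$, and both classes of maps are closed under composition.

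I do not expect a real obstacle here: everything reduces to freeness of $\aL^*$ and two short implication chains. If anything deserves care it is only the bookkeeping noted above --- that composing the extended maps agrees with extending the composite on generators --- which is why I would make that the opening step rather than leaving it implicit.
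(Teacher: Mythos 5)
Your proof is correct and is exactly the direct verification the paper has in mind: it states this as a \emph{Fact} with no written proof, treating both the compatibility of ``extend on generators'' with composition (freeness of $\aL^*$) and the two implication chains as immediate. Nothing to add.
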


\begin{fact}
For $L \in \PCL(\Sigma)$, $\id_L$ is a strong simulation.
\end{fact}

It follows from the above facts that we can make the following definition.

\begin{definition}
$\CPCL$ is the category given by the following data:

\begin{center}
\begin{tabular}{ll}
    Objects:      &  $\PCL(\Sigma)$\\
    Morphisms:    &  Simulations\\
    Identity:     &  $\mathbf{1}_L = \id_L, L \in \PCL(\Sigma)$\\
    Composition:  &  inherited from \textbf{Set}\\
    \end{tabular}
    \end{center}
    
Moreover, we obtain a subcategory by restricting to strong simulations.
\end{definition}

\noindent We now consider how to extend these ideas to $\PCVL(\Sigma)$. Since $\Vops(\aL)^*$ is not free, we cannot always extend a map on the generators (i.e.~$\aL$) to a monoid homomorphism. However, the commutation relations are all that has to be checked for. This is made precise by the following Lemma.

\begin{lemma}
    For any identity in $\Vops(\uL)^*$
\[ \us = \us_1 \cdots \us_n = \ut_1 \cdots \ut_m = \ut \]
we have:
\begin{enumerate}[(i)]
    \item $n = m$
    
    \item each side can be obtained from the other as a composition of commutation relations in $\Vops(\uL)$.
\end{enumerate}
\end{lemma}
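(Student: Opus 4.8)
The plan is to prove both claims simultaneously by inducting on $n$, the length of the left-hand product, exploiting the structure of $\Vops(\uL)$: each generator $\myvec{\sigma}$ is non-$\epsilon$ at exactly the components $i$ with $\sigma \in \alpha L_i$, and $[\us]_i$ records the subsequence of the $\sigma_k$'s lying in $\alpha L_i$ (this is essentially Lemma~\ref{lemm6}, which I may invoke). First I would observe that both sides, read componentwise, give $[\us]_i = (\sigma_1 \cdots \sigma_n)/\alpha L_i$ and $[\ut]_i = (\tau_1 \cdots \tau_m)/\alpha L_i$ in $\Sigma^*$, so the hypothesis $\us = \ut$ says precisely that $\sigma_1 \cdots \sigma_n$ and $\tau_1 \cdots \tau_m$ have the same projection onto every $\alpha L_i$. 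For $(i)$, note that each $\sigma_k$ lies in at least one $\alpha L_i$ (since the generators of $\Vops(\uL)$ are indexed by $\bigcup_i \alpha L_i$), so the total number of symbols, counted with multiplicity across all components, is $\sum_i |[\us]_i|$ for the left side and $\sum_i |[\ut]_i|$ for the right; but a subtler count is needed since a symbol can contribute to several components. Cleaner: I would instead argue $n = m$ as a byproduct of the rewriting argument in $(ii)$, or by the auxiliary observation that $n = |\{k : \sigma_k \text{ occurs}\}|$ is recoverable from the multiset of projections.

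For the main claim $(ii)$ I would proceed by induction on $n$. The base case $n = 0$ forces $\us = \myvec{\epsilon}$, hence every $[\ut]_j = \epsilon$, hence (each $\tau_j$ being non-$\epsilon$ somewhere) $m = 0$, and there is nothing to rewrite. For the inductive step, look at the last generator $\us_n = \myvec{\sigma}$ on the left. Let $i_0$ be any index with $\sigma \in \alpha L_{i_0}$; then $[\us]_{i_0}$ ends in $\sigma$, so $[\ut]_{i_0}$ ends in $\sigma$, so some $\ut_j$ with $\tau_j = \sigma$ appears among the right-hand factors and is the \emph{last} right-hand factor that is non-$\epsilon$ at component $i_0$. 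The key sublemma is then: $\ut_j$ can be commuted rightward past all of $\ut_{j+1}, \ldots, \ut_m$ using commutation relations in $\Vops(\uL)$, i.e.\ $\ut_j$ is $\epsilon$ at every component where any of $\ut_{j+1}, \ldots, \ut_m$ is non-$\epsilon$, and vice versa. If that holds, we rewrite the right side to $\ut_1 \cdots \widehat{\ut_j} \cdots \ut_m \, \myvec{\sigma}$, cancel the trailing $\myvec{\sigma}$ against $\us_n$ (monoid cancellation holds componentwise in $\Sigma^*$), and apply the inductive hypothesis to $\us_1 \cdots \us_{n-1} = \ut_1 \cdots \widehat{\ut_j} \cdots \ut_m$.

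The heart of the argument — and the step I expect to be the main obstacle — is verifying that sublemma, namely that the chosen $\ut_j$ genuinely commutes past everything to its right. Suppose for contradiction some $\ut_\ell$, $\ell > j$, is non-$\epsilon$ at a component $i_1$ where $\ut_j$ is also non-$\epsilon$; write $\tau_j = \sigma$, $\tau_\ell = \rho$, with $\sigma, \rho \in \alpha L_{i_1}$. By maximality of $j$ at component $i_0$ we know $\ell$ is $\epsilon$ at $i_0$, so $\rho \notin \alpha L_{i_0}$, in particular $\rho \ne \sigma$. Now I compare the projections onto $\alpha L_{i_1}$: on the right, $\sigma$ (from position $j$) precedes $\rho$ (from position $\ell$); I must show this ordering is forced, i.e.\ contradicted by the left side or by choosing $i_0$ more carefully. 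The clean way is to choose $j$ to be the index of the last right-hand factor non-$\epsilon$ at component $i_1$ \emph{for every} relevant $i_1$ at once — which works because $[\us]_{i_1}$ and $[\ut]_{i_1}$ agree and end in the same symbol only if the last non-$\epsilon$-at-$i_1$ factor carries that symbol; one shows all components at which $\ut_j$ is non-$\epsilon$ have their \emph{last} occurrence precisely at $\ut_j$ by matching against the left-hand side, where $\us_n = \myvec{\sigma}$ is genuinely last at all its components. Making this simultaneous-maximality choice rigorous, and checking it is consistent (the same factor $\ut_j$ being last at each of its support components), is the delicate bookkeeping; once it is in place, commutation past the tail is automatic from the definition of the commutation relations, and both $(i)$ and $(ii)$ follow, with $n = m$ falling out of the induction since each rewriting step removes exactly one factor from each side.
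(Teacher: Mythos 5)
Your argument is correct and is essentially the paper's proof run in mirror image: the paper peels off the \emph{first} left-hand factor, locates its first occurrence on the right, and shows that occurrence commutes past everything \emph{before} it by comparing the first symbols of $[\us]_l$ and $[\ut]_l$ at a shared support component, whereas you peel off the \emph{last} factor and compare last symbols (the paper also gets $n=m$ up front by counting occurrences of each generator $\myvec{\sigma}$ in a single component where it is non-$\epsilon$, rather than extracting it from the induction as you do). The sublemma you flag as delicate closes in two lines and needs no ``simultaneous maximality'' machinery: for any component $i_1$ in the support of $\myvec{\sigma}$, $[\us]_{i_1}$ ends in $\sigma$, so the last right-hand factor with support at $i_1$ must be an occurrence of $\myvec{\sigma}$ (a generator is determined by the symbol it projects to); that occurrence cannot lie after position $j$, since every occurrence of $\myvec{\sigma}$ also has support at $i_0$ and $j$ was chosen maximal there; hence the last factor with support at $i_1$ is $\ut_j$ itself, so no $\ut_\ell$ with $\ell>j$ shares a support component with $\ut_j$ and the required commutations are all available.
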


\begin{proof}
    For each $\ua \in \Vops(\uL)$, the number of occurrences of $\ua$ in $s$ = number of occurrences in $\ut$; for, choose $i$ such that $[\ua]_i \ne \epsilon$. (We can always do this by definition of $\Vops(\uL)$). Then by $[\us] = [\ut]$, $[\us]_i = [\ut]_i$ as strings, and so certainly as multisets. So $m = n$, and $\ut = \us_{i_1} \cdots \us_{i_n}$, where $i$ is a permutation of $\{1, \ldots ,n\}$.
    
    Clearly it is sufficient to suppose that $\us_1 \ne \us_{i_1}$, and show that, if 
$j$ is the least number such that $\us_{i_j} = \us_1$, $\us_{i_j}$ commutes with 
$\us_{i_k}$ for all $k<j$. So suppose not, and let $k$ be the least index $<j$ such that 
$\us_{i_j}\us_{i_k} \ne \us_{i_k}\us_{i_j}$. Choose $l$ such that $[\us_{i_k}]_l \ne 
\epsilon \ne [\us_{i_j}]_l$. Then $[\us]_l$ has $s_1$ as its first element, while 
$[\ut]_l$ has $\{s_{i_p} : p$ is the least index $\le k < j$ such that $[\us_{i_p}]_l 
\ne \epsilon\}$ as its first element. This set is non-empty, since the set it minimises 
contains at least $k$, by the supposition that $k$ exists. We now have a contradiction, 
since $\us = \ut$, but $[\us]_l(1) = s_1 \ne s_{i_p} = [\ut]_l(1)$.
\end{proof}

\begin{corollary}
Every function
\[ f : \Vops(L) \to \Vops(L')^*, \qquad L,L' \in \PCVL(\Sigma) \]
which satisfies
\[ \ua \ub = \ub \ua \Imp f(\ua)f(\ub) = f(\ub)f(\ua), \qquad \ua,\ub \in \Vops(\uL) \]     
can be uniquely extended to a monoid homomorphism
\[ f : \Vops(L)^* \to \Vops(L')^* . \]
Such a function is said to \emph{preserve concurrency.}
\end{corollary}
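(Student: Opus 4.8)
The plan is to define the extension on representatives and then invoke the preceding Lemma to see that it descends to a well-defined map on $\Vops(L)^*$. Concretely, given an element of $\Vops(L)^*$ presented as a product $\us_1 \cdots \us_n$ with each $\us_i \in \Vops(\uL)$ (the empty product $\myvec{\epsilon}$ being presented by the empty sequence), I would \emph{tentatively} set
\[ f(\us_1 \cdots \us_n) = f(\us_1) \cdots f(\us_n) \in \Vops(L')^* , \]
the right-hand side being a product in $\Vops(L')^*$; for $n = 0$ this reads $f(\myvec{\epsilon}) = \myvec{\epsilon}$. Restricted to length-one products this agrees with the given $f$ on $\Vops(\uL)$, so it is genuinely an extension, and it is the only possible one: any monoid homomorphism agreeing with $f$ on the generators must send $\us_1 \cdots \us_n$ to $f(\us_1) \cdots f(\us_n)$, so uniqueness comes for free once well-definedness is established.

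The substantive point is \textbf{well-definedness}: if $\us_1 \cdots \us_n$ and $\ut_1 \cdots \ut_m$ denote the same element of $\Vops(L)^*$, then $f(\us_1) \cdots f(\us_n) = f(\ut_1) \cdots f(\ut_m)$ in $\Vops(L')^*$. This is exactly where the preceding Lemma does the work: it gives $n = m$ and tells us that the word $\ut_1 \cdots \ut_n$ is obtained from $\us_1 \cdots \us_n$ by a finite chain of commutation relations, i.e.\ a finite sequence of swaps of adjacent generators $\ua, \ub \in \Vops(\uL)$ with $\ua\ub = \ub\ua$. So it suffices to see that the value $f(\us_1) \cdots f(\us_n)$ is unchanged by one such swap; swapping $\us_i$ with $\us_{i+1}$ when $\us_i \us_{i+1} = \us_{i+1} \us_i$ replaces the factor $f(\us_i)f(\us_{i+1})$ by $f(\us_{i+1})f(\us_i)$, and these are equal precisely by the hypothesis that $f$ preserves concurrency. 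An induction on the length of the chain of swaps then finishes the step. (One should also note that each word appearing along the chain is again a legitimate element of $\Vops(\uL)^*$ — a swap of adjacent generators is still a product of generators — so applying the provisional definition at each stage makes sense.)

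It then remains to confirm that the resulting well-defined map is a monoid homomorphism. Preservation of the identity is immediate ($\myvec{\epsilon} \mapsto \myvec{\epsilon}$), and for a product $\us\ut$ with $\us = \us_1\cdots\us_n$ and $\ut = \ut_1 \cdots \ut_m$ one uses the concatenated representative $\us_1 \cdots \us_n \ut_1 \cdots \ut_m$ of $\us\ut$ to read off $f(\us\ut) = f(\us_1)\cdots f(\us_n)f(\ut_1)\cdots f(\ut_m) = f(\us)f(\ut)$, appealing to well-definedness to justify working with this particular representative. I do not anticipate a genuine obstacle here: the Lemma packages precisely the presentation-theoretic fact that $\Vops(\uL)^*$ is the quotient of the free monoid on $\Vops(\uL)$ by the commutation relations, so the corollary is essentially a formal consequence of it, the single-commutation invariance above being the one observation that actually has to be made.
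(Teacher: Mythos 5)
Your proposal is correct and follows exactly the route the paper intends: the corollary is left as an immediate consequence of the preceding Lemma, and your argument — define the extension on representatives, use the Lemma to reduce well-definedness to invariance under a single adjacent swap of commuting generators (which is precisely the concurrency-preservation hypothesis), then read off uniqueness and the homomorphism property from the generators — is the standard filling-in of that consequence. No gaps.
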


\begin{definition}
A \emph{simulation} of $L$ by $L'$, $L,L' \in \PCVL(\Sigma)$, is a concurrency-preserving function
\[ f : \Vops(L) \to \Vops(L')^* \]
extended to $\Vops(L)^*$ by
\[ f(\us \ut) = f(\us)f(\ut) \]
which has the \emph{simulation property}:
\[ \us \in L \Imp f(\us) \in L' . \]
A \emph{strong simulation} of $L$ by $L'$ is a simulation $f:L \to L'$ with the additional \emph{strong simulation property}:
\[ f(\us) \in L' \Imp \us \in L . \]
\end{definition}

\begin{fact}
    Simulations and strong simulations are closed under composition as set functions.
\end{fact}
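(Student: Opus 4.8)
The plan is to follow the pattern of the corresponding Fact for $\PCL(\Sigma)$, taking extra care where the non-freeness of $\Vops(-)^*$ intervenes. Let $f : L \to L'$ and $g : L' \to L''$ be simulations in $\PCVL(\Sigma)$; so $f : \Vops(L) \to \Vops(L')^*$ and $g : \Vops(L') \to \Vops(L'')^*$ preserve concurrency and, by the Corollary to the preceding Lemma, extend uniquely to monoid homomorphisms on $\Vops(L)^*$ and $\Vops(L')^*$ respectively. The candidate composite is $g \circ f$, viewed as the set map $\Vops(L) \to \Vops(L'')^*$ sending $\ua \mapsto g(f(\ua))$, where $g$ is applied via its homomorphism extension to the string $f(\ua) \in \Vops(L')^*$. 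The only point needing a genuine argument is that this composite is again legitimate, i.e.\ concurrency-preserving; the two simulation properties are then immediate.

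First I would check concurrency preservation of $g \circ f$ on the generators $\Vops(L)$. Suppose $\ua\ub = \ub\ua$ in $\Vops(\uL)^*$. Since $f$ preserves concurrency, $f(\ua)f(\ub) = f(\ub)f(\ua)$ as elements of $\Vops(L')^*$; applying the well-defined homomorphism $g$ to this identity gives $g(f(\ua))\,g(f(\ub)) = g(f(\ub))\,g(f(\ua))$, which is exactly the concurrency-preservation condition for $g \circ f$. Hence, by the Corollary, $g \circ f$ extends uniquely to a monoid homomorphism $\Vops(L)^* \to \Vops(L'')^*$. Moreover, the set-function composite of the homomorphism extensions of $f$ and $g$ is itself a monoid homomorphism $\Vops(L)^* \to \Vops(L'')^*$ agreeing with $g \circ f$ on generators, so by the uniqueness clause it \emph{is} that extension; thus composing the extended simulations as set functions does the expected thing. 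This is the step I expect to be the main obstacle --- not because it is hard, but because it is the only place where one must use that $\Vops(L')^*$, being a quotient of a free monoid rather than free, forces $g$ to respect the identification $f(\ua)f(\ub) = f(\ub)f(\ua)$; this is precisely what the preceding Lemma, and its Corollary, provide.

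It remains to verify the two closure properties, both of which are immediate chases. For $\us \in \Vops(L)^*$: if $\us \in L$ then $f(\us) \in L'$ by the simulation property of $f$, whence $g(f(\us)) \in L''$ by that of $g$, so $g \circ f$ is a simulation. If in addition $f$ and $g$ are strong, then $g(f(\us)) \in L''$ gives $f(\us) \in L'$ by the strong simulation property of $g$ and then $\us \in L$ by that of $f$, so $g \circ f$ is a strong simulation. (That $\id_L$ is a strong simulation for each $L$, so that these classes furnish the morphisms of a category, is clear: $\id_{\Vops(L)}$ trivially preserves concurrency and both properties hold with equality.)
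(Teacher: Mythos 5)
Your proposal is correct; the paper states this closure property as a bare \emph{Fact} with no written proof, and your argument supplies exactly the reasoning it leaves implicit. You rightly identify the one point that is not purely formal --- that $g \circ f$ must again preserve concurrency, which follows because the extension of $g$ to $\Vops(L')^*$ is a well-defined monoid homomorphism and so respects the identity $f(\ua)f(\ub) = f(\ub)f(\ua)$ --- and the remaining verifications (uniqueness of the extension, the weak and strong simulation properties) are handled correctly.
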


\begin{fact}
    The identity function $\id_L, L \in \PCVL(\Sigma)$, is a strong simulation.
\end{fact}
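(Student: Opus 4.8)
The plan is to unpack the definition of strong simulation for $\PCVL(\Sigma)$ and check each clause for $f = \id_L$. There are three things to verify: that $\id_L$ is concurrency-preserving (so that it extends to a monoid homomorphism on $\Vops(L)^*$), that the extended map coincides with the identity on $\Vops(L)^*$, and that it satisfies both the simulation property and the strong simulation property.

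First I would observe that $\id_L$, restricted to the generators, is the function $\Vops(L) \to \Vops(L)^* $ sending each generator $\ua$ to itself (viewed as a length-one product). This is trivially concurrency-preserving: if $\ua\ub = \ub\ua$ in $\Vops(\uL)^*$, then $\id_L(\ua)\id_L(\ub) = \ua\ub = \ub\ua = \id_L(\ub)\id_L(\ua)$, so the hypothesis of the preceding Corollary is met with equality, and $\id_L$ extends uniquely to a monoid homomorphism $\Vops(L)^* \to \Vops(L)^*$. Next, by uniqueness of that extension together with the fact that the identity monoid homomorphism $\Vops(L)^* \to \Vops(L)^*$ also restricts to $\ua \mapsto \ua$ on generators, the extension must be the identity map on all of $\Vops(L)^*$; equivalently, one checks directly by induction on the length of a product $\us = \myvec{\sigma}_1 \cdots \myvec{\sigma}_m$ that the extended map sends $\us$ to $\us$.

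Finally, with the extension identified as the set-theoretic identity on $\Vops(L)^*$, both remaining properties are immediate: for the simulation property, $\us \in L \Rightarrow \id_L(\us) = \us \in L$; and for the strong simulation property, $\id_L(\us) = \us \in L \Rightarrow \us \in L$. Hence $\id_L$ is a simulation satisfying the strong simulation property, i.e.\ a strong simulation, as required.

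There is no real obstacle here; the only point requiring a moment's care is the well-definedness step — checking that $\id_L$ preserves concurrency so that the extension of the preceding Corollary actually applies — since in $\PCVL(\Sigma)$, unlike $\PCL(\Sigma)$, the underlying monoid is not free and an arbitrary map on generators need not extend. But for the identity this condition holds for the trivial reason that $f(\ua)f(\ub)$ and $f(\ub)f(\ua)$ are literally $\ua\ub$ and $\ub\ua$, which are equal whenever $\ua\ub = \ub\ua$ by hypothesis.
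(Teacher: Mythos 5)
Your proof is correct: the paper states this Fact without proof, treating it as immediate, and your direct verification (identity on generators is trivially concurrency-preserving, extends to the identity monoid homomorphism by uniqueness, and both simulation properties then hold tautologically) is exactly the routine check the paper leaves to the reader. The one point you rightly flag as needing care — that concurrency-preservation must be verified because $\Vops(\uL)^*$ is not free — is handled correctly.
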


It follows from the above facts that we can make the following definition.

\begin{definition}
    $\CPCVL$  is the category defined as follows:
    
\begin{center}
\begin{tabular}{ll}
    Objects:      &  $\PCVL(\Sigma)$\\
    Morphisms:    &  Simulations\\
    Identity:     &  $\mathbf{1}_L = \id_L, L \in \PCVL(\Sigma)$\\
    Composition:  &  inherited from \textbf{Set}\\
    \end{tabular}
    \end{center}
    
\noindent Moreover, we obtain a subcategory of $\CPCVL$  by restricting to strong simulations. Finally, if we restrict to one-dimensional vector languages in $\CPCVL$ , we obtain $\CPCL$ as a full subcategory, under the natural identification.
\end{definition}

We now turn our attention to $\PAlg(\Sigma)$. To capture the idea of a simulation, we want a way of systematically construing the basic operations of one algebra in terms of (possibly ``compound'', i.e. derived) operations of another. But this is just the notion of a \emph{derivor}, defined by ADJ in their work on algebraic treatments of data abstractions \cite{ADJ78}, and itself essentially equivalent to the categorical idea of \emph{theory morphism}.

\begin{definition}
    We take $X = \{x_0, x_1, x_2, \ldots \}$ to be a set of ``variables'' disjoint with $\Sigma$. $X_n = \{x_0, \ldots , x_{n-1} \}$. For a ranked alphabet $\Sigma$, $W_\Sigma(X)$ is the algebra of terms with variables: it is equal to $W_{\Sigma'}$, where
\[ \begin{array}{lcl}
\Sigma'_0 &  = &  \Sigma_0 \cup X\\
\Sigma'_n & = & \Sigma_n, \quad n > 0 .
\end{array}
\]
The algebra of n-ary terms, or terms in n variables, $W_\Sigma(X_n)$, is defined similarly.
\end{definition}

\begin{definition}
For $t \in W_\Sigma(X_n), t_1, \ldots ,t_n \in W_\Sigma(X)$, \emph{simultaneous substitution} $t[t_1, \ldots ,t_n]$ is the result of replacing all occurrences in $t$ of $x_i$ by $t_i$, $i=1, \ldots ,n$. For a more formal definition, and a proof that substitution is associative, cf. \cite{ADJ77}.
    
Furthermore,
\[ t^A[a_1, \ldots ,a_n], \qquad t \in W_\Sigma(X_n), A \in \PAlg(\Sigma), a_1, \ldots ,a_n \in |A| \]
means the following:
extend $\Sigma$ to $\Sigma'$ by
\[ \begin{array}{lcl}
\Sigma'_0 &  = &  \Sigma_0 \cup \{a_1, \ldots ,a_n\} \\
\Sigma'_n & = & \Sigma_n, \quad n > 0 .
\end{array}
\]
and extend $A$ to a $\Sigma'$ algebra $A'$, by setting $a_{i}^{A} = a_i$, $i=1, \ldots ,n$.
Then 
\[ t^A[a_1, \ldots ,a_n] = (t[a_1, \ldots ,a_n])^{A'} . \]
\end{definition}

\begin{definition}
For ranked alphabets $\Sigma$, $\Sigma'$ a \emph{derivor} from $\Sigma$ to $\Sigma'$ is a family of functions $d = \{d_n\}_{n\ge0}$, where
\[ d_n : \Sigma_n \to W_{\Sigma'}(X_n) . \]
A derivor is extended to act on $W_\Sigma(X)$ by
\[ \begin{array}{lcll}
    d(x) & = &  x, & x \in X\\
    d(\sigma) & = & d_0(\sigma), & \sigma \in \Sigma_0\\
    d((t_1, \ldots ,t_n)\sigma) & = &  d_n(\sigma)[d(t_1), \ldots ,d(t_n)], & \sigma \in \Sigma_n .
    \end{array}
    \]
We shall only have occasion to consider derivors from $\Sigma$ to $\Sigma$.
\end{definition}

\begin{definition}
For $d: \Sigma \to \Sigma$ a derivor and $A \in \PAlg(\Sigma)$, $dA$, the \emph{derived algebra} from $A$ by $d$ is defined as follows:
\[ \begin{array}{lcl}
|dA| & = &  |A|\\                      
\sigma_{dA} & = &  \lambda(a_1, \ldots ,a_n) . \, (d\sigma)^A[a_1, \ldots ,a_n], \qquad \sigma \in \Sigma_n .
\end{array}
\]
\end{definition}

\begin{definition}
A \emph{derived homomorphism} from $A$ to $B$, $A,B \in \PAlg(\Sigma)$, is a pair $(d,\phi)$ such that:
    
\begin{enumerate}[(i)]
    \item $d$ is a derivor from $\Sigma$ to $\Sigma$.
    \item $\phi$ is a homomorphism from $A$ to $dB$.
\end{enumerate}

\noindent A \emph{strong} derived homomorphism is a derived homomorphism $(d,\phi)$ such that $\phi$ is a strong homomorphism.
\end{definition}

\begin{definition}
The \emph{identity derivor} $\Id$ is defined by
\[ \Id_n : \sigma \mapsto (x_0, \ldots ,x_{n-1})\sigma, \qquad n \ge 0 . \]
\end{definition}

\begin{fact}
$\Id \, t = t$ for $t \in W_\Sigma(X)$.
\end{fact}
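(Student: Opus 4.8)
The statement to prove is $\Id\, t = t$ for $t \in W_\Sigma(X)$.

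The plan is to proceed by structural induction on the term $t \in W_\Sigma(X)$, following the recursive definition of how a derivor acts on $W_\Sigma(X)$ together with the defining clauses of the identity derivor $\Id$.

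First I would handle the base cases. If $t = x \in X$, then by the first clause in the extension of a derivor to $W_\Sigma(X)$ we have $d(x) = x$ for any derivor, so in particular $\Id\, x = x$. If $t = \sigma \in \Sigma_0$, then $\Id\, \sigma = \Id_0(\sigma)$, and by the definition of the identity derivor $\Id_0(\sigma) = (\,)\sigma = \sigma$ (the empty tuple of arguments, i.e.\ $\sigma$ itself as a nullary term); so again $\Id\, \sigma = \sigma$.

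For the inductive step, suppose $t = (t_1, \ldots, t_n)\sigma$ with $\sigma \in \Sigma_n$, and assume as inductive hypothesis that $\Id\, t_i = t_i$ for $i = 1, \ldots, n$. By the extension clause for derivors, $\Id\, t = \Id_n(\sigma)[\Id\, t_1, \ldots, \Id\, t_n]$. By the definition of $\Id$, $\Id_n(\sigma) = (x_0, \ldots, x_{n-1})\sigma$. Hence $\Id\, t = (x_0, \ldots, x_{n-1})\sigma\,[\Id\, t_1, \ldots, \Id\, t_n]$, and by the inductive hypothesis this equals $(x_0, \ldots, x_{n-1})\sigma\,[t_1, \ldots, t_n]$. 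It remains to observe that substituting $t_i$ for $x_{i-1}$ in the term $(x_0, \ldots, x_{n-1})\sigma$ yields exactly $(t_1, \ldots, t_n)\sigma = t$, which is immediate from the definition of simultaneous substitution (since the variables $x_0, \ldots, x_{n-1}$ occur exactly once each, in the appropriate positions).

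The proof is essentially routine; the only mild subtlety — and the closest thing to an obstacle — is being careful about the off-by-one indexing between the variables $x_0, \ldots, x_{n-1}$ used in $W_\Sigma(X_n)$ and the arguments $t_1, \ldots, t_n$ in the substitution $t[t_1, \ldots, t_n]$, and relying on the (cited) fact that simultaneous substitution behaves as expected on a term in which each variable appears exactly once. Once that bookkeeping is in place, structural induction closes the argument.
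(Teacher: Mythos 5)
Your proof is correct: the paper states this as an immediate fact with no proof given, and your structural induction on $t$, unwinding the extension clauses for derivors and the definition of $\Id_n(\sigma) = (x_0,\ldots,x_{n-1})\sigma$ together with the behaviour of simultaneous substitution, is exactly the routine verification the author leaves implicit. No discrepancy with the paper's approach.
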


\noindent Note that homomorphisms can be embedded in derived homomorphisms by
\[ \phi \mapsto (\Id,\phi) . \]
    
\begin{lemma}
\label{lemm7}
For $A,B \in \PAlg(\Sigma)$, $\phi : A \to B$ a homomorphism, $t \in W_\Sigma(X_n)$, and $a_1, \ldots ,a_n \in |A|$:
\[ \phi(t^A[a_1, \ldots ,a_n]) = t^B[\phi(a_1), \ldots ,\phi(a_n)] \]
where definedness on the left hand side implies definedness on the right. If $\phi$ is strong, the equation holds with strong equality.
\end{lemma}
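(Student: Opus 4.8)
The plan is to prove Lemma~\ref{lemm7} by structural induction on the term $t \in W_\Sigma(X_n)$, treating the statement as an assertion about the evaluation map and exploiting the fact that a homomorphism $\phi : A \to B$ interacts with the basic operations of $\Sigma$ in the prescribed way. First I would unfold the meaning of $t^A[a_1,\ldots,a_n]$ according to the definition: it is $(t[a_1,\ldots,a_n])^{A'}$, where $A'$ extends $A$ by the nullary operations $a_i^{A'} = a_i$. Correspondingly $\phi$ extends to a homomorphism $\phi' : A' \to B'$ where $B'$ interprets the new constants as $\phi(a_i)$; since $\phi(a_i^{A'}) = \phi(a_i) = a_i^{B'}$, this extension is compatible with $\phi$ being a homomorphism, and it is strong iff $\phi$ is. So the lemma reduces to the variable-free statement: for a homomorphism $\psi : A' \to B'$ and a closed term $u \in W_{\Sigma'}$, $\psi(u^{A'}) = u^{B'}$, with definedness on the left implying definedness on the right (and strong equality when $\psi$ is strong). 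This is essentially Fact (``If $A$ is finitely generated, \ldots'') specialised, but since $A'$ need not be finitely generated I would prove it directly by induction.

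The induction on $u \in W_{\Sigma'}$ has two cases. \textbf{Base case:} $u = \sigma \in \Sigma'_0$. If $u^{A'}$ is defined, it equals $\sigma_{A'}$, and $\psi(\sigma_{A'}) = \sigma_{B'}$ by the homomorphism property applied to the nullary operation (the homomorphism equation with $n=0$ reads $\phi(\sigma_A) = \sigma_B$ whenever the left side is defined, and definedness on the left forces definedness on the right). \textbf{Inductive step:} $u = (u_1,\ldots,u_k)\sigma$ with $\sigma \in \Sigma'_k = \Sigma_k$. Suppose $u^{A'}$ is defined; then each $u_j^{A'}$ is defined and $(u_1^{A'},\ldots,u_k^{A'}) \in \dom\sigma_{A'}$. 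By the induction hypothesis each $\psi(u_j^{A'}) = u_j^{B'}$ and each $u_j^{B'}$ is defined. The homomorphism property of $\psi$ then gives $\psi((u_1^{A'},\ldots,u_k^{A'})\sigma_{A'}) = (\psi(u_1^{A'}),\ldots,\psi(u_k^{A'}))\sigma_{B'} = (u_1^{B'},\ldots,u_k^{B'})\sigma_{B'} = u^{B'}$, with definedness on the left implying definedness on the right — exactly what is needed. For the strong case, one runs the argument in reverse: if $u^{B'}$ is defined, the induction hypothesis (now a strong equality) gives that each $u_j^{A'}$ is defined with $\psi(u_j^{A'}) = u_j^{B'}$, and strongness of $\psi$ gives $(u_1^{A'},\ldots,u_k^{A'}) \in \dom\sigma_{A'}$, hence $u^{A'}$ is defined; combined with the forward direction this yields the strong equality.

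Translating back, $u = t[a_1,\ldots,a_n]$ and $u^{A'} = t^A[a_1,\ldots,a_n]$, $u^{B'} = t^B[\phi(a_1),\ldots,\phi(a_n)]$, so the conclusion is precisely the statement of the lemma. I expect the only real subtlety to be bookkeeping: making sure the extension of $\phi$ from $A,B$ to $A',B'$ is genuinely a homomorphism (respectively strong homomorphism), which hinges on the new constants being matched up correctly by construction, and on the fact that ``homomorphism'' imposes no further constraint on nullary operations beyond $\phi(\sigma_A)=\sigma_B$ when defined. Everything else is a routine structural induction that mirrors the definitions of evaluation and of homomorphism clause-for-clause; there is no genuine obstacle, only the need to carry the two implications (ordinary and strong) through the induction in tandem.
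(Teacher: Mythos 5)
Your proof is correct. The paper states Lemma~\ref{lemm7} without proof, treating it as routine, and your argument --- reducing the substitution instance to evaluation of a closed term over the extended signature $\Sigma'$, checking that $\phi$ remains a (strong) homomorphism on the added constants since $\phi(a_i^{A'}) = \phi(a_i) = a_i^{B'}$, and then running a structural induction that carries the weak definedness implication and, for strong $\phi$, its converse in tandem --- is exactly the standard argument the paper evidently intends, with the bookkeeping points you flag (that the nullary case of the homomorphism condition gives the base case, and that naming the new constants $a_i$ rather than $\phi(a_i)$ in $B'$ is a harmless renaming) handled correctly.
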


\begin{lemma}
\label{lemm8}
    For $d$ a derivor, and other notation as in Lemma~\ref{lemm7},
\[ t^{dA}[a_1, \ldots ,a_n] \equiv dt^A[a_1, \ldots ,a_n] . \]
\end{lemma}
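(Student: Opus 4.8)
The plan is to prove the lemma by structural induction on $t \in W_\Sigma(X_n)$, after first isolating a routine auxiliary fact about the interaction of substitution with evaluation. Writing $a_1, \ldots ,a_n \in |A|$ throughout, the auxiliary fact is: for $u \in W_\Sigma(X_m)$ and terms $v_1, \ldots ,v_m \in W_\Sigma(X_n)$,
\[ (u[v_1, \ldots ,v_m])^A[a_1, \ldots ,a_n] \equiv u^A[v_1^A[a_1, \ldots ,a_n], \ldots ,v_m^A[a_1, \ldots ,a_n]] . \]
This is immediate by induction on $u$: the variable case holds by the definitions of substitution and of the evaluation of variables, the constant case is trivial, and the operator case unfolds both sides through the same partial operation of $A$, so the strong equality is preserved. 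Alternatively it can be read off from the associativity of substitution of \cite{ADJ77} together with the reduction of $t^A[\cdots]$ to evaluation of a closed term in the extended algebra $A'$.

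The base cases of the main induction are direct. If $t = x_i$, then $d(x_i) = x_i$, and since $|dA| = |A|$ both $t^{dA}[a_1, \ldots ,a_n]$ and $(dt)^A[a_1, \ldots ,a_n]$ equal $a_i$. If $t = \sigma \in \Sigma_0$, then $dt = d_0(\sigma)$ contains no variables, so $(dt)^A[a_1, \ldots ,a_n] = (d\sigma)^A$, while by the definition of the derived algebra $t^{dA}[a_1, \ldots ,a_n] = \sigma_{dA} = (d\sigma)^A$ as well, with the same definedness.

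For the inductive step, let $t = (t_1, \ldots ,t_m)\tau$ with $\tau \in \Sigma_m$ and $t_1, \ldots ,t_m \in W_\Sigma(X_n)$. Working with strong equality at each step: by the definition of evaluation, $t^{dA}[a_1, \ldots ,a_n] \equiv (t_1^{dA}[a_1, \ldots ,a_n], \ldots ,t_m^{dA}[a_1, \ldots ,a_n])\tau_{dA}$; by the inductive hypothesis this is $\equiv ((dt_1)^A[a_1, \ldots ,a_n], \ldots ,(dt_m)^A[a_1, \ldots ,a_n])\tau_{dA}$; by the definition of $\tau_{dA}$ in the derived algebra this is $\equiv (d\tau)^A[(dt_1)^A[a_1, \ldots ,a_n], \ldots ,(dt_m)^A[a_1, \ldots ,a_n]]$; by the auxiliary fact (with $u = d\tau$ and $v_j = dt_j$) this is $\equiv ((d\tau)[dt_1, \ldots ,dt_m])^A[a_1, \ldots ,a_n]$; and finally $(d\tau)[dt_1, \ldots ,dt_m] = d((t_1, \ldots ,t_m)\tau) = dt$ by the definition of the derivor's action on $W_\Sigma(X)$, giving $(dt)^A[a_1, \ldots ,a_n]$, as required.

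The one point to watch is the consistent treatment of definedness: every link in the chain above must hold with strong equality, not merely conditional equality. This is genuinely the case, because the definitions of evaluation, of the operations of the derived algebra, and of the auxiliary fact are all phrased via $\equiv$; but it is precisely the step where care is needed, since an argument that tacitly assumed definedness would only yield a weaker one-directional version of the lemma.
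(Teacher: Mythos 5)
The paper states Lemma~\ref{lemm8} without proof, treating it as a routine consequence of the definitions, so there is no argument of the author's to set yours against. Your strategy --- structural induction on $t$ mediated by a substitution--evaluation lemma --- is the standard way to discharge it, and the skeleton is right: the variable and constant cases, the unfolding of $\tau_{dA}$ via the definition of the derived algebra, and the final appeal to $d((t_1,\ldots,t_m)\tau)=(d\tau)[dt_1,\ldots,dt_m]$ are all correct.

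However, the one point you single out as needing care --- that every link holds with strong equality --- is exactly where there is a genuine gap, and your assertion that ``this is genuinely the case'' does not hold as stated. Your auxiliary fact $(u[v_1,\ldots,v_m])^A[a_1,\ldots,a_n] \equiv u^A[v_1^A[a_1,\ldots,a_n],\ldots,v_m^A[a_1,\ldots,a_n]]$ fails under strong equality when $u$ omits some variable $x_j$ and $v_j^A[a_1,\ldots,a_n]$ is undefined: the right-hand side is then undefined (forming $u^A[b_1,\ldots,b_m]$ requires all $b_i \in |A|$, and evaluation is strict), while the left-hand side may be defined, since $v_j$ simply does not occur in $u[v_1,\ldots,v_m]$. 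The defect propagates to the lemma itself: take $\sigma \in \Sigma_1$ with $d\sigma = \epsilon$ (a closed term) and $t=(t_1)\sigma$ with $t_1^{dA}$ undefined; then $t^{dA}$ is undefined because evaluation in $dA$ is strict in $t_1^{dA}$, yet $dt = \epsilon[dt_1] = \epsilon$ and $(dt)^A = \epsilon_A$ is defined. So for arbitrary derivors only the implication ``left side defined $\Rightarrow$ right side defined and equal'' survives, not $\equiv$. What rescues the statement where the paper actually uses it is restriction (D2) on canonical derivors, $d\sigma \in W_\Sigma(X) - W_\Sigma$, which for the unary signature forces $x_0$ to occur in $d\sigma$, so undefinedness of a subterm's value always propagates through the substitution. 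You should either add the hypothesis that each $d_m(\tau)$ contains every variable $x_0,\ldots,x_{m-1}$ (automatic for canonical derivors here), or weaken both the auxiliary fact and the conclusion to the one-directional statement. The overstatement is admittedly inherited from the paper, which asserts $\equiv$ without proof; but since you explicitly claimed the strong-equality bookkeeping goes through, your proof needs this qualification.
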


\begin{lemma}
For $(d,\phi) : A \to B$, $(d', \phi'): B \to C$ derived homomorphisms:
\[ \begin{array}{cll}
    & \phi' \circ \phi((a_1, \ldots ,a_n)\sigma_A) & \\
    = &  \phi'(\phi((a_1, \ldots ,a_n)\sigma_A)) & \\
    = &  \phi'(d\sigma^B[\phi(a_1), \ldots ,\phi(a_n)]) & \mbox{$(d,\phi)$ a derived hom} \\
    = &  (d\sigma)^{d'C}[\phi' \circ \phi(a_1), \ldots ,\phi' \circ \phi(a_n)] & \mbox{Lemma~\ref{lemm7}} \\
    = &  d'(d\sigma)^C[\phi' \circ \phi(a_1), \ldots ,\phi' \circ \phi(a_n)] & \mbox{Lemma~\ref{lemm8}} \\
    = &  (d'd)\sigma^C[\phi' \circ \phi(a_1), \ldots ,\phi' \circ \phi(a_n)] & \mbox{associativity of substitution} \\
    = &  (\phi' \circ \phi(a_1), \ldots ,\phi' \circ \phi(a_n))\sigma_{d'dC} & \mbox{definition of derived algebra}.
    \end{array}
    \]
\end{lemma}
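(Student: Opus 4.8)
The content of the lemma is that \emph{derived homomorphisms are closed under composition}: from $(d,\phi):A\to B$ and $(d',\phi'):B\to C$ one should obtain a derived homomorphism $A\to C$, and the displayed chain of equalities identifies it as $(d'd,\,\phi'\circ\phi)$, where the \emph{composite derivor} $d'd$ is the family defined componentwise by $(d'd)_n(\sigma)=d'(d_n(\sigma))$. So the plan splits into two tasks: (a) check that $d'd$ really is a derivor from $\Sigma$ to $\Sigma$; and (b) check that $\phi'\circ\phi$ is a homomorphism $A\to(d'd)C$, which is exactly the asserted calculation.

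For (a): since $d_n(\sigma)\in W_\Sigma(X_n)$ and the extension of $d'$ to $W_\Sigma(X)$ fixes every variable, $d'(d_n(\sigma))\in W_\Sigma(X_n)$, so $(d'd)_n$ has the correct codomain. I would then record the identity $(d'd)\,t=d'(d\,t)$ for all $t\in W_\Sigma(X)$ --- that the extension to terms of the composite derivor agrees with the composite of the two extensions --- proved by a short induction on $t$ whose only non-formal step appeals to associativity of simultaneous substitution (cited to ADJ). This identity is precisely what underwrites the step labelled ``associativity of substitution'' in the calculation, and I expect it to be the single genuinely non-mechanical point; everything else is unfolding of definitions together with Lemmas~\ref{lemm7} and~\ref{lemm8}.

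For (b): unwind $\phi'\circ\phi\bigl((a_1,\ldots,a_n)\sigma_A\bigr)$ in stages. Since $\phi$ is a homomorphism $A\to dB$, applying $\phi$ and then the definition of the derived algebra $dB$ turns this into $(d\sigma)^B[\phi(a_1),\ldots,\phi(a_n)]$. Now apply $\phi'$, a homomorphism $B\to d'C$, and push it through the term evaluation by Lemma~\ref{lemm7}, obtaining $(d\sigma)^{d'C}[\phi'\phi(a_1),\ldots,\phi'\phi(a_n)]$; rewrite evaluation in the derived algebra $d'C$ by Lemma~\ref{lemm8} as $(d'(d\sigma))^C[\phi'\phi(a_1),\ldots]$; replace $d'(d\sigma)$ by $(d'd)\sigma$ using the identity from (a); and fold back up by the definition of the derived algebra $(d'd)C$ to land on $(\phi'\phi(a_1),\ldots,\phi'\phi(a_n))\sigma_{(d'd)C}$. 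That is the homomorphism equation for $\phi'\circ\phi$ into $(d'd)C$.

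It remains to track definedness. Each homomorphism hypothesis and each of Lemmas~\ref{lemm7} and~\ref{lemm8} carries the clause ``definedness on the left implies definedness on the right'', and such one-way implications compose, so the whole chain preserves it; hence $(d'd,\phi'\circ\phi)$ is a derived homomorphism. If in addition $\phi$ and $\phi'$ are strong, then Lemma~\ref{lemm7} in its strong form, and Lemma~\ref{lemm8} (which is already a strong equality), make every step a strong equality, so $\phi'\circ\phi$ is a strong homomorphism and we get a strong derived homomorphism. Finally I would note that $(\Id,\id_A)$ is a two-sided identity for this composition --- immediate from the Fact $\Id\,t=t$, from $\Id A=A$, and from $d\,\Id=d=\Id\,d$ --- so that $\PAlg(\Sigma)$ with (strong) derived homomorphisms as morphisms forms a category, which is the point of the lemma in context.
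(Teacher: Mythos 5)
Your calculation in part (b) is exactly the paper's own chain of equalities, step for step and with the same justifications (definition of derived homomorphism and derived algebra, Lemma~\ref{lemm7}, Lemma~\ref{lemm8}, associativity of substitution), so the proposal is correct and takes essentially the same approach. The extra material you supply --- that $d'd$ is a derivor with $(d'd)t = d'(dt)$, the definedness tracking, the strong case, and the identity $(\Id,\id_A)$ --- is left implicit in the paper (the last point appears only later, via the canonical derivor $\Id_A$) and is a reasonable, correct supplement rather than a different route.
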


\begin{corollary}
    Derived homomorphisms are closed under elementwise functional composition.
\end{corollary}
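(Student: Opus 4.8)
The plan is to show that the elementwise composite of two derived homomorphisms $(d,\phi)\colon A\to B$ and $(d',\phi')\colon B\to C$ is again a derived homomorphism, namely the pair $(d'd,\;\phi'\circ\phi)\colon A\to C$, where $d'd$ is the derivor with $(d'd)_n(\sigma)=d'\bigl(d_n(\sigma)\bigr)$ and $\phi'\circ\phi$ is ordinary composition of set functions. This last is legitimate because $|dA|=|A|$ for every derivor $d$, so the two functions compose on the right carriers: $\phi$ maps $|A|$ into $|dB|=|B|$, and then $\phi'$ maps $|B|$ into $|d'C|=|C|=|(d'd)C|$.

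First I would check that $d'd$ is genuinely a derivor from $\Sigma$ to $\Sigma$, i.e. that $(d'd)_n(\sigma)\in W_\Sigma(X_n)$. Since $d_n(\sigma)\in W_\Sigma(X_n)$ and a derivor fixes every variable (clause $d(x)=x$), applying $d'$ only rewrites operator symbols into larger $\Sigma$-terms and never introduces a variable outside $\{x_0,\dots,x_{n-1}\}$; hence $d'(d_n(\sigma))$ still lies in $W_\Sigma(X_n)$, and associativity of substitution is what makes the composed derivor behave coherently on all of $W_\Sigma(X)$. Second --- the heart of the matter --- I would verify that $\phi'\circ\phi$ is a homomorphism from $A$ to $(d'd)C$. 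The algebra here is precisely the chain of equalities displayed in the Lemma immediately preceding the Corollary: it rewrites $\phi'\circ\phi\bigl((a_1,\dots,a_n)\sigma_A\bigr)$ step by step into $\bigl(\phi'\circ\phi(a_1),\dots,\phi'\circ\phi(a_n)\bigr)\sigma_{(d'd)C}$, using in turn the derived-homomorphism property of $(d,\phi)$, Lemma~\ref{lemm7} applied to $\phi'$ and the term $d\sigma$, Lemma~\ref{lemm8} relating $(d\sigma)^{d'C}$ to $d'(d\sigma)^{C}$, and associativity of substitution to fold $d'(d\sigma)$ into $(d'd)\sigma$.

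The step I expect to be the real obstacle is not this algebra but the accompanying \emph{definedness} bookkeeping: a homomorphism must also satisfy ``defined on the left $\Rightarrow$ defined on the right'', and this has to be tracked along the whole chain. Concretely, definedness of $(a_1,\dots,a_n)\sigma_A$ forces definedness of $(d\sigma)^B[\phi(a_i)]$ because $\phi\colon A\to dB$ is a homomorphism; the definedness clause of Lemma~\ref{lemm7} for $\phi'$ then forces definedness of $(d\sigma)^{d'C}[\phi'\phi(a_i)]$; and Lemma~\ref{lemm8} transports this to $((d'd)\sigma)^{C}[\phi'\phi(a_i)]$. For strong derived homomorphisms one re-runs exactly the same argument with the strong-equality versions of Lemmas~\ref{lemm7} and~\ref{lemm8}, turning every $=$ in the chain into a $\equiv$, which yields closure of strong derived homomorphisms under composition as well. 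I would finish by noting that this composition is associative (from associativity of functional composition together with associativity of substitution on the derivor components) and has $(\Id,\id_A)$ as a two-sided unit, since $\Id\,t=t$ gives $\Id A=A$ --- so that $\PAlg(\Sigma)$ with derived homomorphisms as morphisms can subsequently be organised into a category.
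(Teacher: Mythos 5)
Your proposal is correct and follows essentially the same route as the paper: the composite is taken to be $(d'd,\phi'\circ\phi)$, and the verification that $\phi'\circ\phi$ is a homomorphism $A\to (d'd)C$ is exactly the chain of equalities in the Lemma immediately preceding the Corollary, using the derived-homomorphism property of $(d,\phi)$, Lemma~\ref{lemm7}, Lemma~\ref{lemm8}, and associativity of substitution. Your explicit tracking of the definedness direction and of the strong case is a useful elaboration of what the paper leaves implicit, but it is not a different argument.
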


\begin{remark}
    Derivors are too general to match with simulations as they stand. We need the following restrictions on derivors $d$:
\[ \begin{array}{ll}
(D1) & d \epsilon = \epsilon \\
(D2) & d \sigma \in W_\Sigma(X) - W_\Sigma \\
\end{array}
\]
and on  derived homomorphisms $(d,\phi) : A \to B$:
 \[ \begin{array}{lll}
(D3) & d \sigma \in W_{\alpha G(B)}(X) & \sigma \in \alpha G(A)\\
          & d \sigma = x_0 & \sigma \notin \alpha G(A).
    \end{array}
    \]
    
\noindent We shall call derivors and derived homomorphisms satisfying the above restrictions \emph{canonical}. For an explanation, see Section~5.
\end{remark}

\begin{definition}
For $A \in \PAlg(\Sigma)$, $\Id_A$ is the canonical derivor defined by:
\[ 
\Id_A \, \sigma =  \left\{ \begin{array}{ll}
   (x_0)\sigma & \sigma \in \alpha G(A)\\
     x_0 &             \sigma \notin \alpha G(A).
      \end{array} \right.
\]
\end{definition}

\begin{fact}
    $\Id_A$ is the identity for composable canonical derivors.
\end{fact}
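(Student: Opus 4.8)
The plan is to split off the homomorphism component and reduce the claim to two equalities of derivors. Unwinding what ``$\Id_A$ is the identity for composable canonical derivors'' asserts: for every canonical derived homomorphism $(d,\phi) : A \to B$ one must have
\[ d \cdot \Id_A \;=\; d \;=\; \Id_B \cdot d \]
as derivors (composition being $d'd : \sigma \mapsto d'(d\sigma)$, as in the composition calculation for derived homomorphisms above), and hence $(d,\phi) \circ (\Id_A,\id_A) = (d,\phi) = (\Id_B,\id_B) \circ (d,\phi)$, since $\id$ is already a two-sided identity for the elementwise composition of homomorphisms. So the whole of the content sits in the two displayed derivor identities.

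The one lemma I would isolate is the canonical analogue of the earlier Fact $\Id\,t = t$: namely $\Id_A\,t = t$ for every $t \in W_{\alpha G(A)}(X)$, i.e.\ every term built from variables, $\epsilon$, and operator symbols of $\alpha G(A)$. This is immediate once one notes that $\Id_A$ differs from the ordinary identity derivor $\Id$ only at symbols $\sigma \notin \alpha G(A)$, so that $\Id_A$ and $\Id$ act identically on such terms; alternatively, a one-line structural induction using $\Id_A((t_1)\sigma) = ((x_0)\sigma)[\,\Id_A(t_1)\,] = (t_1)\sigma$ for $\sigma \in \alpha G(A)$ does it directly.

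For $d \cdot \Id_A = d$ I would evaluate on a generator $\sigma$: if $\sigma \in \alpha G(A)$ then $(d \cdot \Id_A)\sigma = d((x_0)\sigma) = d_1(\sigma)[d(x_0)] = d_1(\sigma)[x_0] = d\sigma$ (using $d(x_0)=x_0$ and that substituting $x_0$ for $x_0$ is the identity on terms); if $\sigma \notin \alpha G(A)$ then $(d \cdot \Id_A)\sigma = d(x_0) = x_0 = d\sigma$ by canonicity (D3); on $\epsilon$ both sides give $\epsilon$ by (D1). Dually, for $\Id_B \cdot d = d$: on a generator $\sigma$ with $\sigma \in \alpha G(A)$, canonicity of $(d,\phi)$ gives $d\sigma \in W_{\alpha G(B)}(X)$, so the lemma applied to $B$ yields $\Id_B(d\sigma) = d\sigma$; for $\sigma \notin \alpha G(A)$ we have $d\sigma = x_0$ and $\Id_B(x_0) = x_0$; and $\epsilon$ is again handled by (D1).

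The main obstacle is not these routine computations but the bookkeeping of well-definedness: one should check that $(\Id_A,\id_A)$ is genuinely a \emph{canonical} derived homomorphism $A \to A$, so that it is a legitimate morphism to compose with. That $\Id_A$ satisfies (D1)--(D3) relative to $A$ is read off its definition; that $\id_A : A \to \Id_A A$ is a homomorphism uses $\sigma_{\Id_A A} = \sigma_A$ for $\sigma \in \alpha G(A)$ and $\sigma_{\Id_A A} = \id_{|A|}$ for $\sigma \notin \alpha G(A)$ --- and here one must keep straight that $\alpha G$ is defined via $\Ac(A)$ rather than $A$, so that $\sigma \notin \alpha G(A)$ only tells us $\sigma_A = \id_{|\Ac(A)|}$. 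This is precisely the subtlety already flagged in Lemma~\ref{lemm3}(ii), and it is the only place where the definitions need care.
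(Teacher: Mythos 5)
The paper states this as a Fact with no accompanying proof, and your verification supplies exactly the routine check being left implicit: the two derivor identities $d \cdot \Id_A = d$ and $\Id_B \cdot d = d$, established on generators via (D1)--(D3) together with the observation that $\Id_A$ agrees with the ordinary identity derivor on terms in $W_{\alpha G(A)}(X)$ (so $\Id_B(d\sigma) = d\sigma$ by (D3)), are the whole content, and your computations of them are correct. The well-definedness worry you flag at the end in fact resolves itself: $\sigma \notin \alpha G(A)$ means $\sigma_A$ is \emph{literally} the partial function $\id_{|\Ac(A)|}$ (undefined outside $\Ac(A)$), so the homomorphism equation for $\id_{|A|} : A \to \Id_A A$ is either vacuous or trivially satisfied there.
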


From the above results, it follows that we can make the following definitions.

\begin{definition}
The category $\CPAlg$ is given by

\begin{center}
\begin{tabular}{ll}
    Objects:      &  $\PAlg(\Sigma)$\\
    Morphisms:   &   Canonical derived homomorphisms\\
    Identity:    &   $\mathbf{1}_A = (\Id_A, \id_{|A|})$,       $A \in \PAlg(\Sigma)$\\
    Composition:   &  elementwise composition inherited from \textbf{Set}.
    \end{tabular}
    \end{center}

\noindent Moreover, we can form a sub-category by restricting to strong derived homomorphisms.
\end{definition}

\begin{definition}
The category $\CSDPA$ is defined as follows:
    
    \begin{center}
    \begin{tabular}{ll}
    Objects:     &   $\SDPA(\Sigma)$\\
    Morphisms:   &   $\hom((A,\myvec{\Theta}),(B,\myvec{\Theta}')) = \hom(A,B)$ in $\CPAlg$\\
    Identity, Composition: & as in $\CPAlg$.
    \end{tabular}
    \end{center}
\end{definition}

\begin{definition}
$\UU : \CSDPA \to \CPAlg$  is a forgetful functor:
\[ \begin{array}{lcl}
\UU(A,\myvec{\Theta}) & = &  A \\
\UU(d,\phi) & = & (d,\phi) .
\end{array}
\]
\end{definition}

We now show how to extend $F$, $F'$, and $G$ to act on morphisms.

\begin{definition}
    $\BB : \CPCL \to \CPAlg$.
    
    \noindent On objects, $\BB$ coincides with $F$. On morphisms $f: L \to L'$,  $\mathbf{\BB}(f) = (d,\phi)$, where:
        \[ \begin{array}{lcll}
        d(\sigma) & = & (\cdots (x_0)\sigma_1\cdots )\sigma_n & \sigma \in \aL, f(\sigma) = \sigma_1 \cdots \sigma_n \\
        d(\sigma) & = & x_0 & \sigma \notin \aL \\
        d(\epsilon) & = & \epsilon. &
        \end{array}
        \]
 Since $F(L)$ is finitely generated, $\phi$ is uniquely determined by $d$.
   \end{definition} 
 
 \begin{definition}
 $\Bi : \CPCVL \to \CSDPA$.
 
\noindent Again, on objects $\Bi$ coincides with $F'$; and $\Bi$ is defined on morphisms exactly like $\BB$.
\end{definition}

\begin{definition}
$\GG : \CPAlg \to \CPCL$

\noindent On objects, $\GG$ coincides with $G$.\\
 Morphisms: for $(d,\phi) : A \to B$,
  \[ \GG (d,\phi) = f \]
 where, for $\sigma \in \alpha G(A)$,  if $d(\sigma) = (\ldots(x)\sigma_1\ldots)\sigma_n$,
 then $f(\sigma) = \sigma_1 \cdots \sigma_n$,
 and $f$ is uniquely determined by its action on $\alpha G(A)$, as in the definition of simulation.

Note that $f$ depends only on $d$, and not at all on $\phi$. This fits with the way that $G$ on objects depends only on the finitely generated portion of an algebra, since $\phi$ is uniquely determined by $d$ over the finitely generated portion.
\end{definition}

\begin{theorem}
\begin{enumerate}[(i)]
\item $\BB$ is a functor, and takes strong simulations to strong derived homomorphisms.
    
\item $\GG$ is a functor, and takes strong derived homomorphisms to strong simulations.

\item $\GG \BB = \mathbf{Id}_{\CPCL}$, the identity functor.

\item For $(d, \phi) : A \to B$, $A, B \in \CPAlg$, the following square commutes:

\begin{diagram}
\BB\GG(A) & \rTo^{\BB\GG(d, \phi)} & \BB\GG(B)) \\
\dTo^{(\Id_{A}, g_{A})} & & \dTo_{(\Id_{A}, g_{A})} \\
A & \rTo_{(d, \phi)} & B
\end{diagram}
This says that the family $\{ (\Id_{A}, g_{A}) \}$ form a natural transformation from $\BB\GG$ to $\mathbf{I}_{\CPAlg}$. Here $g_A$ is the evaluation functional $t \mapsto t^A$ introduced in the proof of Theorem~\ref{corrth}.
\end{enumerate}
\end{theorem}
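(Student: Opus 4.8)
The plan is to verify each of the four clauses in turn, reusing the machinery already built up for the object-level functions $F$ and $G$ in Theorems~\ref{charth} and~\ref{corrth}. For clause (i), functoriality of $\BB$, I would check that $\BB$ preserves identities --- $\BB(\id_L) = (\Id_A, \id_{|A|})$ follows directly from the definition of $d$ in the $\BB(f)$ clause applied to $f = \id_L$ together with the definition of $\Id_A$ --- and that it preserves composition. The latter unwinds to showing that if $f : L \to L'$ and $f' : L' \to L''$ with $\BB(f) = (d,\phi)$, $\BB(f') = (d',\phi')$, then the derivor for $f' \circ f$ is $d' d$; this is a computation on the generators $\sigma \in \aL$ using that substitution composes derivors (the displayed calculation just before the ``Derived homomorphisms are closed under composition'' corollary), and the $\phi$ component is then forced since $F(L)$ is finitely generated. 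That a strong simulation $f$ yields a strong $(d,\phi)$ requires noting that the strong simulation property $f(s) \in L' \Rightarrow s \in L$ translates, via Lemma~\ref{lemm1} and the definition of $\sigma_{F(L)}$, into the reverse definedness implication for $\phi$.

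For clause (ii), functoriality of $\GG$, the argument is dual and mostly symmetric: $\GG$ preserves identities because $\GG(\Id_A, \id) = f$ with $f(\sigma) = \sigma$ for $\sigma \in \alpha G(A)$, i.e.\ $f = \id_{G(A)}$; and it preserves composition because reading off $\sigma_1\cdots\sigma_n$ from $d(\sigma)$ is compatible with composing derivors by substitution. The strong-to-strong claim uses that if $\phi$ is strong then definedness transfers both ways through $g$-style evaluation, which via Theorem~\ref{corrth}(ii) and Lemma~\ref{lemm1} gives the strong simulation property for $\GG(d,\phi)$. Clause (iii), $\GG\BB = \mathbf{Id}_{\CPCL}$, is then almost immediate: on objects it is Theorem~\ref{corrth}(i), $G(F(L)) = L$; on morphisms, starting from $f : L \to L'$ with $f(\sigma) = \sigma_1\cdots\sigma_n$, one has $\BB(f) = (d,\phi)$ with $d(\sigma) = (\cdots(x_0)\sigma_1\cdots)\sigma_n$, and $\GG$ reads exactly $\sigma_1\cdots\sigma_n$ back off $d(\sigma)$, recovering $f$ on the generators $\sigma \in \aL = \alpha G(F(L))$, hence $f$ everywhere by uniqueness of the homomorphic extension.

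Clause (iv), naturality of $\{(\Id_A, g_A)\}$, is the step I expect to be the main obstacle, since it is the only one that genuinely mixes the $d$-component and the $\phi$-component rather than letting $\phi$ be passively determined. The square asks that $(d,\phi) \circ (\Id_A, g_A) = (\Id_A, g_B) \circ \BB\GG(d,\phi)$ as derived homomorphisms $\BB\GG(A) \to B$. I would prove this by checking the two components separately: the derivor components agree because composing with $\Id_A$ or $\Id_B$ leaves the derivor essentially unchanged (modulo the $D3$ bookkeeping of collapsing $\sigma \notin \alpha G(A)$ to $x_0$), and $\BB\GG(d,\phi)$ has derivor reconstructed from the $\sigma_1\cdots\sigma_n$ extracted from $d(\sigma)$, which is $d$ again under $D1$--$D3$. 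For the $\phi$-components, since $\BB\GG(A) = F(G(A))$ is finitely generated and does not identify terms, it suffices to check the equation on each generator $t^{F(G(A))}$; there $g_A$ sends it to $t^A$ (the evaluation functional), and one chases $t$ around both paths using Lemma~\ref{lemm7} (homomorphisms commute with term evaluation) and Lemma~\ref{lemm8} (derivors commute with term evaluation) to see both composites send $t$ to $(dt)^B[\ldots]$ evaluated at the appropriate images --- i.e.\ to $\phi(t^A)$ on the nose. The care needed is precisely in matching the two descriptions of the derivor attached to $\BB\GG(d,\phi)$ with $d$ itself, which is where the canonicity restrictions $D1$--$D3$ earn their keep; once that identification is made, the rest is a routine term-evaluation chase.
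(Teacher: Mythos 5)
The paper itself states this theorem without proof, so there is no official argument to compare against; judged on its own terms, your proposal is correct and assembles exactly the machinery the paper has set up for this purpose: Theorem~\ref{corrth}(i) and finite generation of $F(L)$ for clauses (i)--(iii), and Lemmas~\ref{lemm7} and~\ref{lemm8} together with the canonicity conditions (D1)--(D3) for the naturality square in clause (iv), where you rightly identify that the only real content is matching the derivor of $\BB\GG(d,\phi)$ with $d$ itself and then chasing $t \mapsto t^A \mapsto \phi(t^A)$ against $t \mapsto (dt)^B \equiv t^{dB} = \phi(t^A)$. One small point worth flagging: the right-hand vertical arrow of the square as printed should read $(\Id_B, g_B)$ rather than $(\Id_A, g_A)$ (and your own $(\Id_A, g_B)$ has the same slip in the derivor subscript); your argument is implicitly proving the corrected statement, which is the intended one.
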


\begin{theorem}
\begin{enumerate}[(i)]
\item $\Bi$ is a functor, and takes strong simulations to strong derived homomorphisms.
    
\item For $L = \VFS(L_1, \ldots ,L_n)$ in $\PCVL(\Sigma)$, there is a strong surjective simulation
\[ h_L : L_1 \, || \cdots || \, L_n \to L \]
defined by
\[ h_L(\sigma) = \myvec{\sigma}, \qquad \sigma \in \bigcup_i \aL_i . \]
    
\item For $f : L \to L'$ in $\CPCVL$, the following square commutes:
\begin{diagram}
\GG\UU\Bi(L) & \rTo^{\GG\UU\Bi(f)} & \GG\UU\Bi(L') \\
\dTo^{h_L} & & \dTo_{h_{L'}} \\
L & \rTo_{f} & L'
\end{diagram}
\end{enumerate}
\end{theorem}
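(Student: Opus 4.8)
The plan is to lean on the preceding theorem: parts (i) and (iii) are the $\Bi$-counterparts of its $\BB$/$\GG$ statements, and I intend to reuse those arguments after two translations --- Lemma~\ref{lemm6}, which reads products in $\Vops(\uL)^*$ back as strings over $\Sigma$, and Corollary~\ref{corrcorr}, which identifies $\GG\UU\Bi(L) = G(F(\VFS(L_1,\ldots,L_n)))$ with $L_1\,||\cdots||\,L_n$. The only genuinely new item is the simulation $h_L$ of part (ii). Throughout I write $\bigcup_i\alpha L_i$ for the common alphabet; the computation in the proof of Corollary~\ref{corrcorr} shows this is $\alpha G(F(L))$, i.e.\ $\alpha(L_1\,||\cdots||\,L_n)$.

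\textbf{Part (i).} Well-definedness of $\Bi$ on objects is the corollary to Lemma~\ref{lemm4}. For a simulation $f : L \to L'$ in $\CPCVL$ I must check $\Bi(f) = (d,\phi)$ is a canonical derived homomorphism; since morphisms in $\CSDPA$ see only the underlying algebras, this reduces to $(d,\phi) : F(L) \to F(L')$ in $\CPAlg$, exactly as for $\BB$. In order: (a) for $\sigma \in \bigcup_i\alpha L_i$ write $f(\myvec{\sigma}) \in \Vops(L')^*$ as a product of generators $\myvec{\tau}_1\cdots\myvec{\tau}_m$, set $d\sigma := (\cdots(x_0)\tau_1\cdots)\tau_m$, and put $d\sigma := x_0$ otherwise, $d\epsilon := \epsilon$; one checks (D1)--(D3) directly, the only point being $\tau_j \in \bigcup_i\alpha L'_i = \alpha G(F(L'))$. (b) The homomorphism part is forced by finite generation (and Lemma~\ref{lemm8}) to be $\phi(\us) = f(\us)$, a monoid homomorphism, while $\sigma_{dF(L')}$ works out to right-multiplication by the element $f(\myvec{\sigma})$ of $\Vops(L')^*$; hence both $dF(L')$ and $\phi$ depend only on that \emph{element}, not on the product representation used to write $d\sigma$ (the syntactic derivor does; for composing morphisms in $\CSDPA$ one reads derivors modulo the commutation relations of $\Vops(\uL)^*$, which is consistent by the normal-form lemma for $\Vops(\uL)^*$ together with the fact that the monoid homomorphism $f$ respects commutations). (c) The homomorphism/definedness condition for $\phi$ is exactly the simulation property: running the chain of equivalences from the proof of Theorem~\ref{veccorrth}(i), $\Def t^{F(L)} \IFF \us \in L$ for the formal vector operation $\us$ of $t$, while $dt$ carries the vector operation $f(\us)$, so $\Def (dt)^{F(L')} \IFF f(\us)\in L'$; thus $\Def t^{F(L)} \Imp \Def (dt)^{F(L')}$ is the simulation property, and an equivalence --- giving a strong homomorphism --- exactly when $f$ is strong. (d) Functoriality ($\Bi(\id_L)$ is the $\Id$-morphism, composition is preserved) transcribes the $\BB$ argument via associativity of substitution, with (b) again absorbing the choices of representative.

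\textbf{Part (ii).} View $L_1\,||\cdots||\,L_n$ as a one-dimensional object of $\CPCVL$ through the full embedding $\CPCL \hookrightarrow \CPCVL$; its vector monoid is then \emph{free} on $\bigcup_i\alpha L_i$, so concurrency-preservation of $h_L$ is vacuous and $\sigma \mapsto \myvec{\sigma}$ extends uniquely and multiplicatively. For $s = \sigma_1\cdots\sigma_m$, Lemma~\ref{lemm6} gives $[h_L(s)]_i = [\myvec{\sigma}_1\cdots\myvec{\sigma}_m]_i = s/\alpha L_i$, so
\[ h_L(s) \in L = \Vops(\alpha L_1,\ldots,\alpha L_n)^*\cap(L_1\times\cdots\times L_n) \IFF s/\alpha L_i \in L_i\ (i=1,\ldots,n) \IFF s \in L_1\,||\cdots||\,L_n . \]
The two directions of this single equivalence are the simulation property and the strong simulation property; surjectivity is immediate, since any $\us \in L$ already lies in $\Vops(\alpha L_1,\ldots,\alpha L_n)^*$, hence equals $\myvec{\sigma}_1\cdots\myvec{\sigma}_m = h_L(\sigma_1\cdots\sigma_m)$ for suitable $\sigma_j$, with $\sigma_1\cdots\sigma_m \in L_1\,||\cdots||\,L_n$ by the equivalence.

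\textbf{Part (iii).} By Corollary~\ref{corrcorr}, $\GG\UU\Bi(L)$ is $L_1\,||\cdots||\,L_n$ and $\GG\UU\Bi(L')$ is $L'_1\,||\cdots||\,L'_n$, so the square consists of simulations out of the one-dimensional source $L_1\,||\cdots||\,L_n$; each composite is then determined by its values on the generators $\sigma \in \bigcup_i\alpha L_i$, and I only check agreement there. On the left, $f(h_L(\sigma)) = f(\myvec{\sigma})$. On the right, the definition of $\GG$ on morphisms gives $\GG\UU\Bi(f)(\sigma) = \tau_1\cdots\tau_m$ (the word underlying $d\sigma$, chosen in (a) for $f(\myvec{\sigma})$), and $h_{L'}(\tau_1\cdots\tau_m) = \myvec{\tau}_1\cdots\myvec{\tau}_m = f(\myvec{\sigma})$ --- so the square commutes, the common value being choice-independent as it must. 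The step I expect to be the real obstacle is (b): certifying that a simulation of vector languages may legitimately be read as a derivor even though $\Vops(\uL)^*$ is not free. Everything else is either bookkeeping already carried out for $\BB$ and $\GG$, or a direct transcription through Lemmas~\ref{lemm1},~\ref{lemm2}, and~\ref{lemm6}.
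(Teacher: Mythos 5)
Your argument is sound; note that the paper itself states this theorem without proof, so there is no official argument to compare against, and your reconstruction is the natural one: parts (ii) and (iii) are exactly the transcription through Lemma~\ref{lemm6} and Corollary~\ref{corrcorr} that the Remark following Lemma~\ref{lemm6} is gesturing at, and your verification of $h_L$ (one equivalence whose two directions are the weak and strong simulation properties, with surjectivity falling out of the same computation) is correct. The one place where you go beyond what the paper's definitions strictly license is your step (b) in part (i), and you are right to single it out: since $\Vops(\uL')^*$ is not free, $f(\myvec{\sigma})$ has no canonical factorisation into generators, so the syntactic derivor $d$ in $\Bi(f)=(d,\phi)$ is only determined up to the commutation relations, and functoriality on the nose fails for morphisms taken as literal pairs $(d,\phi)$. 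Your resolution --- observe that $\sigma_{dF(L')}$ is right-multiplication by the monoid \emph{element} $f(\myvec{\sigma})$ (with definedness controlled by prefix-closure of $L'$, so that it too depends only on the element), invoke the normal-form lemma to see that any two factorisations differ by commutations, and read derivors modulo these relations --- is the correct repair, and it is a genuine gap in the paper's formal setup rather than in your proof. The only cosmetic caveat is that this identification of derivors should be stated once as a convention on the hom-sets of $\CPAlg$/$\CSDPA$ (or replaced by a fixed choice of normal form), since the paper's later appeal to ``injectivity of $G$ on derivors'' in the adjointness proof tacitly assumes something of the sort.
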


\begin{theorem}
    $\BB$ is left adjoint to $\GG$.
\end{theorem}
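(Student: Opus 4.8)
The plan is to establish the adjunction in its unit--counit form, taking as the two structure maps exactly the data already produced. For the unit I take the identity natural transformation $\eta = \id : \mathbf{Id}_{\CPCL} \to \GG\BB$, which is available precisely because $\GG\BB = \mathbf{Id}_{\CPCL}$ holds on the nose. For the counit I take $\varepsilon : \BB\GG \to \mathbf{I}_{\CPAlg}$ with components $\varepsilon_A = (\Id_A, g_A)$, already shown to be natural, where $g_A : F(G(A)) \to A$ is the unique strong homomorphism $t \mapsto t^A$ of Theorem~\ref{corrth}(ii). All that then remains is to verify the two triangle identities
\[ \varepsilon_{\BB L} \circ \BB(\eta_L) \;=\; \mathbf{1}_{\BB L}, \qquad \GG(\varepsilon_A) \circ \eta_{\GG A} \;=\; \mathbf{1}_{\GG A}. \]

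First I would dispatch the left-hand triangle. Since $\eta_L = \mathbf{1}_L$ and $\BB$ is a functor, $\BB(\eta_L) = \mathbf{1}_{\BB L}$, so the identity reduces to $\varepsilon_{\BB L} = \mathbf{1}_{F(L)}$. The derivor components agree automatically --- both are $\Id_{F(L)}$ --- so it suffices to check $g_{F(L)} = \id_{|F(L)|}$. By Theorem~\ref{corrth}(i), $G(F(L)) = L$, so $F(G(F(L)))$ has carrier $L$; and for $s \in L \subseteq \aL^*$ we have $s/\aL = s$, whence Lemma~\ref{lemm1} gives $g_{F(L)}(s) = s^{F(L)} = s/\aL = s$. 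For the right-hand triangle, $\eta_{\GG A} = \mathbf{1}_{\GG A}$ reduces the claim to $\GG(\varepsilon_A) = \mathbf{1}_{\GG A}$; but $\GG$ on morphisms depends only on the derivor component, and the derivor component of $\varepsilon_A$ is $\Id_A$, with $\Id_A\,\sigma = (x_0)\sigma$ for $\sigma \in \alpha G(A)$ and $x_0$ otherwise. Reading off the simulation from this derivor exactly as in the definition of $\GG$ on morphisms yields the identity map on $\alpha G(A)$, hence the identity simulation $\mathbf{1}_{\GG A}$.

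With both triangle identities in hand, $(\eta, \varepsilon)$ exhibits $\BB \dashv \GG$. Equivalently, for every $L \in \PCL(\Sigma)$ and $A \in \PAlg(\Sigma)$ the assignment $\theta \mapsto \GG(\theta)$ (using $\GG\BB = \mathbf{Id}$, so that $\GG(\theta) : L \to \GG A$) is a bijection $\hom_{\CPAlg}(\BB L, A) \to \hom_{\CPCL}(L, \GG A)$ with inverse $f \mapsto \varepsilon_A \circ \BB(f)$, and naturality in $L$ and $A$ follows from naturality of $\varepsilon$ together with functoriality of $\BB$ and $\GG$.

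The main obstacle is entirely bookkeeping: one must keep the various identifications straight --- $\Sigma^* \leftrightarrow W_\Sigma$, the action of $F$, $G$, $\BB$, $\GG$ on objects and morphisms, and especially the slightly fiddly canonical identity derivor $\Id_A$ --- in order to see that the two morphisms appearing in the triangle identities are literally the categorical identities $\mathbf{1}_{\BB L}$ and $\mathbf{1}_{\GG A}$, not merely isomorphic to them. In particular the one genuine point to nail down is that $g_{F(L)}$ coincides with $\id_L$ as a function on the carrier, which is where Lemma~\ref{lemm1} and the fact that $L \subseteq \aL^*$ do the work. Everything past that is formal.
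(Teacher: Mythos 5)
Your proof is correct, but it packages the adjunction differently from the paper. The paper works with the universal-arrow formulation: it exhibits, for each $f : L \to \GG(A)$, the factorisation $(d,\phi) = (\Id_A, g_A) \circ \BB(f)$ through the unit $\mathbf{1}_L : L \to \GG\BB(L)$, and then argues the uniqueness of the factorising morphism directly --- uniqueness of $\phi$ from the finite generation of $F(L)$, and uniqueness of $d$ from the injectivity of $\GG$ on derivors over a given hom-set (which is exactly what the canonicity restriction (III) of Section~5 is there to secure). You instead take the unit--counit presentation, using $\GG\BB = \mathbf{Id}_{\CPCL}$ for the unit and the already-established natural transformation $\{(\Id_A, g_A)\}$ for the counit, and verify the two triangle identities; these reduce to the computations $g_{F(L)} = \id_{|F(L)|}$ (via Theorem~\ref{corrth}(i) and Lemma~\ref{lemm1}, using $s/\aL = s$ for $s \in L$) and $\GG(\Id_A, g_A) = \mathbf{1}_{G(A)}$, both of which you carry out correctly. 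The trade-off is instructive: your route makes the uniqueness clause automatic --- injectivity of $\GG$ on each hom-set falls out of the identity $\theta = \varepsilon_A \circ \BB\GG(\theta)$ rather than being invoked as a separate fact (though that fact is still silently baked into the naturality of $\varepsilon$, which is where canonicity of derivors earns its keep) --- at the cost of checking the triangle identities explicitly, which the paper leaves implicit; conversely, the paper's proof is shorter on computation but must argue uniqueness head-on. The one small point worth adding to your write-up is that $\alpha G(F(G(A))) = \alpha G(A)$, so that the derivor $\Id_A$, defined relative to $\alpha G(A)$, is indeed the canonical identity derivor for the source $F(G(A))$ and is read off by $\GG$ over the correct alphabet; this is immediate from Theorem~\ref{corrth}(i).
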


\begin{proof}
    The proof follows directly from the following diagrams:
 \[
 \begin{diagram}
 L & \rTo^{\mathbf{1}_L} & \GG\BB(L) \\
 & \rdTo_{f} & \ddash_{\GG(d, \phi)} \\
 & & \GG(A)
 \end{diagram}
 \qquad\qquad\qquad
 \begin{diagram}
 \BB(L) & \rTo^{F(f)} & \BB\GG(A) \\
 \ddash^{(d, \phi)} & \ldTo_{(\Id_{A}, g_{A})} & \\
 A & &
 \end{diagram}
 \]
    
\noindent The uniqueness of $\phi$ given $d$ follows from the fact that $F(L)$ is finitely generated; while the uniqueness of $d$ follows from the injectivity of $G$ on derivors (over a given Hom-set).
\end{proof}

\section{Final Remarks}

Firstly, we summarise some restrictions that have been imposed in the course of the development:

\begin{enumerate}[(i)]
    \item No empty languages or algebras.
    
    \item Constants must denote in $\PAlg(\Sigma)$/
    
    \item Derivors in $\CPAlg$ must:
        \begin{enumerate}[(I)]
            \item take $\epsilon$ to $\epsilon$
            \item not take a unary to a constant term
            \item be canonical derived homomorphisms.
        \end{enumerate}
\end{enumerate}

None of these restrictions seem particularly fundamental. From the order-theoretic point of view, the minimal element for $\PCL(\Sigma)$ we want is certainly $\{\epsilon\}$ rather than $\empset$ (assuming the subset ordering) --- otherwise recursive definitions involving concatenation would not get off the ground. Intuitively, we want the null \emph{behaviour}, i.e.~$\epsilon$,
rather than the null \emph{set} of behaviours. Disallowing empty algebras is also fairly standard. Of course, once we do so, (ii) is forced if $\Ac$ is to be total; while the first half of (i) forces (ii) if $G$ is to be total. As for (iii), parts (I) and (II) seem merely to reflect the greater generality of derived homomorphisms on unary algebras as compared to monoid homomorphisms; in particular, the possibility excluded by (II) would give language morphisms a sort of context-sensitive erasing capability. FInally, part (III) is needed to ensure that $G$ is injective on derivors over a given hom-set. This is used only for proving adjointness. What we are doing is to choose canonical names for extensionally equivalent objects (modulo $G$).

One particular issue of generalisation is exposed rather clearly by the algebraic framework. All the algebraic concepts used are valid for arbitrary signatures, and do not depend on the unary nature of the signature inherited from the language domain. The question arises, what computational significance can be read back into such a generalisation?

Finally, we make a rather general statement about our approach. Pnueli \cite{P77} has made a useful distinction between \emph{exogenous} and \emph{endogenous} logics of programs. The former, typified by Hoare-style proof systems for programming languages, are uniform proof theories valid for all programs in some language. The latter are logics tailored to the particular ``world'' of computations arising from a single program. We wish to make a similar distinction between 
exogenous and endogenous algebraic semantics. The former gives a uniform semantics for all the programs in a language, and develops an algebra \emph{of} the programs, in which the operators are the program-forming constructs. This is initial algebra semantics \cite{ADJ77}, exemplified for concurrent processes by Milner's CCS \cite{MM79,M80}. Endogenous algebraic semantics makes an algebra of each individual program or system. The operations that build more complex systems are operations on algebras, rather than on elements of an algebra. This is the style of algebraic semantics employed in this paper. Apart from its appearance in \cite{L70}, this style also has a great deal in common with the algebraic approach to abstract data types \cite{ADJ78, BG77}. Moreover, although exogenous semantics has the advantages of power and universality noted by Pneuli, we feel that endogenous semantics has significant advantages of its own, so that it is a style worth pursuing in tandem with the exogenous approach, rather than only by default. Specifically, these advantages include:

\begin{enumerate}[(i)]
    \item the opportunity to use ``problem-oriented'' signatures
    \item the possibility of applying the methodology of progressive transformations from high-level specification to implementation
    \item the ability to do most of the verification \emph{before} making all the decisions about representation.
\end{enumerate}

\end{document}